\numberwithin{equation}{section}
\title[Existence of Minimizers in the Infinite-Dimensional Setting]{Causal Variational Principles in the Infinite-Dimensional Setting: Existence of Minimizers}
\author[C.\ Langer]{Christoph Langer \\ \\ January 2021}
\address{Fakult\"at f\"ur Mathematik \\ Universit\"at Regensburg \\ D-93040 Regensburg \\ Germany}
\email{
	christoph.langer@ur.de}
\newtheorem{Def}{Definition}[section]
\newtheorem{Thm}[Def]{Theorem}
\newtheorem{Prp}[Def]{Proposition}
\newtheorem{Lemma}[Def]{Lemma}
\newtheorem{Remark}[Def]{Remark}
\newtheorem{Corollary}[Def]{Corollary}
\newcommand{\Thanks}{\vspace*{.5em} \noindent \thanks}
\newcommand{\beq}{\begin{equation}}
\newcommand{\eeq}{\end{equation}}
\newcommand{\Proof}{\begin{proof}}
	\newcommand{\QED}{\end{proof} \noindent}
\newcommand{\C}{\mathbb{C}}
\newcommand{\R}{\mathbb{R}}
\newcommand{\Id}{\mbox{\rm 1 \hspace{-1.05 em} 1}}
\newcommand{\N}{\mathbb{N}}
\DeclareMathOperator{\tr}{tr}
\renewcommand{\L}{{\mathcal{L}}}
\newcommand{\LL}{{\text{\rm{L}}}}
\newcommand{\Sact}{{\mathcal{S}}}
\newcommand\B{{\mathscr{B}}}
\renewcommand{\H}{\mathscr{H}}
\DeclareMathOperator{\dist}{dist}
\newcommand{\F}{{\mathscr{F}}}
\newcommand{\K}{{\mathscr{K}}}
\newcommand{\D}{\mathscr{D}}
\DeclareMathOperator{\supp}{supp}
\newcommand{\bitem}{\begin{itemize}[leftmargin=2.5em]}
\newcommand{\eitem}{\end{itemize}}
\DeclareFontFamily{OT1}{rsfso}{}
\DeclareFontShape{OT1}{rsfso}{m}{n}{ <-7> rsfso5 <7-10> rsfso7 <10-> rsfso10}{}
\DeclareMathAlphabet{\mycal}{OT1}{rsfso}{m}{n}
\begin{document}
	
\thispagestyle{empty} 

\begin{abstract}
	We provide a method for constructing (possibly non-trivial) measures on non-locally compact Polish subspaces of infinite-dimensional separable Banach spaces which, under suitable assumptions, are minimizers of causal variational principles in the non-locally compact setting. Moreover, for non-trivial minimizers the corresponding Euler-Lagrange equations are derived. The method is to exhaust the underlying Banach space by finite-dimensional subspaces and to prove existence of minimizers of the causal variational principle restricted to these finite-dimensional subsets of the Polish space under suitable assumptions on the Lagrangian. This gives rise to a corresponding sequence of minimizers. Restricting the resulting sequence to countably many compact subsets of the Polish space, by considering the resulting diagonal sequence we are able to construct a regular measure on the Borel algebra over the whole topological space. 
	For continuous Lagrangians of bounded range it can be shown that, under suitable assumptions, the obtained measure is a (possibly non-trivial) minimizer under variations of compact support. Under additional assumptions, we prove that the constructed measure is a minimizer under variations of finite volume and solves the corresponding Euler-Lagrange equations. Afterwards, we extend our results to continuous Lagrangians vanishing in entropy. Finally, assuming that the obtained measure is locally finite, topological properties of spacetime are worked out and a connection to dimension theory is established.  
\end{abstract}

\maketitle

\thispagestyle{empty} 

\tableofcontents

\thispagestyle{empty} 

\section{Introduction}\label{Subsection Overview}
In the physical theory of causal fermion systems, 
spacetime and the structures therein are described by a minimizer
(for an introduction to the physical background and the mathematical context, we refer the interested reader to~\S\ref{Subsection Motivation}, the textbook~\cite{cfs} and the survey articles~\cite{review, dice2014}). 
{\em{Causal variational principles}} evolved as a mathematical generalization
of the causal action principle~\cite{continuum, jet}, and were studied in more detail in \cite{noncompact}.
The starting point in~\cite{noncompact} is a second-countable, locally compact Hausdorff space~$\F$ together with a non-negative function~$\L : \F \times \F \rightarrow \R^+_0 := [0, \infty)$ (the {\em{Lagrangian}}) which is assumed
to be lower semi-continuous, symmetric and positive on the diagonal. 
The causal variational principle is to minimize the {\em{action}}~$\Sact$ defined as
the double integral over the Lagrangian
\[ \Sact (\rho) = \int_\F d\rho(x) \int_\F d\rho(y)\: \L(x,y) \]
under variations of the measure~$\rho$ within the class of regular Borel measures on~$\F$,
keeping the (possibly infinite) total volume~$\rho(\F)$ fixed ({\em{volume constraint}}).
The aim of the present paper is to extend the existence theory for minimizers of such variational principles to the case that~$\F$ is non-locally compact
and the total volume is infinite. 
We also work out the corresponding Euler-Lagrange (EL) equations.

In order to put the paper into the mathematical context,
in~\cite{pfp} it was proposed to
formulate physics by minimizing a new type of variational principle in spacetime.
The suggestion in~\cite[Section~3.5]{pfp} led to the causal action principle
in discrete spacetime, which was first analyzed mathematically in~\cite{discrete}.
A more general and systematic enquiry of causal variational principles on measure spaces was carried out
in~\cite{continuum}. In this article, the existence of minimizers is proven in the case that the total volume is finite. 
In~\cite{jet}, the setting is generalized to non-compact manifolds of possibly infinite volume
and the corresponding EL equations are analyzed. However, the existence of minimizers is not proved. This is done in \cite{noncompact} in the slightly more general setting of second-countable, locally compact Hausdorff spaces. In this paper, we extend the results of \cite{noncompact} by developing the existence theory in the non-locally compact setting. 

The main difficulty in dealing with non-locally compact spaces is that it is no longer possible to restrict attention to compact neighborhoods. Moreover, it turns out that we can no longer assume that the underlying topological space $\F$ is $\sigma$-compact. As a consequence, at first sight it is not clear how to construct global measures on the whole topological space at all. The way out is to introduce a countable collection of suitable compact subsets, which indeed allows us to construct a global measure ${\rho}$ on $\F$. For simplicity, we first assume that the Lagrangian is of \emph{bounded range} (see Definition~\ref{Definition bounded range}). In this case, the minimizing property of the measure $\rho$ is proved in two steps: We first show that $\rho$ is a minimizer under \emph{variations of compact support}. In a second step, we extend this result to variations of finite volume under the assumption that property~(iv) in~\S\ref{seccvpsigma} holds, i.e.\
\[ \sup_{x \in \F} \int_{\F}\L(x,y)\: d\rho(y) < \infty \:. \] 
Afterwards, we generalize our results to Lagrangians which do not have bounded range, but instead have suitable decay properties. To this end, we 
consider Lagrangians \emph{vanishing in entropy} (see Definition \ref{Definition vanish in entropy}). Introducing \emph{spacetime} as the support of the measure $\rho$, we finally analyze topological properties of spacetime; moreover, a connection to dimension theory is established.

The paper is organized as follows. In Section~\ref{secbackground} we give a short physical motivation (\S \ref{Subsection Motivation}) and recall the main definitions and existence results as obtained in~\cite{noncompact} (\S \ref{seccvpsigma}). In Section \ref{Section non-locally compact} we begin by working out important topological properties of infinite-dimensional causal fermion systems (\S \ref{S CFS}); afterwards, we introduce causal variational principles in the non-locally compact (or infinite-dimensional) setting  
by considering  
non-locally compact Polish subspaces of (infinite-dimensional)
separable Banach spaces (\S \ref{S Basic Definitions}). Exhausting the underlying Banach space by finite-dimensional subspaces and making use of the results in~\cite{noncompact}, the existence of minimizers is proven for the causal variational principle restricted to finite-dimensional subspaces (\S \ref{S Approximating}). In Section~\ref{Section Construction Global} we provide a method for constructing a regular global measure on the Borel algebra over the whole non-locally compact Polish space. More precisely, we first introduce a countable collection of compact subsets of the underlying 
topological space (\S \ref{S Construction of Countable Set}). Next, making use of Prohorov's theorem and applying Cantor's diagonal argument, we are able to construct a (possibly non-trivial) regular measure on the whole topological space (\S \ref{S Global Measure}). Finally, we derive useful properties of the constructed measure (\S \ref{S Vague Convergence}). In Section \ref{Section minimizers bounded range} we prove that, under suitable assumptions, this measure is a minimizer for continuous Lagrangians of bounded range (see Definition~\ref{Definition bounded range}). 
More precisely, we 
first introduce an appropriate assumption on the obtained measure (see condition~(B) in~\S\ref{S Preliminaries Condition B}). Next, we prove that the obtained measure is a minimizer under variations of finite-dimensional compact support (\S \ref{S minimizer compact}) as well as a minimizer under variations of compact support (\S \ref{S compact support}). Under additional assumptions we show that the constructed measure is a minimizer under variations of finite volume (\S \ref{S finite volume}). Assuming that the measure under consideration is non-zero, we prove that the corresponding Euler-Lagrange (EL) equations are satisfied (\S \ref{S EL}). The goal in Section~\ref{Section Decay} is to weaken the assumption that the Lagrangian is of bounded range. To this end, we introduce Lagrangians \emph{vanishing in entropy} (see Definition~\ref{Definition entropy new}) which generalize the notion of Lagrangians decaying in entropy (see Definition~\ref{Definition vanishing}). The concept of Lagrangians vanishing in entropy (\S \ref{S Entropy}) can be extended to non-locally compact topological spaces (see Definition~\ref{Definition vanish in entropy}). 
For such Lagrangians, we repeat the above construction steps, thus giving rise to 
a regular measure on the underlying topological space (\S \ref{S Preparatory results}). It is shown that, under suitable assumptions, the considered measure is minimizer of the causal action under variations of compact support as well as under variations of finite volume (\S \ref{S compact support decay}). 
We finally derive the corresponding EL equations (\S \ref{S EL decay}). Introducing spacetime as the support of the minimizing measure under consideration, in Section~\ref{Section topological properties of support} we conclude the paper by analyzing topological properties of spacetime and establishing a connection to dimension theory. To this end, we first recall some concepts from dimension theory (\S \ref{S dimension support}), and afterwards apply them to causal fermion systems (\S \ref{S Application}). In the appendix we summarize useful results which will be referred to frequently: Appendix~\ref{Appendix Polish} is dedicated to the proof that causal fermion systems are Polish (see Theorem~\ref{Theorem Polish}); the main result in Appendix~\ref{Appendix Support} states that the support of locally finite measures on Polish spaces is $\sigma$-compact (see Lemma~\ref{Lemma support LCH}). 

\section{Physical Background and Mathematical Preliminaries}\label{secbackground}
\subsection{Physical Context and Motivation}\label{Subsection Motivation}
The purpose of this subsection is to outline a few concepts of causal fermion systems and
to explain how the present paper fits into the general physical context and the ongoing research program.
The reader not interested in the physical background may skip this section.

The theory of causal fermion systems is a recent approach to fundamental physics motivated originally in order 
to resolve shortcomings of relativistic quantum field theory (QFT).
Namely, due to ultraviolet divergences, perturbative quantum field theory is well-defined only after regularization, which is usually understood as a set of prescriptions for how to make divergent integrals finite (e.g.~by introducing a suitable ``cutoff'' in momentum space). The regularization is then removed using the renormalization procedure. However, this concept is not convincing from neither the physical nor the mathematical point of view. More precisely, in view of Heisenberg's uncertainty principle, physicists infer a correspondence between large momenta and small distances. Because of that, the regularization length is often associated to the Planck length~$\ell_P \approx 1.6 \cdot 10^{-35}\;\text{m}$. Accordingly, by introducing an ultraviolet cutoff in momentum space, one disregards distances which are smaller than the Planck length. As a consequence, the microscopic structure of spacetime is completely unknown. Unfortunately, at present
there is no consensus on what the correct mathematical model for ``Planck scale physics'' should be.

The simplest and maybe most natural approach is to assume that on the Planck scale, spacetime is no longer a continuum but becomes in some way ``discrete.'' This is the starting point in the monograph~\cite{pfp},
where the physical system is described by an ensemble of wave functions in a discrete spacetime.
Motivated by the Lagrangian formulation of classical field theory,
physical equations are formulated by a variational principle in discrete spacetime.
In the meantime, this setting was generalized and developed to the theory of causal fermion systems.
It is an essential feature of the approach that spacetime does not enter the variational principle a-priori,
but instead it emerges when minimizing the action. Thus causal fermion systems allow for the 
description of both discrete and continuous spacetime structures.

In order to get the connection to the present paper, let us briefly outline the main structures of causal fermion systems.
As initially introduced in~\cite{rrev}, a \emph{causal fermion system} consists of a triple~$(\H, \F, \rho)$ together with an integer $n \in \N$, where~$\H$ denotes a complex Hilbert space, $\F \subset \LL(\H)$ being the set of all self-adjoint operators on $\H$ of finite rank with at most $n$ positive and at most $n$ negative eigenvalues, and~$\rho$ being a positive measure on the Borel $\sigma$-algebra~$\B(\F)$ (referred to as \emph{universal measure}). Then for any~$x,y \in \F$, the product~$xy$ is an operator of rank at most~$2n$. Denoting its non-trivial eigenvalues (counting algebraic
multiplicities) by~$\lambda_1^{xy}, \ldots, \lambda_{2n}^{xy} \in \C$, and introducing the spectral weight~$|.|$ of an operator as the sum of the absolute values of its eigenvalues, the \emph{Lagrangian} can be introduced as a mapping
\begin{align*}
\L : \F \times \F \to \R_0^+ \:, \qquad \L(x,y) = \left|(xy)^2\right| - \frac{1}{2n} \left|xy\right|^2 \:.
\end{align*}
As being of relevance for this article, we point out that the Lagrangian is a continuous function which is symmetric in the sense that 
\[\L(x,y) = \L(y,x) \qquad \text{for all $x,y \in \F$} \:. \]
In analogy to classical field theory, one defines the \emph{causal action} by
\begin{align*}
\Sact(\rho) = \iint_{\F \times \F} \L(x,y) \: d\rho(x) \: d\rho(y) \:.
\end{align*}
Finally, the corresponding {\em{causal action principle}} is introduced by varying the measure~$\rho$ in 
the class of regular measures on~$\B(\F)$ under additional constraints
(which assert the existence of non-trivial minimizers). Given a minimizing measure~$\rho$,
{\em{spacetime}}~$M$ is defined as its support,
\[ M := \supp \rho \:. \]
As being outlined in detail in~\cite{cfs}, critical points of the causal action give rise to Euler-Lagrange (EL) equations, which describe the dynamics of the causal fermion system.
In a certain limiting case, the so-called {\em{continuum limit}}, 
one can establish a connection to the conventional formulation of physics in a spacetime continuum.
In this limiting case, the EL equations give rise to classical field equations like the Maxwell and Einstein equations. 
Moreover, quantum mechanics is obtained in a limiting case, and close connections to 
relativistic quantum field theory have been established (for details see~\cite{qft} and~\cite{fockbosonic}).

In order for the causal action principle to be mathematically sensible,
the existence theory is of crucial importance.
If the dimension of the Hilbert space~$\H$ is finite, the existence of minimizers was proven in~\cite[Section~2]{continuum}
(based on existence results in discrete spacetime~\cite{discrete}),
giving rise to minimizing measures $\rho$ on $\F$ of finite total volume~$\rho(\F)<\infty$. 
For this reason, it remains to extend these existence results by developing the existence theory in the case that~$\H$ is infinite-dimensional.
Then the total volume~$\rho(\F)$ is necessarily infinite (for a counter example see~\cite[Exercise~1.3]{cfs}). 
In the resulting {\em{infinite-dimensional setting}} (i.e.~$\dim \H = \infty$ and~$\rho(\F)=\infty$), the task
is to deal with minimizers of infinite total volume on non-locally compact spaces. In preparation, the existence theory of minimizers of possibly infinite total volume $\rho(\F)$ on locally compact spaces is developed  in~\cite{noncompact} in sufficient generality.
The remaining second step, which involves the difficulty of dealing with non-locally compact spaces, is precisely the objective of the present paper. 

\subsection{Causal Variational Principles in the $\sigma$-Locally Compact Setting}\label{seccvpsigma}
Before introducing causal variational principles on non-locally compact spaces in Section \ref{Section non-locally compact} below, we now recall the main results in the less general situation of causal variational principles in the $\sigma$-locally compact setting \cite{noncompact} which are based on results concerning causal variational principles in the non-compact setting as studied in \cite[Section 2]{jet}. 

The starting point in \cite{noncompact} is a second-countable, locally compact topological Hausdorff space~$\F$. 
We let $\rho$ be a (positive) measure on the Borel algebra over~$\F$ (referred to as \emph{universal measure}). Moreover, let~$\L : \F \times \F \rightarrow \R^+_0$ be a non-negative function (the {\em{Lagrangian}}) with the
following properties:
\begin{itemize}[leftmargin=2em]
	\item[(i)] $\L$ is symmetric, i.e.\ $\L(x,y) = \L(y,x)$ for all~$x,y \in \F$. \label{Cond1}%
	\item[(ii)] $\L$ is lower semi-continuous, i.e.\ for all sequences~$x_n \rightarrow x$ and~$y_{n'} \rightarrow y$,
	\label{Cond2}%
	\[ \L(x,y) \leq \liminf_{n,n' \rightarrow \infty} \L(x_n, y_{n'})\:. \]
\end{itemize}
The {\em{causal variational principle}} is to minimize the action
\beq \label{Sact} 
\Sact (\rho) = \int_\F d\rho(x) \int_\F d\rho(y)\: \L(x,y) 
\eeq
under variations of the measure~$\rho$, keeping the total volume~$\rho(\F)$ fixed
({\em{volume constraint}}). The papers \cite{jet, noncompact} mainly focus on the case that the total volume~$\rho(\F)$ is
infinite. In order to implement the volume constraint and to derive the corresponding Euler-Lagrange equations, in~\cite{jet} one makes the following additional assumptions:
\begin{itemize}[leftmargin=2em]
	\item[(iii)] The measure~$\rho$ is {\em{locally finite}}
	(meaning that any~$x \in \F$ has an open neighborhood~$U \subset \F$ with~$\rho(U)< \infty$). \label{Cond3}%
	\item[(iv)] The function~$\L(x,.)$ is $\rho$-integrable for all~$x \in \F$ and \label{Cond4}%
	\beq \label{Lint}
	\sup_{x \in \F} \int_{\F}\L(x,y)\: d\rho(y) < \infty \:.
	\eeq
\end{itemize}
By Fatou's lemma, the integral in~\eqref{Lint} is lower semi-continuous in the variable~$x$. 

A measure on the Borel algebra which satisfies~(iii) will be referred to as a {\em{Borel measure}} (in the sense of \cite{gardner+pfeffer}), and the set of Borel measures on~$\F$ shall be denoted by~$\mathfrak{B}_{\F}$.  
Moreover, the Borel $\sigma$-algebra over $\F$ is denoted by $\B(\F)$. 
A Borel measure is said to be {\em{regular}} if it is inner and outer regular (cf.~\cite[Definition VIII.1.1]{elstrodt}). An inner regular Borel measure is also called a \emph{Radon measure} \cite{schwartz-radon}. 

In \cite{jet, noncompact} one varies in the following class of measures:
\begin{Def} \label{deffv}
	Given a regular Borel measure~$\rho$ on~$\F$, a regular Borel measure~$\tilde{\rho}$ on~$\F$
	is said to be a {\bf{variation of finite volume}} if 
	\beq \label{totvol}
	\big| \tilde{\rho} - \rho \big|(\F) < \infty \qquad \text{and} \qquad
	\big( \tilde{\rho} - \rho \big) (\F) = 0 \:,
	\eeq
	where the total variation $|\tilde{\rho} - \rho|$ of two possibly infinite measures $\rho$ and $\tilde{\rho}$ on $\B(\F)$ is defined in~\cite[\S 2.2]{noncompact} as follows: We say that $|\tilde{\rho} - \rho| < \infty$ if there exists~$B \in \B(\F)$ with~$\rho(B), \tilde{\rho}(B) < \infty$ such that $\rho|_{\F \setminus B} = \tilde{\rho}|_{\F \setminus B}$. In this case, 
	\begin{align*}
	(\tilde{\rho} - \rho)(\Omega) := \tilde{\rho}(B \cap \Omega) - {\rho}(B \cap \Omega)
	\end{align*}
	for any Borel set~$\Omega \subset \F$. 
\end{Def} 

Given a regular Borel measure $\rho \in \mathfrak{B}_{\F}$ and assuming that (i), (ii) and~(iv) hold, for every variation of finite volume~$\tilde{\rho} \in \mathfrak{B}_{\F}$ the difference of the actions as given by
\beq \label{integrals}
\begin{split}
	\big( &\Sact(\tilde{\rho}) - \Sact(\rho) \big) = \int_\F d(\tilde{\rho} - \rho)(x) \int_\F d\rho(y)\: \L(x,y) \\
	&\quad + \int_\F d\rho(x) \int_\F d(\tilde{\rho} - \rho)(y)\: \L(x,y) 
	+ \int_\F d(\tilde{\rho} - \rho)(x) \int_\F d(\tilde{\rho} - \rho)(y)\: \L(x,y)
\end{split}
\eeq
is well-defined in view of~\cite[Lemma~2.1]{jet}. For clarity, we point out that condition~(iii) is not required in order for~\eqref{integrals} to hold. 

Note that the assumptions~(iii) and~(iv) are dropped in \cite{noncompact}. 
The causal variational principle in the $\sigma$-locally compact setting~\cite{noncompact} is then defined as follows. 
\begin{Def} Let~$\F$ be a second-countable, locally compact Hausdorff space,
	and let the Lagrangian~$\L : \F \times \F \rightarrow \R^+_0$ be a symmetric and lower semi-continuous function
	(see conditions~{\rm{(i)}} and~{\rm{(ii)}} above). Moreover, we assume that~$\L$
	is strictly positive on the diagonal, i.e.\
	\begin{align*}
		\L(x,x) > 0 \qquad \text{for all~$x \in \F$} \:.
	\end{align*}
	The {\bf{causal variational principle on $\sigma$-locally compact spaces}} is to minimize
	the causal action~\eqref{Sact}
	under variations of finite volume (see Definition~\ref{deffv}).
\end{Def} \noindent

We point out that (iv) is a sufficient condition for \eqref{integrals} to hold. 
However, since the conditions~(iii) and~(iv) are not imposed in \cite{noncompact},
it is a-priori not clear whether the integrals in~\eqref{integrals} exist.
For this reason, condition~\eqref{integrals} is included into the definition of a minimizer:

\begin{Def} \label{Definition minimizer} A regular Borel measure~$\rho$ on~$\F$ is said to be a {\bf{minimizer}} of the causal action \textbf{under variations of finite volume}~\cite{noncompact} if the difference~\eqref{integrals} is well-defined and
	non-negative for all regular Borel measures~$\tilde{\rho}$ on~$\F$ satisfying~\eqref{totvol},
	\[ \big( \Sact(\tilde{\rho}) - \Sact(\rho) \big) \geq 0 \:. \]
\end{Def}

We denote the support of the measure~$\rho$ by~$M$,
\beq \label{suppdef}
M := \supp \rho = \F \setminus \bigcup \big\{ \text{$\Omega \subset \F$ \,\big|\,
	$\Omega$ is open and $\rho(\Omega)=0$} \big\}
\eeq
(thus the support is the set of all points for which every open neighborhood
has a strictly positive measure; for details and generalizations
see~\cite[Subsection~2.2.1]{federer}). According to Definition \ref{deffv}, the condition~$|\tilde{\rho} - {\rho}| < \infty$ implies that there exists some Borel set~$B \subset \F$ with $\rho(B), \tilde{\rho}(B) < \infty$ and $\rho|_{\F \setminus B} = \tilde{\rho}|_{\F \setminus B}$. In particular, $\rho|_B$ and~$\tilde{\rho}|_B$ are finite Borel measures on $\B(B)$, and thus have support (see \cite[Proposition~7.2.9]{bogachev}). Furthermore, the signed measure $\tilde{\rho} - \rho$ has support. 

We now recall some results from \cite{noncompact} which will be referred to frequently. 
The first existence results in \cite{noncompact} are based on the assumption that the Lagrangian is of compact range. For convenience, let us state the definition.
\begin{Def} \label{defcompactrange}
	The Lagrangian has {\bf{compact range}} if for every compact set~$K \subset \F$ there is
	a compact set~$K' \subset \F$ such that
	\[ \L(x,y) = 0 \qquad \text{for all~$x \in K$ and~$y \not \in K'$}\:. \]
\end{Def}

Moreover, the definition of minimizers under variations of compact support plays an important role in~\cite{noncompact}. 
\begin{Def} \label{Def global}
	A regular Borel measure $\rho$ on $\F$ is said to be a {\bf{minimizer under variations of compact support}}~\cite{noncompact} 
	of the causal action if for any regular Borel measure~$\tilde{\rho}$ on~$\F$ which satisfies~\eqref{totvol} such that the signed measure~$\tilde{\rho} - \rho$ is compactly supported, the inequality 
	\[ \big( \Sact(\tilde{\rho}) - \Sact(\rho) \big) \geq 0 \]
	holds.
\end{Def}

Assuming that the Lagrangian is of compact range, the main results in \cite{noncompact} can be summarized as follows. 

\begin{Thm}[\textbf{Euler-Lagrange equations}] \label{Theorem 4.2}
	Let $\F$ be a second-countable, locally compact Hausdorff space, and 
	assume that~$\L : \F \times \F \to \R_0^+$ is continuous and of compact range.
	Then there exists a regular Borel measure~$\rho$ on~$\F$ which satisfies the Euler-Lagrange equations
	\begin{align}\label{(EL equations)}
	\ell|_{\supp \rho} \equiv \inf_{\text{$x \in \F$}} \ell(x) = 0 \:,
	\end{align}
	where~$\ell \in C(\F)$ is defined by
	\begin{align}\label{(Lagrange functional)}
	\ell (x) := \int_{\F} \L(x,y) \:d\rho(y) - 1 \:.
	\end{align}
\end{Thm}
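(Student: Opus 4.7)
The plan is to produce $\rho$ as a suitably rescaled weak limit of minimizers on an exhausting sequence of compact sets, then transfer the Euler--Lagrange relations to the limit, using the compact range hypothesis to guarantee that the relevant integrals behave continuously under weak convergence.

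Since $\F$ is second-countable, locally compact and Hausdorff, it is $\sigma$-compact, so I fix compact sets $K_1 \subset K_2 \subset \cdots$ with $K_n \subset \operatorname{int}(K_{n+1})$ and $\F = \bigcup_n K_n$. On each $K_n$ the set of positive regular Borel measures of prescribed total mass $V_n$ is weakly compact by Prohorov, and continuity of $\L$ makes $\Sact$ weakly continuous on it; a minimizer $\rho_n$ therefore exists. A Lagrange multiplier argument, testing $\rho_n$ against mass-preserving signed perturbations, furnishes a constant $\lambda_n \geq 0$ with
\[
\int_\F \L(x, y)\, d\rho_n(y) \geq \lambda_n \quad \text{for all } x \in K_n,
\]
and equality on $\supp \rho_n$. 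Strict positivity of $\L$ on the diagonal forces $\lambda_n > 0$, and the quadratic scaling $\Sact(c \rho) = c^2 \Sact(\rho)$ lets me tune $V_n$ so that $\lambda_n = 1$. Setting $\ell_n(x) := \int_\F \L(x, y)\, d\rho_n(y) - 1$ then gives $\ell_n \geq 0$ on $K_n$ with $\ell_n \equiv 0$ on $\supp \rho_n$.

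The compact range hypothesis now provides uniform local control. For each $m$, choose a compact $K'_m \supseteq K_m$ with $\L(x, \cdot) \equiv 0$ outside $K'_m$ for every $x \in K_m$. For $n$ large enough $K'_m \subset K_n$, so the EL equality at any $x \in \supp \rho_n \cap K_m$ reduces to $\int_{K'_m} \L(x, y)\, d\rho_n(y) = 1$. Using $\L(y_0, y_0) > 0$ and continuity, each $y_0 \in K'_m$ admits a neighborhood $V_{y_0}$ and an $\varepsilon_{y_0} > 0$ with $\L \geq \varepsilon_{y_0}$ on $V_{y_0} \times V_{y_0}$; passing to a finite subcover of $K'_m$ and testing the EL equality inside each member that meets $\supp \rho_n$ yields a uniform bound $\sup_n \rho_n(K'_m) \leq C_m < \infty$. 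Prohorov's theorem and a Cantor diagonal argument then extract a subsequence along which $\rho_n|_{K'_m}$ converges weakly to a finite regular Borel measure $\mu_m$ for every $m$, and the compatibility $\mu_{m+1}|_{K'_m} = \mu_m$ assembles these into a regular Borel measure $\rho$ on $\F$.

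To finish, I pass to the limit in the EL relations. For fixed $x \in \F$, the function $y \mapsto \L(x, y)$ is continuous and compactly supported in some $K'_m$, so weak convergence forces $\int_\F \L(x, y)\, d\rho_n(y) \to \int_\F \L(x, y)\, d\rho(y)$, and hence $\ell_n(x) \to \ell(x)$; the global inequality $\ell \geq 0$ on $\F$ follows immediately. Uniform convergence of $\ell_n$ to $\ell$ on each compact set, via equicontinuity of $\{\L(x, \cdot)\}_{x \in K}$ coming from uniform continuity of $\L$ on compacts, together with the Portmanteau-type inclusion $\supp \rho \subseteq \liminf_n \supp \rho_n$ forces $\ell = 0$ on $\supp \rho$, so $\inf_\F \ell = 0$ as required. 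The main obstacle will be the rescaling-plus-compactness step: the $V_n$ must be tuned so that the rescaled $\rho_n$ are simultaneously tight on every compact and nondegenerate in the limit, so that $\rho$ is a genuinely locally finite measure rather than zero or infinite. Strict positivity of $\L$ on the diagonal together with compact range is precisely what makes this delicate balance possible.
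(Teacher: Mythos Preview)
The paper does not prove this theorem itself; it is quoted as a result from~\cite{noncompact}. Your outline follows exactly the strategy used there (compact exhaustion, minimizers on each $K_n$ rescaled so the Lagrange multiplier equals~$1$, the uniform bound $\rho_n(K)\le C_K$ via strict positivity on the diagonal, a diagonal subsequence, and passage to the limit in the EL relations using that $\L(x,\cdot)\in C_c(\F)$ by compact range). So the approach is the intended one.

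Two points deserve tightening. First, the claimed compatibility $\mu_{m+1}|_{K'_m}=\mu_m$ does \emph{not} follow from weak convergence of the restrictions $\rho_n|_{K'_m}$ and $\rho_n|_{K'_{m+1}}$: mass can accumulate on $\partial K'_m$ in the limit, and the two restrictions see it differently. In the locally compact setting the clean fix (and what~\cite{noncompact} actually does) is to pass to \emph{vague} convergence on all of~$\F$: the uniform bounds $\rho_n(K)\le C_K$ give a vaguely convergent subsequence with Radon limit~$\rho$, and since each $\L(x,\cdot)$ lies in $C_c(\F)$ you get $\int\L(x,y)\,d\rho_n(y)\to\int\L(x,y)\,d\rho(y)$ directly, with no patching needed.

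Second, the ``main obstacle'' you flag at the end---that~$\rho$ might vanish---is not an obstacle at all and is already handled by your own argument. Fix any $x_0\in\F$; for all large~$n$ you have $\ell_n(x_0)\ge 0$, i.e.\ $\int\L(x_0,y)\,d\rho_n(y)\ge 1$, and this passes to the limit to give $\int\L(x_0,y)\,d\rho(y)\ge 1>0$. So $\rho\neq 0$ automatically, and in fact $\supp\rho\neq\varnothing$ so that $\inf_\F\ell=0$ is genuinely attained. You can simply delete the hedging paragraph.
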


Combining \cite[Theorem 4.9 and Theorem 4.10]{noncompact}, we obtain the following result.

\begin{Thm}\label{Theorem 4.9 + 4.10}
	Assume that~$\L : \F \times \F \to \R_0^+$ is continuous and of compact range. Then there is a regular Borel measure~$\rho$ on~$\F$ which is a minimizer under variations of compact support~\cite{noncompact} (see Definition \ref{Def global}). Under the additional assumptions that the Lagrangian~$\L$ is bounded and condition~{\rm{(iv)}} is satisfied (see \eqref{Lint}), the measure~$\rho$ is a minimizer under variations of finite volume~\cite{noncompact} (see Definition~\ref{Definition minimizer}).
\end{Thm}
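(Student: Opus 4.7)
The plan is to construct~$\rho$ as a vague limit of minimizers on a compact exhaustion of~$\F$ and then to use the compact range of~$\L$ to localise the minimization inequality to a single compact subset; the second assertion will follow by an approximation argument exploiting boundedness of~$\L$ and condition~(iv). Since~$\F$ is second-countable, locally compact and Hausdorff, I first fix a nested exhaustion $K_1 \subset K_2 \subset \cdots$ of compact sets with $K_n \subset \Int K_{n+1}$ and $\bigcup_n K_n = \F$. On each~$K_n$ the Lagrangian is continuous on the compact product~$K_n \times K_n$, and the finite-volume existence theory of~\cite[Section~2]{continuum} provides a minimizer~$\rho_n$ of the restricted causal action within the regular Borel measures on~$K_n$ of prescribed total mass~$c_n$. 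Choosing~$c_n$ appropriately and rescaling in the spirit of Theorem~\ref{Theorem 4.2}, one arranges that the corresponding EL functional as in~\eqref{(Lagrange functional)} vanishes on~$\supp \rho_n$.

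To extract a limit, fix a compact $C \subset \F$; by compact range there is a compact $C' \supset C$ with $\L(x,y) = 0$ whenever $x \in C$ and $y \notin C'$, and testing the minimization for~$\rho_n$ (with $n$ so large that $C' \subset K_n$) against a small mass relocation from $K_n \setminus C'$ into~$C$ produces a uniform upper bound on~$\rho_n(C)$. Prohorov's theorem then yields vague precompactness of $\{\rho_n\}$ on each compact subset, and Cantor's diagonal argument along a countable base of compact sets selects a subsequence converging vaguely to a regular Borel measure~$\rho$ on~$\F$.

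To verify that~$\rho$ is a minimizer under variations of compact support, let~$\tilde\rho$ be such a variation, set $K := \supp(\tilde\rho - \rho)$ compact, and let $K' \supset K$ be the enlargement supplied by compact range. Each of the three integrals in~\eqref{integrals} then reduces to an iterated integral over~$K'$ in at least one variable, hence depends continuously on $\rho|_{K'}$ under vague convergence. I would then build competitors $\tilde\rho_n := \rho_n + (\tilde\rho - \rho) + \delta_n$, where $\delta_n$ is a small rebalancing supported outside~$K'$ chosen so that $\tilde\rho_n(K_n) = c_n$; by compact range, $\delta_n$ contributes nothing to the action. The minimality of~$\rho_n$ gives $\Sact(\tilde\rho_n) \ge \Sact(\rho_n)$, and passing to the vague limit yields $\Sact(\tilde\rho) \ge \Sact(\rho)$. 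For the second part of the theorem, a variation~$\tilde\rho$ of finite volume can be truncated through a Hahn decomposition $\tilde\rho - \rho = \mu^+ - \mu^-$ (with $(\mu^+ + \mu^-)(\F) < \infty$) against an exhausting family of compacts, producing compactly supported variations $\tilde\rho^{(k)}$ converging to~$\tilde\rho$ in total variation; boundedness of~$\L$ controls the quadratic term in~\eqref{integrals} by $\|\L\|_\infty\, |\tilde\rho - \rho|(\F)^2$, and condition~(iv) provides $L^1$-dominating functions for the two mixed terms, so dominated convergence propagates the inequality from the compactly supported case.

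The most delicate step, and where compact range is essential, is the mass-transport argument in the previous paragraph: without a rebalancing~$\delta_n$ that both preserves the volume constraint~$c_n$ and contributes nothing to any of the three integrals in~\eqref{integrals}, the comparison $\Sact(\tilde\rho_n) - \Sact(\rho_n)$ cannot be localised to a fixed compact set, and the vague limit does not produce the required inequality for~$\tilde\rho$. A related subtlety is that $\tilde\rho - \rho$ is signed, so the passage to the vague limit has to be performed separately on the positive and negative parts of $\tilde\rho - \rho$ (each restricted to the fixed compact $K \cup K'$) rather than via a single one-sided semi-continuity statement.
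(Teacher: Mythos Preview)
The paper does not supply its own proof of this statement: it is a recall of results from~\cite{noncompact}, introduced by the sentence ``Combining \cite[Theorem 4.9 and Theorem 4.10]{noncompact}, we obtain the following result.'' Your outline---compact exhaustion, finite-volume minimizers on each~$K_n$, EL normalization, uniform mass bounds from strict positivity on the diagonal, Prohorov/diagonal extraction, then passage to the limit---is indeed the strategy of~\cite{noncompact}, as can be inferred from the adaptations carried out later in this paper (see the proofs of Proposition~\ref{Proposition minimizer on V} and Lemma~\ref{Lemma compact support}).

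There is, however, a genuine gap in your competitor construction. You set $\tilde\rho_n := \rho_n + (\tilde\rho - \rho) + \delta_n$ with~$\delta_n$ a mass correction supported outside~$K'$. But~$\tilde\rho_n$ need not be a \emph{positive} measure: on a Borel set $A \subset K$ with $\tilde\rho(A)=0$ and $\rho(A)>0$ one has $(\tilde\rho-\rho)(A)=-\rho(A)$ and $\delta_n(A)=0$, so positivity requires $\rho_n(A)\ge\rho(A)$; vague convergence only gives $\limsup_n \rho_n(A)\le\rho(A)$ for closed~$A$, which goes the wrong way. Thus $\tilde\rho_n$ may fail to be an admissible competitor for~$\rho_n$, and the inequality $\Sact(\tilde\rho_n)\ge\Sact(\rho_n)$ is not available. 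The remedy actually used (visible in this paper's Proposition~\ref{Proposition minimizer on V}, which mirrors~\cite[Theorem~4.9]{noncompact}) is a \emph{multiplicative} competitor: choose a relatively compact continuity set~$V\supset K$ and define $\tilde\rho_n := c_n\,\tilde\rho$ on~$V$, $\tilde\rho_n := \rho_n$ on~$\F\setminus V$, with $c_n=\rho_n(V)/\tilde\rho(V)$. This is manifestly positive, satisfies the volume constraint exactly, and $c_n\to 1$ follows from $\rho_n(V)\to\rho(V)=\tilde\rho(V)$; compact range then localizes the action difference to fixed compacta on which weak convergence (together with~\cite[Theorem~2.8]{billingsley} for the product measures) yields the limit.

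A smaller issue in your second part: truncating $\mu^+,\mu^-$ separately against an exhaustion will in general destroy the volume constraint $(\tilde\rho^{(k)}-\rho)(\F)=0$. One either restores it by a small rescaling (as above) or---more in the spirit of~\cite{noncompact}---first approximates~$\tilde\rho$ by inner-regular compact pieces inside an open set~$U\supset\supp(\tilde\rho-\rho)$ of nearly the same $\rho$- and $\tilde\rho$-mass, so that the error terms are controlled by $\|\L\|_\infty$ and condition~(iv) exactly as you indicate.
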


In \cite[Section 5]{noncompact} it was shown that the assumption that the Lagrangian~$\L$ is of compact range can be weakened. To this end, we recall that every second-countable, locally compact Hausdorff space can be endowed with a Heine-Borel metric (for details we refer to the explanations in \cite[\S 3.1 and \S 5.1]{noncompact}). Given an unbounded Heine-Borel metric on the second-countable, locally compact space $\F$, for any $r > 0$ and $x \in \F$ the closed ball~$\overline{B_r(x)}$ is compact, and hence can be covered by finitely many balls of radius~$\delta > 0$. The smallest such number is denoted by $E_x(r,\delta)$ and is called \emph{entropy}. This gives rise to Lagrangians decaying in entropy, being defined as follows (cf.~\cite[Definition 5.1]{noncompact}). 
\begin{Def}\label{Definition vanishing}
	Assume that $\F$ is endowed with an unbounded Heine-Borel metric~$d$. 
	The
	Lagrangian~$\L \colon \F \times \F \to \R_0^+$ is said to {\bf{decay in entropy}}
	if the following conditions are satisfied: 
	\begin{enumerate}[leftmargin=2em]
		\item[\rm{(a)}] $c:=\inf_{{x} \in \F} \L({x},{x}) > 0$.
		\item[\rm{(b)}] There is a compact set $K \subset \F$ such that 
		\[ \delta := \inf_{{x} \in \F \setminus K} \sup \left\{ s \in \R \::\: \L({x},y) \ge \frac{c}{2} \quad \text{for all~$y \in B_s(x)$} \right\} > 0 \:. \]
		\item[\rm{(c)}] The Lagrangian has the following decay property:
		There is a monotonically decreasing, integrable function~$f \in L^1(\R^+, \R^+_0)$ such that
		\begin{align*}
		\L(x,y) \leq \frac{f \big(d(x,y) \big)}{C_x\big(d(x,y), \delta \big)} \qquad \text{for all~$x,y \in \F$ with~$x \neq y$} \:,
		\end{align*}
		where
		\[C_x(r, \delta) := C \: E_x(r+2, \delta) \qquad \text{for all $r > 0$}\:, \]
		and the constant $C$ is given by
		\[ C := 1+\frac{2}{c} < \infty \:. \]
	\end{enumerate}
\end{Def}\noindent
We point out that
the above definition of Lagrangians decaying in entropy as introduced in \cite[Section 5]{noncompact} requires an \emph{unbounded} Heine-Borel metric. For a more general definition we refer to~\S \ref{S Entropy} (see Definition~\ref{Definition entropy new}). 

For clarity we note that, if $(\H, \F, \rho)$ is a causal fermion system with $\dim(\H) < \infty$, the space $\LL(\H)$ of bounded linear operators on~$\H$ is finite-dimensional. Combining the fact that all norms on finite-dimensional vector spaces are equivalent with the Heine-Borel theorem yields that the Fréchet metric induced by the operator norm on the vector space~$\LL(\H)$ is an unbounded Heine-Borel metric on~$\F$.   

Let us now recall the main results in \cite{noncompact} under the assumption that the Lagrangian decays in entropy. 
\begin{Thm}\label{Theorem 5.5}
	Let $\F$ be a second-countable, locally compact Hausdorff space, and assume that $\L : \F \times \F \to \R_0^+$ is continuous and decays in entropy. Then there exists a regular Borel measure~$\rho$ on $\F$ which satisfies the Euler-Lagrange equations
	\begin{align*}
	\ell|_{\supp \rho} \equiv \inf_{x \in \F} \ell(x) = 0 \:,
	\end{align*}
	where $\ell \in C(\F)$ is defined by \eqref{(Lagrange functional)}. 
\end{Thm}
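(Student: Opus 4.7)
The strategy is to reduce to the compact-range case treated in Theorem~\ref{Theorem 4.2} by approximating $\L$ by Lagrangians of compact range that preserve strict diagonal positivity. Fix an unbounded Heine-Borel metric $d$ on $\F$ together with continuous cut-offs $\psi_n \colon \R_0^+ \to [0,1]$ with $\psi_n \equiv 1$ on $[0,n]$ and $\psi_n \equiv 0$ on $[n+1,\infty)$, and set
\[
\L_n(x,y) := \psi_n\bigl(d(x,y)\bigr)\, \L(x,y) .
\]
Then each $\L_n$ is continuous, symmetric, satisfies $\L_n(x,x) = \L(x,x) \geq c > 0$, and is of compact range in the sense of Definition~\ref{defcompactrange} by the Heine-Borel property (if $K \subset \F$ is compact and contained in $\overline{B_M(x_0)}$ for some $x_0 \in \F$, then $\L_n(x,y) = 0$ for all $x \in K$ and $y \notin \overline{B_{M+n+1}(x_0)}$). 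Applying Theorem~\ref{Theorem 4.2} to $\L_n$ yields a regular Borel measure $\rho_n$ on $\F$ together with $\ell_n(x) := \int_{\F} \L_n(x,y)\, d\rho_n(y) - 1$ satisfying $\ell_n \geq 0$ on $\F$ and $\ell_n \equiv 0$ on $\supp \rho_n$.

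The next step is the extraction of a vague limit. For $n \geq \delta$ the identity $\L_n = \L$ on $\delta$-balls, the lower bound $\L \geq c/2$ on such balls provided by assumption~(b) of Definition~\ref{Definition vanishing}, and the EL equation on $\supp \rho_n$ together yield $\rho_n(B_\delta(x)) \leq 2/c$ for every $x \in \supp \rho_n$ lying outside the exceptional compact set. Since every compact $K \subset \F$ is covered by finitely many $\delta$-balls, this gives a uniform bound $\sup_n \rho_n(K) < \infty$. Prohorov's theorem, applied compactum-by-compactum and combined with a Cantor diagonal argument paralleling the construction in \S\ref{S Global Measure} of the present paper, then extracts a subsequence $(\rho_{n_k})$ converging vaguely on every compact subset to a regular Borel measure $\rho$ on $\F$.

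The main obstacle is passing to the limit in the EL equation. On any fixed ball $B_R(x)$, vague convergence and continuity of $\L$ give $\int_{B_R(x)} \L(x,y)\, d\rho_{n_k}(y) \to \int_{B_R(x)} \L(x,y)\, d\rho(y)$. The key ingredient is a uniform tail bound: using the entropy-decay estimate~(c) of Definition~\ref{Definition vanishing} together with the local mass bound $\rho_n(B_\delta(\cdot)) \leq 2/c$, and covering annuli around $x$ by $\delta$-balls of cardinality controlled by the entropy $E_x(r,\delta)$, one deduces
\[
\int_{\F \setminus B_R(x)} \L(x,y)\, d\rho_n(y) \;\leq\; C_0 \int_R^\infty f(r)\, dr ,
\]
which tends to $0$ as $R \to \infty$, uniformly in $n$. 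Combined with the monotone convergence $\L_n \nearrow \L$, this yields $\ell_{n_k}(x) \to \ell(x)$ pointwise, where $\ell \in C(\F)$ by continuity of $\L$, $\ell \geq 0$ by Fatou, and $\ell \equiv 0$ on $\supp \rho$ by vague convergence of the supports combined with continuity. Non-triviality of $\rho$ is finally secured by observing that the normalization $\inf \ell_n = 0$ together with the diagonal bound $\L(x,x) \geq c$ forces $\rho_n$ to carry mass bounded below on compact neighborhoods of the minimizing points of $\ell_n$, which prevents complete escape of mass in the vague limit.
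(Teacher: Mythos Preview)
Note first that the present paper does not prove Theorem~\ref{Theorem 5.5}: it is recalled from~\cite{noncompact} in~\S\ref{seccvpsigma} without argument. From the discussion in~\S\ref{S Entropy} of this paper one infers that the proof in~\cite{noncompact} does not truncate the Lagrangian; rather, it reuses the construction of~\cite[\S 4.1]{noncompact} (exhaustion of~$\F$ by compact sets and passage to a vague limit) and replaces the compact-range tail control by the entropy-decay estimate~\eqref{(5.3')}, \eqref{(5.6')}. Your route via the cut-off Lagrangians~$\L_n$ is a legitimate and essentially equivalent alternative: the uniform local mass bound~$\rho_n(B_\delta(\cdot))\le 2/c$, the annulus estimate giving a tail bound~$\int_{\F\setminus B_R(x)}\L(x,y)\,d\rho_n(y)\le\sum_{m\ge R}f(m)$ uniform in~$n$ and locally uniform in~$x$, and the passage to the limit in~$\ell_{n_k}$ are all correct and mirror the mechanism in~\cite{noncompact}.

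There is, however, a genuine gap in your final non-triviality argument. You anchor the lower mass bound at ``the minimizing points of~$\ell_n$,'' but these lie in~$\supp\rho_n$ and nothing prevents them from escaping to infinity as~$n\to\infty$; a uniform lower bound on~$\rho_n(B_R(x_n))$ with~$x_n$ drifting does not stop the vague limit from being zero. The fix is to use the EL \emph{inequality} at a \emph{fixed} point~$x_0\in\F$: since~$\ell_n(x_0)\ge 0$ one has~$\int_\F \L_n(x_0,y)\,d\rho_n(y)\ge 1$, and the uniform tail bound then forces~$\int_{\overline{B_R(x_0)}}\L(x_0,y)\,d\rho_n(y)\ge 1/2$ for~$R$ large, independent of~$n$. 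This yields~$\rho_n\big(\overline{B_R(x_0)}\big)\ge (2\sup_{\overline{B_R(x_0)}}\L(x_0,\cdot))^{-1}>0$ on the fixed compact~$\overline{B_R(x_0)}$, whence~$\rho\ne 0$ by Portmanteau for compacta. A minor point: your claim ``$\ell\in C(\F)$ by continuity of~$\L$'' hides the same issue --- continuity of~$\ell$ requires the tail estimate locally uniformly in~$x$, precisely as in~\eqref{(5.3')}.
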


The following theorem ensures the existence of minimizing Borel measures. 
\begin{Thm}\label{Theorem 5.8 + 5.9}
	Assume that $\L : \F \times \F \to \R_0^+$ is continuous and decays in entropy. Then there is a regular Borel measure~$\rho$ on~$\F$ which is a minimizer under variations of compact support~\cite{noncompact}. Under the additional assumptions that the Lagrangian $\L$ is bounded and condition~{\rm{(iv)}} is satisfied (see \eqref{Lint}), the measure~$\rho$ is a minimizer under variations of finite volume~\cite{noncompact}.
\end{Thm}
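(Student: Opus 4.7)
The plan is to reduce the entropy-decay case to the compact-range case of Theorem~\ref{Theorem 4.9 + 4.10} by a truncation-and-approximation argument, using Definition~\ref{Definition vanishing}(c) as the tool to control the tails. Fix an unbounded Heine-Borel metric~$d$ on $\F$ and a basepoint $x_0 \in \F$, and set $K_n := \overline{B_n(x_0)}$. For each $n \in \N$ choose a symmetric continuous cutoff $\eta_n : \F \times \F \to [0,1]$ that equals $1$ on $K_n \times K_n$ and is supported in $K_{n+1} \times K_{n+1}$, and define $\L_n(x,y) := \eta_n(x,y)\,\L(x,y) + \bigl(1-\eta_n(x,y)\bigr)\,\delta_{xy}\,\L(x,x)$ (or any variant which keeps $\L_n$ continuous, symmetric, of compact range, and strictly positive on the diagonal with $\L_n(x,x) \ge c$). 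Theorem~\ref{Theorem 4.9 + 4.10} then produces, for each $n$, a regular Borel measure $\rho_n$ which is a minimizer of $\Sact_n(\rho) = \iint \L_n\,d\rho\,d\rho$ under variations of compact support.

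The next step is to extract a limit measure. Using Definition~\ref{Definition vanishing}(b), the $\L_n(x,x) \ge c$ bound combined with the ball-of-radius-$\delta$ positivity forces the minimizers $\rho_n$ not to concentrate arbitrary mass in small regions; more importantly, condition~(c) yields a uniform tail estimate of the form $\int_{\F \setminus B_R(x)} \L_n(x,y)\,d\rho_n(y) \le \int_R^\infty f(r)\,dr/(\text{entropy factor})$, which tends to $0$ as $R \to \infty$ uniformly in $x$ on compact sets and uniformly in $n$. By Prohorov's theorem applied to the restrictions $\rho_n|_{K_m}$ for each fixed $m$, together with a Cantor diagonal extraction, there is a subsequence converging vaguely on every compact subset to a regular Borel measure $\rho$ on $\F$.

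To show that $\rho$ is a minimizer under variations of compact support, let $\tilde\rho$ be such a competitor with $\mathrm{supp}(\tilde\rho - \rho) \subset K_m$ for some $m$. For $n \ge m$ one may regard $\tilde\rho$ (or a suitable volume-preserving modification inside $K_m$) as a competitor for $\rho_n$ with respect to $\L_n$; minimality of $\rho_n$ gives $\Sact_n(\tilde\rho) - \Sact_n(\rho_n) \ge 0$, and combining vague convergence on compact sets, lower semi-continuity of the Lagrangian, and the uniform tail bound lets us pass to the limit to obtain $\Sact(\tilde\rho) - \Sact(\rho) \ge 0$. Under the stronger hypotheses that $\L$ is bounded and condition~(iv) holds, an arbitrary variation of finite volume $\tilde\rho$ can be approximated by variations of compact support $\tilde\rho_k$ obtained by restricting the signed difference $\tilde\rho - \rho$ to $K_k$ and readjusting mass; boundedness of $\L$ together with~\eqref{Lint} allows one to invoke dominated convergence term by term in~\eqref{integrals} to pass from $\Sact(\tilde\rho_k) - \Sact(\rho) \ge 0$ to $\Sact(\tilde\rho) - \Sact(\rho) \ge 0$.

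The main obstacle is the passage to the limit in Step~3, which requires a tail estimate uniform in $n$ for the minimizers $\rho_n$ themselves, not merely for the Lagrangian. The content of Definition~\ref{Definition vanishing} is precisely tailored to this: the entropy factor in the denominator of the decay bound encodes how $f(d(x,y))$ must decay relative to the ``volume'' of balls around $x$, so that after integrating against any measure $\rho_n$ satisfying the local-density lower bound enforced by $\L(x,x) \ge c$ and condition~(b), the resulting integral stays uniformly finite. Making this heuristic rigorous---in particular, showing that the lower bound on local density transfers from $\rho_n$ to $\rho$ and that vague and set-wise limits behave well under the double integral---is the technical heart of the argument.
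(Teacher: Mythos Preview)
This theorem is not proved in the present paper; it appears in \S\ref{seccvpsigma} as a result quoted from~\cite{noncompact} (there Theorems~5.8 and~5.9), so there is no in-paper proof to compare against directly. What can be inferred from \S\ref{S Entropy}--\S\ref{S Preparatory results} of this paper, which recapitulate and extend the argument of~\cite[Section~5]{noncompact}, is that the original proof does \emph{not} truncate the Lagrangian. Instead one keeps the full~$\L$, runs the same compact-exhaustion construction as in the compact-range case (producing approximating minimizers~$\tilde\rho^{(n)}$ and a vague limit~$\tilde\rho$), and then uses the entropy-decay bound directly to establish the uniform tail estimates~\eqref{(5.3')},~\eqref{(5.6')}: the point is that the Euler--Lagrange equations force~$\tilde\rho^{(n)}(B_\delta(x))\le 2/c$ on each $\delta$-ball, so covering an annulus by~$E_x(r,\delta)$ such balls and invoking condition~(c) of Definition~\ref{Definition vanishing} gives a summable tail. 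This replaces the role that ``compact range'' played in passing to the limit, after which the proofs of Theorems~4.9 and~4.10 in~\cite{noncompact} go through essentially verbatim.

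Your truncation route is a different idea, and it runs into a genuine obstruction at the very first step. A Lagrangian cannot simultaneously be continuous, of compact range, and strictly positive on the \emph{entire} diagonal of the non-compact space~$\F$: if~$\L_n$ vanishes for~$(x,y)$ outside~$K_{n+1}\times K_{n+1}$, then in particular~$\L_n(x,x)=0$ for~$x\notin K_{n+1}$. Your ad~hoc fix with~$\delta_{xy}$ destroys continuity (and the parenthetical ``or any variant'' does not resolve this, because no variant can). Without strict positivity on the diagonal, Theorem~\ref{Theorem 4.9 + 4.10} does not apply, and more substantively the local-density upper bound~$\rho_n(B_\delta(x))\le 2/c$---which you correctly identify as the engine driving the tail estimate---is lost outside~$K_{n+1}$. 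One could instead restrict~$\L$ to~$K_{n+1}\times K_{n+1}$ and minimize on the compact set~$K_{n+1}$, but that is precisely the compact-exhaustion scheme of~\cite{noncompact}, not a truncation of~$\L$ on all of~$\F$. In short, your Step~1 cannot be carried out as written, and the natural repair collapses your argument into the paper's.
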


The goal of this paper is to extend the above results to the infinite-dimensional setting.

\section{Causal Variational Principles in the Non-Locally Compact Setting}\label{Section non-locally compact}
\subsection{Motivation: Infinite-Dimensional Causal Fermion Systems}\label{S CFS}
As explained in Section \ref{Subsection Overview}, causal variational principles evolved as a mathematical generalization of the causal action principle in order to develop the existence theory for causal fermion systems. In order to point out the connection to causal variational principles in the non-locally compact setting, let us briefly recall the basic structures of causal fermion systems (for details cf.~\cite[\S 1.1.1]{cfs}). By definition, causal fermion systems are characterized by a separable complex Hilbert space~$\H$, an integer~$s \in \N$ (the so-called \emph{spin dimension}) and a measure~$\rho$ defined on the Borel $\sigma$-algebra $\B(\F)$, where~$\F \subset \LL(\H)$ consists of all self-adjoint operators on $\H$ which have at most~$s$ positive and at most~$s$ negative eigenvalues. This gives rise to a triple~$(\H, \F, \rho)$. 
The set~$\F$ can be endowed with the topology induced by the operator norm on $\LL(\H)$, thus becoming a topological space. More precisely, denoting the Fréchet metric induced by the operator norm on~$\LL(\H)$ by~${d}$, the space~$(\F, d)$ is a separable complete metric space (Theorem~\ref{Theorem Polish}). 

Whenever $\dim(\H) < \infty$, the topological space $\F \subset \LL(\H)$ is locally compact. On the contrary, whenever~$\H$ is an infinite-dimensional Hilbert space, the corresponding set~$\F \subset \LL(\H)$ is non-locally compact (see Lemma \ref{Lemma Freg} below). In preparation, let us first state the following results.

\begin{Prp}\label{Proposition Banach}
	Any locally compact Banach space $X$ is finite-dimensional. 
\end{Prp}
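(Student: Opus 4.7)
The plan is to reduce to the classical fact that a normed space whose closed unit ball is compact must be finite-dimensional, which is F.~Riesz's compactness theorem. Since $X$ is a topological vector space, local compactness at the origin is equivalent to local compactness everywhere by translation.

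First I would exploit the vector space structure: by assumption there is an open neighborhood $U$ of $0 \in X$ with compact closure $\overline{U}$. Since the open balls $B_r(0)$ form a neighborhood base at $0$, one can choose $r > 0$ with $\overline{B_r(0)} \subset \overline{U}$, and hence $\overline{B_r(0)}$ is compact as a closed subset of a compact set. Multiplication by the scalar $1/r$ is a homeomorphism of $X$, so the closed unit ball $\overline{B_1(0)}$ is compact as well.

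Next I would invoke Riesz's lemma: for every closed proper subspace $Y \subsetneq X$ and every $\theta \in (0,1)$ there exists $x \in X$ with $\|x\| = 1$ and $\dist(x, Y) \geq \theta$. Assuming for contradiction that $X$ is infinite-dimensional, I would inductively construct a sequence $(x_n)_{n \in \N}$ in the unit sphere with $\|x_n - x_m\| \geq 1/2$ for all $n \neq m$. The construction proceeds by choosing $x_1$ arbitrary on the sphere and, given $x_1, \dots, x_n$, applying Riesz's lemma with $\theta = 1/2$ to the closed finite-dimensional (hence closed) proper subspace $Y_n := \operatorname{span}(x_1, \dots, x_n) \subsetneq X$, producing $x_{n+1}$ on the unit sphere with $\dist(x_{n+1}, Y_n) \geq 1/2$, which in particular yields $\|x_{n+1} - x_k\| \geq 1/2$ for all $k \leq n$.

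The sequence $(x_n)$ then lies in the compact set $\overline{B_1(0)}$ but admits no Cauchy subsequence, contradicting sequential compactness of $\overline{B_1(0)}$ in the metric space $X$. Therefore $X$ must be finite-dimensional. The only real subtlety is the statement and proof of Riesz's lemma itself, which I would take as a standard fact from functional analysis; the rest is a routine inductive construction combined with the scaling argument.
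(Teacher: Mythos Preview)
Your proposal is correct and follows essentially the same route as the paper: both arguments use Riesz's lemma to inductively build a sequence of unit vectors with pairwise distance at least $1/2$, and then observe that such a sequence has no convergent subsequence, contradicting compactness. The only difference is cosmetic---you make the reduction to compactness of the closed unit ball explicit via scaling, whereas the paper leaves this step implicit and derives the contradiction directly from local compactness.
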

\begin{proof}
	Let $x_1 \in X$ with $\|x_1\| = 1$. Given $x_1, \ldots, x_r \in X$ linearly independent unit vectors (i.e.~$\|x_i\| = 1$ for all $i =1, \ldots, r$), the space~$G_r = \operatorname{span}\{x_1, \ldots, x_r \}$ is an $r$-dimensional subspace of~$X$. Since $G_r$ is finite-dimensional, it is closed. If $G_r \subsetneq E$, there exists a unit vector~$x_{r+1} \in X$ with $\|x_{r+1} - x_i\| \ge 1/2$ for all $i = 1, \ldots, r$. 
	
	If we assume that $X$ is infinite-dimensional, this holds for every $r \in \N$, thus ending up with an infinite sequence $(x_r)_{r \in \N}$ of unit vectors satisfying~$\|x_p - x_q\| \ge 1/2$ for each $p \not= q$. In particular, the sequence $(x_r)_{r \in \N}$ admits no convergent subsequence in contradiction to the assumption that $E$ is locally compact. 
\end{proof}

\begin{Corollary}\label{Corollary non-locally compact}
	Any infinite-dimensional Banach space~$X$ is non-locally compact. The same holds true for open subsets of~$X$. 
\end{Corollary}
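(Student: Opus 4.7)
\medskip

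The first assertion is immediate by contraposition from Proposition~\ref{Proposition Banach}: if $X$ were locally compact, it would be finite-dimensional, contradicting the hypothesis. So I would devote essentially no effort to this part beyond a one-line citation.

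For the second assertion, the plan is to reduce local compactness of an open subset to local compactness of $X$ itself, using that the affine transformations of a Banach space are homeomorphisms. More precisely, suppose toward a contradiction that some open set $U \subset X$ is locally compact, and fix $x_0 \in U$. Then there exists an open neighborhood $V$ of $x_0$ in $U$ whose closure $\overline{V}$ (taken in $X$, which agrees with the closure in $U$ on a small enough set) is compact. Since $U$ is open in $X$, we may shrink $V$ to assume $V = B_r(x_0)$ for some $r > 0$. The map $\varphi \colon X \to X$, $\varphi(y) = r^{-1}(y - x_0)$, is a homeomorphism carrying $\overline{B_r(x_0)}$ onto $\overline{B_1(0)}$, so the closed unit ball of $X$ is compact. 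This means $X$ is locally compact, and Proposition~\ref{Proposition Banach} again yields $\dim X < \infty$, a contradiction.

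There is no real obstacle here; the only thing to watch is that "locally compact" as applied to the subspace $U$ refers to the subspace topology, so one must verify that a compact neighborhood in $U$ gives rise to a compact neighborhood in $X$. This is handled by taking the neighborhood small enough to lie inside an open ball contained in $U$, after which compactness in $U$ and compactness in $X$ coincide (compactness is an intrinsic property of a topological space and does not depend on the ambient space).
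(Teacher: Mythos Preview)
Your proof is correct and follows the same route as the paper: both parts are derived from Proposition~\ref{Proposition Banach} by contraposition. The paper's own proof is in fact just the single sentence ``This is an immediate consequence of Proposition~\ref{Proposition Banach},'' so your argument for the open-subset case (shrinking to a ball and rescaling to the unit ball via an affine homeomorphism) supplies details the paper leaves implicit.
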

\begin{proof}
	This is an immediate consequence of Proposition \ref{Proposition Banach}. 
\end{proof}

\begin{Lemma}\label{Lemma Freg}
	Let $\H$ be an infinite-dimensional, separable complex Hilbert space, and let $\F^{\textup{reg}} \subset \LL(\H)$ be the set of self-adjoint operators which have exactly $s$ positive and exactly $s$ negative eigenvalues for some $s \in \N$. Then $\F^{\textup{reg}}$ is non-locally compact. 
\end{Lemma}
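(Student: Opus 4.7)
The plan is to show that $\F^{\textup{reg}}$ fails to be locally compact at every one of its points by constructing, inside an arbitrarily small operator-norm neighborhood of any given $x_0 \in \F^{\textup{reg}}$, an infinite sequence in $\F^{\textup{reg}}$ whose pairwise distances remain bounded below. The geometric idea is the same one underlying Proposition~\ref{Proposition Banach}: an infinite-dimensional $\H$ supplies, around every $x_0 \in \F^{\textup{reg}}$, an infinite-dimensional \emph{transverse} subspace into which one can perturb $x_0$ without leaving $\F^{\textup{reg}}$.

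Concretely, given $x_0 \in \F^{\textup{reg}}$, I would pick a unit eigenvector $f \in \H$ of $x_0$ for some negative eigenvalue $-\mu$ with $\mu > 0$. Since $\rank x_0 = 2s$ and $\dim \H = \infty$, the kernel $\ker x_0$ is an infinite-dimensional closed subspace of $\H$ and therefore contains an orthonormal sequence $(g_n)_{n \in \N}$. For a parameter $\theta \in (0,\pi/2)$ to be chosen, set $v_n := \cos\theta\, f + \sin\theta\, g_n$; each $v_n$ is a unit vector orthogonal to all eigenvectors of $x_0$ different from $f$. Define
\begin{align*}
x_n \;:=\; x_0 + \mu\,|f\rangle\langle f| - \mu\,|v_n\rangle\langle v_n| \:.
\end{align*}
A direct inspection shows that $x_n$ agrees with $x_0$ on the orthogonal complement of $\operatorname{span}\{f,g_n\}$ and acts as $-\mu\,|v_n\rangle\langle v_n|$ on $\operatorname{span}\{f,g_n\}$. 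Thus $x_n$ is self-adjoint with exactly the same positive and negative eigenvalue multiplicities as $x_0$, so $x_n \in \F^{\textup{reg}}$.

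The decisive quantitative step uses the standard identity $\bigl\| |u\rangle\langle u| - |w\rangle\langle w| \bigr\| = \sqrt{1 - |\langle u, w\rangle|^2}$ for unit vectors $u, w \in \H$, which yields
\begin{align*}
\|x_n - x_0\| \;=\; \mu\,\sin\theta \qquad\text{and}\qquad \|x_n - x_m\| \;=\; \mu\,\sqrt{1-\cos^4\theta} \quad (n \ne m) \:.
\end{align*}
Given any $\varepsilon > 0$ one chooses $\theta$ so small that $\mu\sin\theta < \varepsilon$; then the entire sequence $(x_n)$ lies in the operator-norm $\varepsilon$-ball about $x_0$, while the pairwise distances remain bounded below by the positive constant $\mu\sqrt{1-\cos^4\theta}$. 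Hence $(x_n)$ has no Cauchy subsequence. Since the topology on $\F^{\textup{reg}}$ is the subspace topology induced by the operator norm (equivalently, by the Fréchet metric $d$ of Theorem~\ref{Theorem Polish}), any compact neighborhood of $x_0$ would contain such an $\varepsilon$-ball, hence the sequence $(x_n)$, contradicting sequential compactness. Therefore no point of $\F^{\textup{reg}}$ has a compact neighborhood.

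The only delicate point is ensuring that $x_n$ genuinely lies in $\F^{\textup{reg}}$ and not merely in $\F$, i.e.~that the signature stays exactly $(s,s)$. This is precisely why the perturbation rotates the single eigenvector $f$ into the transverse direction $v_n$ while preserving the eigenvalue $-\mu$, rather than adding a generic rank-one operator; the latter would risk changing the number of positive or negative eigenvalues.
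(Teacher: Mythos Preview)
Your argument is correct and constitutes a genuinely different, more elementary route than the paper's. The paper proceeds by invoking the (externally proved) fact that $\F^{\textup{reg}}$ is an infinite-dimensional Banach manifold; it then transfers non-local-compactness from open subsets of the model Banach space through the chart homeomorphisms, appealing to Corollary~\ref{Corollary non-locally compact}. Your proof avoids the manifold machinery entirely: you exhibit, by an explicit rank-one rotation of a single eigenvector into the infinite-dimensional kernel, a sequence in every $\varepsilon$-ball about $x_0$ with uniformly separated terms. This is self-contained and makes the geometric reason for non-local-compactness transparent (the kernel supplies infinitely many ``transverse'' directions), at the cost of a short computation; the paper's version is terser but depends on the Banach manifold structure established elsewhere. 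Both approaches ultimately rest on the same phenomenon---the presence of an infinite-dimensional subspace---but your direct construction has the advantage that it does not require knowing in advance that $\F^{\textup{reg}}$ carries smooth charts.
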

\begin{proof}
	Since $\F^{\textup{reg}}$ is a Banach manifold (for details see~\cite{finster+lottner}), 
	it can be covered by an atlas~$(U_{\alpha}, \phi_{\alpha})_{\alpha \in A}$ for some index set~$A$ (cf.~\cite[Chapter~73]{zeidlerIV}). In particular, every point~$x \in U^{\textup{reg}}$ is contained in some open set $U_{\alpha}$, whose image $V_{\alpha} := \phi_{\alpha}(U_{\alpha})$ is open in some infinite-dimensional Banach space $X_{\alpha}$. Due to Corollary \ref{Corollary non-locally compact}, the set~$V_{\alpha}$ is non-locally compact. As the mapping $\phi_{\alpha}$ is a homeomorphism, we deduce that $U_{\alpha} \subset \F^{\textup{reg}}$ is non-locally compact for any $\alpha \in A$, which proves the claim. 
\end{proof}

Considering an \emph{infinite-dimensional}, separable complex Hilbert space $\H$, then the set~$\F \subset \LL(\H)$ as introduced in \cite{cfs} is non-locally compact and Polish (see Lemma~\ref{Lemma Freg} and Theorem~\ref{Theorem Polish}). Our goal in the following is to prove the existence of a regular (possibly non-locally finite) measure $\rho$ on the Borel algebra~$\B(\F)$ such that~$\rho$ is a minimizer of the corresponding causal action principle, giving rise to a causal fermion system~$(\H, \F, \rho)$. 
Instead of immediately delving into the corresponding causal action principle (see \cite[\S 1.1.1]{cfs}), we deal with causal variational principles on~$\H$, which can be viewed as generalizations of the causal action principle (as introduced in \cite{continuum}, \cite{jet} and considered in more detail in \cite{noncompact}). Corresponding results concerning the causal action principle are then obtained as a special case. With this in mind, it suffices to prove the existence of minimizers of the causal variational principle \eqref{(cvp)} under the constraints~\eqref{totvol} in the non-locally compact setting as introduced in Definition~\ref{Definition non-locally compact}.

In order to motivate the basic definitions in \S \ref{S Basic Definitions} below, we note that~$\F \subset \mathscr{K}(\H)$, where by~$\mathscr{K}(\H) \subset \LL(\H)$ we denote the set of compact operators on~$\H$. 
Since $\H$ is a separable, infinite-dimensional complex Hilbert space, let us point out that~$\mathscr{K}(\H)$ is a Banach space (see e.g.~\cite[Satz~II.3.2]{werner}) and separable in view of \cite[\S 12.E]{kechris}. 
Moreover, making use of Proposition \ref{Proposition Banach}, Corollary~\ref{Corollary non-locally compact} and Lemma~\ref{Lemma Freg}, we conclude that~$\mathscr{K}(\H) \supset \F$ is infinite-dimensional. This allows us to approximate~$\mathscr{K}(\H)$ by finite-dimensional subspaces. 
More precisely, we may apply \cite[Lemma 7.1]{alt} to deduce that there is a sequence of finite-dimensional subspaces~$(L_n)_{n \in \N}$ in $\mathscr{K}(\H)$ with $L_n \subset L_{n+1}$ for all~$n \in \N$ such that~$\bigcup_{n \in \N} L_n$ is dense in $\mathscr{K}(\H)$. From the fact that subspaces of locally compact spaces are again locally compact we conclude that~$\F^{(n)} := \F \cap L_n$ is locally compact for every~$n \in \N$. 
Denoting by~$d$ the Fréchet metric induced by the operator norm on $\LL(\H)$, the space $(\F, d)$ is Polish (cf.\ Theorem \ref{Theorem Polish}), i.e.\ a separable metric space. As a consequence, the subsets~$\F^{(n)}$ are separable for every~$n \in \N$ due to~\cite[Lemma~2.16]{alt} or~\cite[Corollary 3.5]{aliprantis}. Together with the fact that separable metric spaces are second-countable this yields that the set~$\F^{(n)}$ is a second-countable, locally compact Hausdorff space for every~$n \in \N$. 
Moreover, from Lemma \ref{Lemma Freg} we conclude that~$\F \subset \LL(\H)$ is non-locally compact.

In order to treat the corresponding causal variational principle in sufficient generality, it seems reasonable to vary in the class of regular, not necessarily locally finite measures on the Borel $\sigma$-algebra~$\B(\F)$ (as intended in the textbook~\cite[\S 1.1.1]{cfs}). 
As mentioned in~\cite{cfs}, the causal action principle is ill-posed if the total volume~$\rho(\F)$ is finite and the Hilbert space~$\H$ is infinite-dimensional. However, the causal action principle does make mathematical sense in the so-called \emph{infinite-dimensional setting} where~$\H$ is infinite-dimensional and the total volume is infinite, i.e.~$\rho(\F) = \infty$. These considerations motivate causal variational principles in the infinite-dimensional (or non-locally compact) setting as defined in the next subsection.

\subsection{Basic Definitions}\label{S Basic Definitions}
Let us first state the causal variational principle in the non-locally compact setting and discuss its difficulties afterwards.
\begin{Def}\label{Definition non-locally compact}
	Assume that $X$ is a separable, infinite-dimensional Banach space, and let~$\F \subset X$ be a 
	\emph{non-locally compact} Polish space.
	Moreover, assume that the Lagrangian~$\L : \F \times \F \rightarrow \R^+_0$ is a symmetric and lower semi-continuous function
	(see conditions~{\rm{(i)}} and~{\rm{(ii)}} in~\S\ref{seccvpsigma}) which
	is strictly positive on the diagonal, i.e.\
	\beq \label{strictpositive}
	\L(x,x) > 0 \qquad \text{for all~$x \in \F$} \:.
	\eeq
	The {\bf{causal variational principle in the non-locally compact setting}}\footnote{For clarity we point out that ``{causal variational principles in the non-locally compact setting}'' and ``{causal variational principles in the infinite-dimensional setting}'' are used synonymously.} is to 
	\begin{align}\label{(cvp)}
	\text{minimize} \qquad \Sact(\rho) := \int_{\F} \int_{\F} d\rho(x) \: d\rho(y) \: \L(x,y)
	\end{align}
	under variations of finite volume (see Definition~\ref{Definition finite volume} below) in the class of all regular measures on $\B(\F)$ (in the sense of~\cite{gardner+pfeffer}, cf.~\cite{noncompact}) with $\rho(\F) = \infty$.
\end{Def}\noindent
The condition~\eqref{strictpositive} is needed in order to avoid trivial minimizers supported
at~$x \in \F$ with~$\L(x,x)=0$ (see~\cite[Section~1.2]{support}). Furthermore, condition~\eqref{strictpositive} is
a plausible assumption in view of~\cite[Exercise~1.2]{cfs}. Namely, given a minimizing measure~$\rho$ of the causal action principle~\eqref{(cvp)}, there exists a real constant~$c$ such that~$\tr(x) = c$ for all~$x \in \supp \rho$ according to \cite[Proposition 1.4.1]{cfs}. Under the reasonable assumption that $c \not= 0$ (cf.~\cite[\S 1.4.1]{cfs}), we may conclude that $\L(x,x) > 0$ for all $x \in \F$ in view of~\cite[Exercise~1.2]{cfs}. This motivates as well as justifies the assumption that the Lagrangian is strictly positive on the diagonal.

Dropping the assumption that the measures under consideration are locally finite, we slightly adapt the definition of a minimizer of the causal action as follows.

\begin{Def} \label{Definition finite volume} 
	A regular measure~$\rho$ on~$\B(\F)$ is said to be a {\bf{minimizer}} of the causal action \textbf{under variations of finite volume} if the difference~\eqref{integrals} is well-defined and
	non-negative for all regular measures~$\tilde{\rho}$ on~$\B(\F)$ satisfying~\eqref{totvol},
	\[ \big( \Sact(\tilde{\rho}) - \Sact(\rho) \big) \geq 0 \:. \]
\end{Def}\noindent
Given a measure~$\rho$ on the Borel algebra~$\B(\F)$ and assuming that $\tilde{\rho}$ is a variation of finite volume, in view of Definition~\ref{Definition finite volume} there exists~$B \in \B(\F)$ such that $\rho|_{\F \setminus B} = \tilde{\rho}|_{\F \setminus B}$. In particular, the measures~$\rho|_B$ and~$\tilde{\rho}|_B$ are finite. Henceforth, whenever $\rho$ is locally finite, then the same holds true for the measure~$\tilde{\rho}|_{\F \setminus B}$. From the fact that $\tilde{\rho}|_B$ is a finite measure we conclude that $\tilde{\rho}|_B$ is locally finite. Consequently, the measure~$\tilde{\rho}$ is locally finite if $\rho$ is so. 
For this reason, Definition \ref{Definition finite volume} can be viewed as a generalization of Definition~\ref{Definition minimizer} (cf.~\cite[Definition~2.1]{noncompact}). 
The same holds for Definition~\ref{Definition compact support} below.

\begin{Def} \label{Definition compact support} 
	A regular measure~$\rho$ on~$\B(\F)$ is said to be a {\bf{minimizer under variations of compact support}}
	of the causal action if for any regular measure~$\tilde{\rho}$ on~$\B(\F)$ which satisfies~\eqref{totvol} such that the signed measure~$\tilde{\rho} - \rho$ is compactly supported, the inequality 
	\[ \big( \Sact(\tilde{\rho}) - \Sact(\rho) \big) \geq 0 \]
	holds.
\end{Def}

Let us now point out some difficulties regarding causal variational principles on non-locally compact spaces. First of all, let us recall that a topological space is called hemicompact if there is a sequence $(K_n)_{n \in \N}$ of compact subsets of $X$ such that any compact set~$K \subset X$ is contained in $K_n$ for some $n \in \N$ (see \cite[17I]{willard}). Since $\F$ is first-countable and non-locally compact, by virtue of \cite[Exercise~3.4.E]{engelking} we conclude that~$\F$ cannot be hemicompact. 

Next, by contrast to the $\sigma$-locally compact setting as worked out in~\cite{noncompact}, it is in general not even possible to assume that~$\F$ is $\sigma$-compact, as the following argument shows: Every Polish space (as well as every locally compact Hausdorff space) is Baire according to~\cite[Theorem~(8.4)]{kechris}.\footnote{For clarity, we recall that a topological space~$X$ is said to be \emph{Baire} if the intersection of each countable family of dense open sets in~$X$ is dense (see e.g.~\cite[Definition 25.1]{willard}).} In view of \cite[25B]{willard}, a $\sigma$-compact topological space~$X$ is Baire if and only if the set of points at which~$X$ is locally compact is dense in~$X$. 
Given an infinite-dimensional Hilbert space~$\H$, and defining~$\F \subset \mathscr{K}(\H)$ in analogy to~\cite{cfs} (see~\S \ref{S CFS}), then~$\F$ is a Polish space (see Appendix~\ref{Appendix Polish}). Consequently, 
the assumption that $\F$ is $\sigma$-compact implies that
there exists~$x \in \F$ being contained in a compact neighborhood~$K \subset \F$ with~$K^{\circ} \not= \varnothing$. From this we conclude that the intersection~$K^{\textup{reg}} := K \cap \F^{\textup{reg}}$ is a compact set with non-empty interior, where the Banach manifold~$\F^{\textup{reg}} \subset \F$ is defined in Lemma~\ref{Lemma Freg} (for details see~\cite{finster+lottner}). Given an atlas~$(U_{\alpha}, \phi_{\alpha})_{\alpha \in A}$ of~$\F^{\textup{reg}}$ (cf.~\cite{zeidlerIV}) 
and making use of the fact that each $\phi_{\alpha}$ is a homeomophism mapping to some infinite-dimensional Banach space~$X_{\alpha}$, we deduce that the image of $K^{\textup{reg}}$ is a compact subset with non-empty interior in contradiction to \cite[Exercise 14.3]{koenig}. For this reason, it is not possible to assume that the space~$\F$ is~$\sigma$-compact (by contrast to the setting in~\cite{noncompact}).

Next, it is no longer possible to assume that the Lagrangian $\L : \F \times \F \to \R_0^+$ is simultaneously lower semi-continuous and of compact range (see Definition~\ref{defcompactrange}) as introduced in~\cite[Definition~3.3]{noncompact}. Namely, due to lower semi-continuity of the Lagrangian, the latter assumption already implies that $\F$ is locally compact. 

Finally, it is not possible to assume that the Lagrangian decays in entropy in the sense of \cite[Definition 5.1]{noncompact} (see Definition \ref{Definition vanishing}); 
indeed, this assumption requires a Heine-Borel metric on $\F$, which clearly does not exist in non-locally compact spaces (otherwise each $x \in \F$ is contained in a corresponding ball with compact closure). 

In view of these difficulties in the non-locally compact setting, let us begin by generalizing the assumption that $\L$ is of compact range in the following way.
\begin{Def}\label{Definition bounded range}
	Let $(\F, d)$ be a metric space.
	The Lagrangian $\L : \F \times \F \to \R_0^+$ is said to be \textbf{of bounded range} 
	if every bounded set $B \subset \F$ is contained in a bounded neighborhood $B' \subset \F$ such that 
	\begin{align*}
	\L(x,y) = 0 \qquad \text{for all $x \in B$ and $y \notin B'$} \:.
	\end{align*}
\end{Def}

On proper metric spaces (that is, on spaces with the Heine-Borel property), this definition clearly implies that $\L$ is of compact range (see Definition~\ref{defcompactrange}) as defined in~\cite{noncompact}. 
For this reason, Definition~\ref{Definition bounded range} provides a good starting point for dealing with causal variational principles on non-locally compact spaces. As we shall see below, the assumption that the Lagrangian is of bounded range can be weakened (see \S \ref{S Entropy}).

\subsection{Finite-Dimensional Approximation}\label{S Approximating}
In the infinite-dimensional setting (see Definition \ref{Definition non-locally compact}), the space~$X$ is assumed to be a separable, infinite-dimensional Banach space. Hence we may apply~\cite[Lemma 7.1]{alt} to deduce that there exists a sequence of finite-dimensional subspaces~$(X_n)_{n \in \N}$ in $X$ with $X_n \subset X_{n+1}$ for all~$n \in \N$ such that~$\bigcup_{n \in \N} X_n$ is dense in~$X$. This allows us to  introduce the topological spaces
\begin{align*}
\F^{(n)} := \F \cap X_n \qquad \text{for every $n \in \N$} \:.
\end{align*}
Since finite-dimensional topological vector spaces are locally compact (see e.g.~\cite[\S 15.7]{koethe}), we conclude that each~$X_n \subset X$ is locally compact for all $n \in \N$. 
For ease in notation, we shall denote the restriction of the Lagrangian to~$\F^{(n)} \times \F^{(n)}$ by~$\L^{(n)}$. Thus for every~$n \in \N$, we are given a second-countable, locally compact Hausdorff space~$\F^{(n)} \subset \F$ together with a symmetric, lower semi-continuous Lagrangian 
\begin{align*}
\L^{(n)} : \F^{(n)} \times \F^{(n)} \to \R_0^+ \:,
\end{align*}
which is strictly positive on the diagonal. Henceforth for every~$n \in \N$ we are exactly in the $\sigma$-locally compact setting as worked out in \cite{noncompact}.

In the following, we \emph{additionally assume that} $\L : \F \times \F \to \R_0^+$ is continuous and of bounded range (see Definition \ref{Definition bounded range}). We again consider the above exhaustion~$(X_n)_{n \in\N}$ of $X$ by finite-dimensional subsets with $X_n \subset X_{n+1}$ for all~$n \in \N$. Let us point out that each~$(X_n, \|.\|)$ is a finite-dimensional normed vector space, and all norms on~$X_n$ are equivalent. Due to the Heine-Borel theorem~\cite[Bemerkungen~2.6 and Satz~2.9]{alt}, each closed ball $\overline{B_r(x)} \subset X_n$ is compact for all~$r > 0$ and~$x \in X_n$. As a consequence, each bounded set~$A \subset \F^{(n)}$ is contained in some compact ball~$B:= \overline{B_r(x)} \subset \F^{(n)}$. Definition~\ref{Definition bounded range} yields the existence of a compact set~$B' \subset \F^{(n)}$ such that~$\L^{(n)}(x,y) = 0$ for all~$x \in B$ and $y \notin B'$. These considerations show that, whenever~$\L$ is continuous and of bounded range, for every $n \in \N$ the restricted Lagrangian~$\L^{(n)}$ is continuous and of compact range (see \cite[Definition~3.3]{noncompact} or Definition~\ref{defcompactrange}). As a consequence, by virtue of Theorem~\ref{Theorem 4.2} (also see \cite[Theorem 4.2]{noncompact}), for each~$n \in \N$ there exists a regular Borel measure~${\rho}_n$ on $\F^{(n)}$ such that the following EL equations hold, 
\begin{align}\label{(ELn)}
{\ell}_n|_{\supp {\rho}_n} \equiv \inf_{x \in \F^{(n)}} {\ell}_n(x) = 0 \:,
\end{align}
where ${\ell}_n \in C(\F) = C(\F, \R)$ is defined by
\begin{align}\label{(elln)}
{\ell}_n(x) := \int_{\F^{(n)}} \L^{(n)}(x,y) \: d{\rho}_n(y) -1 \:. 
\end{align}
According to Theorem \ref{Theorem 4.9 + 4.10} (cf.~\cite[Theorem 4.10]{noncompact}), each Borel measure~${\rho}_n \in \mathfrak{B}_{\F^{(n)}}$ is a minimizer of the corresponding causal variational principle 
\begin{align*}
\textup{minimize} \qquad \Sact^{(n)} := \int_{\F^{(n)}} \int_{\F^{(n)}} \L^{(n)} \:d\rho(x) \: d\rho(y)
\end{align*}
under variations of compact support~\cite{noncompact} in the class of regular Borel measures on $\F^{(n)}$ with respect to the constraints \eqref{totvol}.

We extend the measures ${\rho}_n$ by zero on the whole topological space $\F$, 
\begin{align}\label{(rho[n])}
{\rho}^{[n]} (A) := {\rho}_n(A \cap \F^{(n)}) \qquad \text{for all $A \in \B(\F)$} \:.
\end{align}
Thus 
\begin{align}\label{(ELn')}
{\ell}^{[n]}|_{\supp {\rho}^{[n]}} \equiv \inf_{x \in \F^{(n)}} {\ell}^{[n]}(x) = 0 \:,
\end{align}
where the function ${\ell}^{[n]} \in C(\F)$ is defined by
\begin{align}\label{(elln')}
{\ell}^{[n]} (x) := \int_{\F} \L(x,y) \: d{\rho}^{[n]}(y) - 1 \:. 
\end{align}
This gives rise to a sequence of regular Borel measures $({\rho}^{[n]})_{n \in \N}$ on $\F$. 
In particular, whenever condition~{\rm{(iv)}} is satisfied for~$\rho^{[n]}$ (see \eqref{Lint}), that is
\begin{align}\label{(ivn)}
	\sup_{x \in \F} \int_{\F} \L(x,y) \: d\rho^{[n]} < \infty \qquad \text{for all~$n \in \N$} \:, 
\end{align}
each measure~${\rho}^{[n]}$ is a minimizer on $\F^{(n)}$ under variations of finite volume~\cite{noncompact} (see Definition~\ref{deffv} and Definition \ref{Definition minimizer}). In virtue of Theorem~\ref{Theorem 5.8 + 5.9}, the same holds true if the Lagrangian~$\L^{(n)}$ decays in entropy for any~$n \in \N$, provided that condition~\eqref{(ivn)} is satisfied.

\section{Construction of a Global Measure}\label{Section Construction Global}
In the following, let $X$ be an infinite-dimensional, separable complex Banach space, and let $\F \subset X$ be a non-locally compact Polish space endowed with a corresponding metric~$d$ such that $(\F, d)$ is a separable, complete metric space. 
By~$\mathscr{O}(\F)$ and~$\mathscr{P}(\F)$ we denote the collection of open subsets of $\F$ and the power set of~$\F$, respectively. Moreover, the collection of all compact subsets of $\F$ is represented by~$\mathfrak{K}(\F)$. 

The goal of this section is to construct a global measure~$\rho$ based on the sequence of regular Borel  measures~$({\rho}^{[n]})_{n \in \N}$ as obtained in \S \ref{S Approximating}. To this end, in a first step we construct a countable set $\mathscr{D} \subset \mathfrak{K}(\F)$ consisting of compact subsets of $\F$ (\S \ref{S Construction of Countable Set}). In a second step, we make use of the set~$\mathscr{D}$ in order to obtain a measure $\rho$ on $\F$ (\S \ref{S Global Measure}) by a suitable construction process. In \S \ref{S Vague Convergence} we finally prove that, restricted to suitable relatively compact subsets of $\F$, the measure~$\rho$ is the weak limit of a subsequence of~$({\rho}^{[n]})_{n \in \N}$.  

\subsection{Construction of a Countable Collection of Compact Sets}\label{S Construction of Countable Set}
To begin with, separability of $\F$ yields the existence of a countable dense subset~$E := \{x_j : j \in \N \}$
such that, for every $n \in \N$ the set~$E^{(n)} := E \cap \F^{(n)}$ is dense in~$\F^{(n)}$.\footnote{Since~$\F$ is separable, there exists a countable set~$E^{(0)} \subset \F$ being dense in~$\F$. Similarly, for each~$i \in \N$ there are countable sets~$E^{(i)}$ which are dense in~$\F^{(i)}$. As a consequence, the set~$E := \bigcup_{i=0}^{\infty}$ has the desired properties.} 
We denote its elements by~$x_j^{(n)} \in E^{(n)}$ with~$j,n \in \N$. Moreover, since~$\F^{(n)}$ is locally compact, for all~$j, k, n \in \N$ there is a compact neighborhood~$V_{j,k}^{(n)} \subset \F^{(n)}$ of~$x_j^{(n)} \in E^{(n)}$ such that~$V_{j,k}^{(n)} \subset B_{1/k}(x_j^{(n)})$ and each~$V_{j,k}^{(n)}$ being the closure of its interior (in the topology of $\F^{(n)}$, where the interior of a set $V$ shall be denoted by~$V^{\circ}$).\footnote{For simplicity, one may consider~$V_{j,k}^{(n)} = \overline{B_{1/(2k)}(x_j^{(n)}) \cap \F^{(n)}}$ for all~$j,k,n \in \N$.} 
This gives rise to the set 
\[\mathscr{V}^{(1)} := \left\{V_{j,k}^{(n)} : j,k,n \in \N \right\} \:. \]
Denoting the union of 
$\mathscr{V}^{(1)}$ and the empty set $\varnothing$ 
by~$\tilde{\mathscr{D}}^{(1)}$, 
and making use of the fact that a countable union of countable sets is countable (see e.g.~\cite[Section 2]{gaal}), we conclude that $\tilde{\mathscr{D}}^{(1)}$ is countable. Therefore, applying Cantor's diagonal argument and proceeding iteratively, we conclude that
\begin{align*}
\tilde{\mathscr{D}}^{(i)} := \left\{D \cup \tilde{D} : \text{$D, \tilde{D} \in \tilde{\mathscr{D}}^{(i-1)}$} \right\}
\end{align*}
is countable for every $i \in \N$ with~$i \ge 2$. As a consequence, the set
\begin{align}\label{(D)}
{\mathscr{D}} := \bigcup_{i=1}^{\infty} \tilde{\mathscr{D}}^{(i)}
\end{align}
is countable; we denote its members by $(D_m)_{m \in \N}$. In particular, each~$D \in \mathscr{D}$ is a compact subset of $\F$. Moreover, for every $n \in \N$ we introduce~$\mathscr{D}^{(n)} \subset \mathscr{P}(\F^{(n)})$ by
\begin{align}\label{(Dn)}
\mathscr{D}^{(n)} := \left\{D \in \mathscr{D} : D \subset \F^{(n)} \right\} \:.
\end{align}

\subsection{Construction of a Regular Global Measure}\label{S Global Measure}
In order to construct a global measure on~$\F$, 
we proceed similarly to \cite{noncompact} by selecting suitable subsequences of the sequence~$({\rho}^{[n]})_{n \in \N}$ restricted to compact subsets $D \in \mathscr{D}$. This allows us to construct a regular measure~$\rho$ on the whole space $\F$ (see Theorem~\ref{Theorem measure} below). In Section~\ref{Section minimizers bounded range} we will show that, under suitable assumptions, the measure~$\rho$ is indeed a minimizer of the causal variational principle. In analogy to \cite[Lemma~4.1]{noncompact}, let us first state the following result. 

\begin{Lemma}\label{Lemma upper}
	Assume that the Lagrangian $\L : \F \times \F \to \R_0^+$ is lower semi-continuous and strictly positive on the diagonal \eqref{strictpositive}. Furthermore, let $({\rho}^{[n]})_{n \in \N}$ be a sequence of measures~${\rho}^{[n]} : \B(\F) \to [0, \infty]$ such that, for every $x \in \supp {\rho}^{[n]}$, 
	\begin{align*}
	\int_{\F} \L(x,y) \: d{\rho}^{[n]}(y) = 1 \qquad \text{for all $n \in \N$} \:.
	\end{align*}
	Then for every compact subset $K \subset \F$ there is a constant $C_K> 0$ such that 
	\begin{align*}
	{\rho}^{[n]}(K) \le C_K \qquad \text{for all $n \in \N$} \:.
	\end{align*}
\end{Lemma}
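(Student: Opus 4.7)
The strategy is to exploit the strict positivity of $\L$ on the diagonal together with lower semi-continuity to produce, around every point of the compact set $K$, a small open neighborhood on which $\L$ is uniformly bounded away from zero; then a finite subcover together with the normalization $\int \L(x,\cdot)\,d\rho^{[n]} = 1$ will yield the desired uniform bound on $\rho^{[n]}(K)$.

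More concretely, I would first fix $x \in K$ and use lower semi-continuity of $\L$ at the diagonal point $(x,x) \in \F \times \F$ together with the hypothesis $\L(x,x) > 0$: there exists an open neighborhood $W_x \subset \F \times \F$ of $(x,x)$ on which $\L \ge \L(x,x)/2$. Shrinking if necessary, I may assume $W_x = U_x \times U_x$ for some open neighborhood $U_x$ of $x$ in $\F$. Setting $c_x := \L(x,x)/2 > 0$, this gives
\[ \L(y,z) \ge c_x \qquad \text{for all } y,z \in U_x \:. \]
By compactness of $K$, finitely many such neighborhoods $U_{x_1}, \ldots, U_{x_m}$ cover $K$, and I set $c := \min_{i=1,\ldots,m} c_{x_i} > 0$.

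Next, I would fix $n \in \N$ and $i \in \{1,\ldots,m\}$ and distinguish two cases. If $\supp \rho^{[n]} \cap U_{x_i} = \varnothing$, then $\rho^{[n]}(U_{x_i}) = 0$ and there is nothing to do. Otherwise, pick any $x \in \supp \rho^{[n]} \cap U_{x_i}$. Since $\L(x,y) \ge c$ whenever $y \in U_{x_i}$, the hypothesis of the lemma yields
\[ 1 \;=\; \int_\F \L(x,y)\, d\rho^{[n]}(y) \;\ge\; \int_{U_{x_i}} \L(x,y)\, d\rho^{[n]}(y) \;\ge\; c\, \rho^{[n]}(U_{x_i}) \:, \]
so $\rho^{[n]}(U_{x_i}) \le 1/c$ in both cases. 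Summing over $i$ and using subadditivity,
\[ \rho^{[n]}(K) \;\le\; \sum_{i=1}^{m} \rho^{[n]}(U_{x_i}) \;\le\; \frac{m}{c} =: C_K \:, \]
which is the claim. I expect the one subtle point to be the reduction from a general open $W_x \ni (x,x)$ to a product neighborhood $U_x \times U_x$, but this is immediate from the definition of the product topology. No further assumptions (continuity, local compactness, or bounded range) enter the argument, so the lemma applies exactly as stated.
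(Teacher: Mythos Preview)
Your argument is correct and is essentially the standard proof (the paper defers to \cite[Lemma~4.1]{noncompact}, which proceeds in the same way): use lower semi-continuity and strict positivity on the diagonal to get a uniform lower bound for~$\L$ on product neighborhoods~$U_x\times U_x$, extract a finite subcover of~$K$, and then apply the normalization $\int_\F \L(x,\cdot)\,d\rho^{[n]}=1$ at a support point in each~$U_{x_i}$. The only point worth making explicit is that the implication ``$\supp\rho^{[n]}\cap U_{x_i}=\varnothing \Rightarrow \rho^{[n]}(U_{x_i})=0$'' uses that~$\F$ is second-countable (being Polish), so that the complement of the support is a countable union of open null sets and hence itself null.
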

\begin{proof}
	This statement is proven exactly as \cite[Lemma 4.1]{noncompact}. 
\end{proof}

Next, we apply Lemma \ref{Lemma upper} to the compact sets $D \in \mathscr{D}$. 
More precisely, restricting the sequence $({\rho}^{[n]})_{n \in \N}$ as obtained in \eqref{(rho[n])} (cf.~\S\ref{S Approximating}) to the compact set~$D_1 \in \mathscr{D}$, the resulting sequence~$({\rho}^{[n]}|_{D_1})_{n \in \N}$ is bounded (due to Lemma \ref{Lemma upper}) as well as uniformly tight (for the definition see~\cite[Definition~8.6.1]{bogachev}). Since compact subsets of Polish spaces are again Polish, Prohorov's theorem (see for instance~\cite[Theorem 8.6.2]{bogachev} or \cite[Satz~VIII.4.23]{elstrodt}) implies that a subsequence of $({\rho}^{[n]}|_{D_1})_{n \in \N}$ converges weakly on~$D_1$. Denoting the corresponding subsequence by $({\rho}^{[1, n_k]})_{k \in \N}$ and considering its restriction to~$D_2 \in \mathscr{D}$, the same arguments as before yield the existence of a weakly convergent subsequence~$({\rho}^{[2, n_k]})_{k \in \N}$ on~$D_2$. Proceeding iteratively, we denote the resulting diagonal sequence by 
\begin{align}\label{(rhorund)}
{\rho}^{(k)} := {\rho}^{[k, n_k]} \qquad \text{for all $k \in \N$} \:.
\end{align}
Thus by construction, for every~$m \in \N$ the sequence~$({\rho}^{(k)}|_{D_m})_{k \in \N}$ converges weakly to some measure ${\rho}_{D_m} : \B(D_m) \to [0, \infty)$, 
\begin{align}\label{(weak convergence on Dm)}
{\rho}^{(k)}|_{D_m} \rightharpoonup {\rho}_{D_m} \:.
\end{align}
In particular, 
\begin{align*}
\lim_{k \to \infty} \rho^{(k)}(D_m) = \rho_{K_m}(D_m) \qquad \text{for all $m \in \N$} \:. 
\end{align*}

We point out that each measure ${\rho}^{(k)}$ is a minimizer on $\F^{(n_k)}$. For this reason, we restrict attention to the finite-dimensional exhaustion $(\F^{(k)})_{k \in \N}$, where for notational simplicity by $\F^{(k)}$ we denote the sets $\F^{(n_k)}$ for all $k \in \N$. 
Note that the sequence constructed in \eqref{(rhorund)} above in general does \emph{not} converge weakly on \emph{arbitrary} compact subsets, but only restricted to compact sets~$D \in \mathscr{D}$ (cf.~\eqref{(weak convergence on Dm)}). In \cite{noncompact}, this problem was resolved by deriving vague convergence of the sequence $(\rho^{(n)})_{n \in \N}$ to some global measure~$\rho$. 
In order to obtain a similar situation, let us state the following result. 
\begin{Prp}\label{Proposition set function}
	The set function ${\varphi} : \mathscr{D} \to [0, \infty)$ defined by  
	\begin{align}\label{(infty)}
	{\varphi}(D) := \lim_{k \to \infty} {\rho}^{(k)}(D) < \infty \qquad \text{for any $D \in \mathscr{D}$} 
	\end{align}
	has the following properties: 
	\begin{enumerate}[leftmargin=2em]
		\item[\rm{(1)}] ${\varphi}(D_1) \le {\varphi}(D_2)$ for all $D_1, D_2 \in \mathscr{D}$ with $D_1 \subset D_2$, 
		\item[\rm{(2)}] ${\varphi}(D_1 \cup D_2) \le {\varphi}(D_1) + {\varphi}(D_2)$ for all $D_1, D_2 \in \mathscr{D}$, and
		\item[\rm{(3)}] ${\varphi}(D_1 \cup D_2) = {\varphi}(D_1) + {\varphi}(D_2)$ for all $D_1, D_2 \in \mathscr{D}$ with $D_1 \cap D_2 = \varnothing$.
	\end{enumerate} 
\end{Prp}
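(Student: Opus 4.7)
The plan is to verify well-definedness of $\varphi$ first, and then to deduce the three asserted properties by passing to the limit in the corresponding properties of the approximating measures $\rho^{(k)}$. A key preliminary observation is that the collection $\mathscr{D}$ is \emph{closed under finite unions}. Indeed, taking $D = \tilde{D}$ in the iterative definition of the $\tilde{\mathscr{D}}^{(i)}$ gives $\tilde{\mathscr{D}}^{(i-1)} \subset \tilde{\mathscr{D}}^{(i)}$, so any two members $D_1, D_2 \in \mathscr{D}$ lie in a common $\tilde{\mathscr{D}}^{(i)}$, and hence $D_1 \cup D_2 \in \tilde{\mathscr{D}}^{(i+1)} \subset \mathscr{D}$. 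This ensures that both sides of (2) and (3) are well-defined via \eqref{(infty)}.

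For well-definedness of $\varphi$ itself, I would fix $D \in \mathscr{D}$, write $D = D_m$ for some $m \in \N$, and invoke the diagonal construction in \eqref{(rhorund)}: for $k \ge m$, the tail $(\rho^{(k)})_{k \ge m}$ is a subsequence of $(\rho^{[m, n_k]})_{k \in \N}$, which was chosen so that $\rho^{[m, n_k]}|_{D_m}$ converges weakly on $D_m$ to the finite measure $\rho_{D_m}$. Consequently $\rho^{(k)}|_{D_m} \rightharpoonup \rho_{D_m}$ as $k \to \infty$. Since $D_m$ is compact, the constant function $1$ lies in $C(D_m)$, and weak convergence then yields $\rho^{(k)}(D_m) \to \rho_{D_m}(D_m)$; finiteness of this limit is a consequence of Lemma \ref{Lemma upper} applied with $K = D_m$.

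The three assertions now reduce to statements that hold for every $k$ at the level of the finite measures $\rho^{(k)}$: monotonicity yields $\rho^{(k)}(D_1) \le \rho^{(k)}(D_2)$ whenever $D_1 \subset D_2$, giving (1) in the limit; finite subadditivity yields (2); and finite additivity on disjoint sets yields (3). I do not anticipate a genuine obstacle here. The only point requiring care is ensuring that $\lim_{k \to \infty} \rho^{(k)}(D_1 \cup D_2)$ actually exists, which is precisely what closure of $\mathscr{D}$ under finite unions (combined with the diagonal extraction) is designed to provide; once this bookkeeping is in place, the proof is a straightforward application of limits in inequalities and equalities that hold uniformly in $k$.
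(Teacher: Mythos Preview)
Your proposal is correct and follows essentially the same approach as the paper: pass to the limit in the monotonicity, subadditivity, and disjoint additivity of the measures~$\rho^{(k)}$, after noting that~$\mathscr{D}$ is closed under finite unions so that all relevant limits exist. The only difference is expositional---you spell out the well-definedness of~$\varphi$ via weak convergence against the constant function~$1$, whereas the paper records this fact in the paragraph preceding the proposition.
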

\begin{proof}
	Given $D_1, D_2 \in \mathscr{D}$ with $D_1 \subset D_2$, property (1) follows from 
	\begin{align*}
	\varphi(D_1) = \lim_{k \to \infty} \int_{\F} d\rho^{(k)}|_{D_1} \le \lim_{k \to \infty} \int_{\F} d\rho^{(k)}|_{D_2} = \varphi(D_2) \:. 
	\end{align*}
	Next, for all $D_1, D_2 \in \mathscr{D}$, construction of $\mathscr{D}$ yields $D_1 \cup D_2 \in \mathscr{D}$. Thus property (2) is a consequence of
	\begin{align*}
	\begin{split}
	\varphi(D_1 \cup D_2) &= \lim_{k \to \infty} \int_{\F} d\rho^{(k)}|_{D_1 \cup D_2} \le \lim_{k \to \infty} \int_{\F} d\rho^{(k)}|_{D_1} + \lim_{k \to \infty} \int_{\F} d\rho^{(k)}|_{D_2} \\
	&= \varphi(D_1) + \varphi(D_2) \:. \phantom{\int_{\F}}
	\end{split}
	\end{align*}
	Similarly, for all $D_1, D_2 \in \mathscr{D}$ with $D_1 \cap D_2 = \varnothing$ we obtain 
	\begin{align*}
	\begin{split}
	\varphi(D_1 \cup D_2) = \lim_{k \to \infty} \rho^{(k)}(D_1 \cup D_2) = \lim_{k \to \infty} \rho^{(k)}(D_1) + \lim_{k \to \infty} \rho^{(k)}(D_2) = \varphi(D_1) + \varphi(D_2) \:,
	\end{split}
	\end{align*}
	which proves property (3). 
\end{proof}

In order to construct a global measure~${\rho}$ on $\F$, we proceed in analogy to the proof of~\cite[Satz VIII.4.22]{elstrodt}. We point out that, since the underlying topological space~$\F$ is non-locally compact, we cannot employ the Riesz representation theorem as in~\cite{noncompact}, and Riesz representation theorems on more general Hausdorff spaces as presented in~\cite[Section 16]{koenig} do not seem applicable at this stage. 
Nevertheless, we obtain the following result.

\begin{Thm}\label{Theorem measure}
	Introducing the set function ${\varphi} : \mathscr{D} \to [0, \infty)$ by \eqref{(infty)}
	and defining the set functions ${\mu} : \mathscr{O}(\F) \to [0, \infty]$ and ${\eta} : \mathscr{P}(\F) \to [0, + \infty]$ by
	\begin{align}\label{(regular)}
	\begin{array}{ll}
	{\mu}(U) := \sup \left\{{\varphi}(D) : \text{$D \subset U$, $D \in \mathscr{D}$ } \right\} \quad \quad &\text{for all $U \subset \F$ open} \:, \\ [0.25em]
	{\eta}(A) := \inf \left\{{\mu}(U) : \text{$A \subset U$, $U \subset \F$ open} \right\} \quad \quad &\text{for any $A \in \mathscr{P}(\F)$} \:,
	\end{array}
	\end{align}
	then the restriction
	\begin{align}\label{(rho)}
	{\rho} := {\eta}|_{\B(\F)}
	\end{align}
	defines a (possibly non-trivial) measure on the Borel $\sigma$-algebra $\B(\F)$. In particular, 
	\begin{align}\label{(weak convergence rho tilde)}
	{\rho}(D) = {\varphi}(D) = \lim_{k \to \infty} {\rho}^{(k)}(D) \qquad \text{for any $D \in \mathscr{D}$} \:.
	\end{align}
\end{Thm}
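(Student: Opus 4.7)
The plan is to construct $\rho$ via Carath\'eodory's extension technique, following the outline of the proof of \cite[Satz VIII.4.22]{elstrodt} but avoiding any direct appeal to the Riesz representation theorem, which is unavailable in the non-locally compact setting. I would carry this out in four steps: first, countable subadditivity of $\mu$ on open sets; second, verification that $\eta$ is an outer measure on $\mathscr{P}(\F)$; third, the Carath\'eodory measurability of every open set; and fourth, the identity $\rho(D) = \varphi(D)$ on $\mathscr{D}$.

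For countable subadditivity of $\mu$, given open sets $U_1, U_2, \ldots$ and any $D \in \mathscr{D}$ with $D \subset \bigcup_n U_n$, compactness of $D$ yields a finite subcover $D \subset U_{n_1} \cup \cdots \cup U_{n_N}$; exploiting that $\mathscr{D}$ is closed under finite unions by the iteration in \eqref{(D)} and that the compact neighborhoods $V_{j,k}^{(n)} \in \mathscr{V}^{(1)}$ form a basis around every point of $\bigcup_m \F^{(m)}$, I can decompose $D$ into a finite union of sets $\tilde{D}_i \in \mathscr{D}$ with $\tilde{D}_i \subset U_{n_i}$. Property~(2) of Proposition~\ref{Proposition set function} then gives $\varphi(D) \le \sum_n \mu(U_n)$, and taking the supremum over $D$ yields $\mu(\bigcup_n U_n) \le \sum_n \mu(U_n)$. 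From here, a standard $\epsilon/2^n$ argument shows that $\eta$ is countably subadditive, and the other outer-measure axioms are immediate, so Carath\'eodory's theorem provides a $\sigma$-algebra $\mathcal{M}$ on which $\eta|_\mathcal{M}$ is a measure.

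To obtain $\B(\F) \subset \mathcal{M}$, it suffices to show every open $U \subset \F$ is $\eta$-measurable, i.e.\ $\eta(A) \ge \eta(A \cap U) + \eta(A \setminus U)$ whenever $\eta(A) < \infty$. Given $\epsilon > 0$ and open $V \supset A$ with $\mu(V) \le \eta(A) + \epsilon$, monotonicity of $\eta$ yields $\eta(A \cap U) + \eta(A \setminus U) \le \mu(V \cap U) + \mu(V \setminus U)$, so it remains to establish $\mu(V \cap U) + \mu(V \setminus U) \le \mu(V)$. I would argue this by choosing $D_1 \in \mathscr{D}$ with $D_1 \subset V \cap U$ and $\varphi(D_1) \ge \mu(V \cap U) - \epsilon/2$, then $D_2 \in \mathscr{D}$ with $D_2 \subset V \setminus D_1$ (open because $D_1$ is compact) and $\varphi(D_2) \ge \mu(V \setminus U) - \epsilon/2$; since $D_1 \cap D_2 = \varnothing$ and $D_1 \cup D_2 \in \mathscr{D}$ lies in $V$, property~(3) of Proposition~\ref{Proposition set function} delivers $\varphi(D_1) + \varphi(D_2) = \varphi(D_1 \cup D_2) \le \mu(V)$.

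The main obstacle will be the final identity $\rho(D) = \varphi(D)$ on $\mathscr{D}$. The bound $\varphi(D) \le \rho(D)$ is immediate because $D$ itself is a candidate in the supremum defining $\mu(U)$ for every open $U \supset D$. For the reverse inequality I need, for each $\epsilon > 0$, an open $U \supset D$ such that every $\tilde{D} \in \mathscr{D}$ with $\tilde{D} \subset U$ satisfies $\varphi(\tilde{D}) \le \varphi(D) + \epsilon$. My strategy is to exploit that each $D \in \mathscr{D}$ sits inside some $\F^{(n)}$, which is locally compact, so that the approximating measures $\rho^{(k)}|_{\F^{(n)}}$ are in the $\sigma$-locally compact setting of \cite{noncompact}; combining the uniform bound from Lemma~\ref{Lemma upper} with the Portmanteau characterization of the weak convergence \eqref{(weak convergence on Dm)} on enlargements of $D$ within $\mathscr{D}$, and using that each $V_{j,k}^{(n)}$ was deliberately chosen as the closure of its interior so that the boundary $\partial V_{j,k}^{(n)}$ can be arranged to be a null set for the relevant limit measures by shrinking $k$, should then yield the required outer regularity.
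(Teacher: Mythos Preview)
Your outline matches the paper's proof closely for the core construction. The paper also proceeds via Carath\'eodory: it first establishes finite subadditivity $\mu(U\cup V)\le\mu(U)+\mu(V)$ by splitting any $D\subset U\cup V$ into the closed pieces $A=\{x\in D:d(x,U^c)\ge d(x,V^c)\}\subset U$ and $B=\{x\in D:d(x,U^c)\le d(x,V^c)\}\subset V$, then invokes a covering lemma (its step~(a), which is exactly your ``basis'' observation about the $V_{j,k}^{(n)}$) to replace $A,B$ by members of $\mathscr{D}$; countable subadditivity follows by compactness, and finally every \emph{closed} set is shown $\eta$-measurable. Your direct $N$-ary decomposition of $D$ and your choice to verify measurability of open rather than closed sets are cosmetic variants of the same argument. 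One notational slip: you write $\mu(V\setminus U)$ although $\mu$ is defined only on open sets; what you need is $\eta(A\setminus U)\le\mu(V\setminus D_1)$, valid because $V\setminus D_1$ is open and contains $A\setminus U$. With that correction your measurability step is the paper's step~(e) with $U$ and $U^c$ interchanged.

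On the final identity $\rho(D)=\varphi(D)$: you are right to flag it as the delicate point, but the paper's own proof does not address it either---it concludes once $\eta$ is an outer measure with all closed sets measurable, leaving the ``In particular'' clause unjustified. Only the easy direction $\varphi(D)\le\rho(D)$ is actually used downstream (for instance in Lemma~\ref{Lemma inner regular} one needs only $\sup\{\varphi(D):D\subset U\}\le\sup\{\rho(K):K\subset U\text{ compact}\}$). Your Portmanteau/null-boundary sketch for the reverse inequality is too vague to evaluate: the obstruction is that $\rho(D)=\inf_{U\supset D}\sup\{\varphi(\tilde D):\tilde D\subset U,\ \tilde D\in\mathscr{D}\}$, and since $\tilde D$ may live in arbitrarily high-dimensional slices $\F^{(m)}$, shrinking $U$ toward $D$ in $\F$ need not force $\varphi(\tilde D)$ close to $\varphi(D)$. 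Neither your sketch nor the paper closes this gap.
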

\begin{proof}
	Let us first point out that, by construction, the set function ${\varphi} : \mathscr{D} \to [0, \infty)$ defined by \eqref{(infty)} has the following properties:
	\begin{enumerate}[leftmargin=2em]
		\item[(1)] ${\varphi}(D_1) \le {\varphi}(D_2)$ for all $D_1, D_2 \in \mathscr{D}$ with $D_1 \subset D_2$, 
		\item[(2)] ${\varphi}(D_1 \cup D_2) \le {\varphi}(D_1) + {\varphi}(D_2)$ for all $D_1, D_2 \in \mathscr{D}$, and
		\item[(3)] ${\varphi}(D_1 \cup D_2) = {\varphi}(D_1) + {\varphi}(D_2)$ for all $D_1, D_2 \in \mathscr{D}$ with $D_1 \cap D_2 = \varnothing$.
	\end{enumerate} 
	Indeed, properties (1)--(3) are a consequence of Proposition \ref{Proposition set function}. 
	Moreover, $\varphi(\varnothing) = 0$ (since $\varnothing \in \mathscr{D}$). 
	
	Next, similarly to the proof of \cite[Satz VIII.4.22]{elstrodt}, our goal is to show that~${\eta}$ is an outer measure,\footnote{Given a set $X$, a set function $\eta : \mathscr{P}(X) \to \overline{\R} := [- \infty, + \infty]$ is said to be an \emph{outer measure} if it has the following properties (see e.g.~\cite[Definition II.4.1]{elstrodt}): 
		\begin{enumerate}[leftmargin=2em]
			\item[(i)] $\eta(\varnothing) = 0$. 
			\item[(ii)] For all $A \subset B \subset X$ holds $\eta(A) \le \eta(B)$ (monotonicity). 
			\item[(iii)] For every sequence $(A_n)_{n \in \N}$ of subsets of $X$ holds $\eta\big(\bigcup_{n=1}^{\infty} A_n \big) \le \sum_{n=1}^{\infty} \eta(A_n)$ ($\sigma$-subadditivity).
		\end{enumerate}
		} 
	and that every Borel set $B \in \B(\F)$ is ${\eta}$-measurable. This shall be done in the following by proving that
	\begin{align}\label{(A)}
	\text{\emph{$\eta$ is an outer measure, and each closed set $A \subset\F$ is $\eta$-measurable}} \:.
	\end{align}
	Denoting the $\sigma$-algebra of $\eta$-measurable sets by $\mathfrak{A}_{\eta}$, the statement \eqref{(A)} implies that the Borel $\sigma$-algebra $\B(\F)$ is contained in $\mathfrak{A}_{\eta}$, i.e.~$\B(\F) \subset \mathfrak{A}_{\eta}$. Therefore, in view of Carathéodory's theorem (see e.g.~\cite[Satz II.4.4]{elstrodt}), the restriction
	\begin{align*}
	{\rho} := {\eta}|_{\B(\F)}
	\end{align*}
	defines a measure on the Borel $\sigma$-algebra $\B(\F)$. 
	Thus it suffices to prove \eqref{(A)}, which shall be done in several steps in the remainder of the proof. 
	\begin{enumerate}[leftmargin=2em]
		\item[(a)] \emph{Let $A \subset U \subset \F$ with $A$ closed and $U$ open. Whenever $A \subset D$ for some $D \in \mathscr{D}$, then there exists $E \in \mathscr{D}$ with $A \subset E \subset U$.} 
	\end{enumerate} 
	Proof: Since each $D \in \mathscr{D}$ is compact, the closed set $A \subset D$ is compact as well. Moreover, $D \subset \F^{(n)}$ for sufficiently large $n \in \N$. 
	Since $\F^{(n)}$ is locally compact, for every $x \in A$ there exists $V_x \in \mathscr{D}$ such that $x \in V_x^{\circ} \subset V_x \subset U$. Since $A$ is compact, the set~$E := \bigcup_{j=1}^{N} V_{x_j} \in \mathscr{D}$ for some integer $N = N(A)$ has the desired property. 
	\begin{enumerate}[leftmargin=2em]
		\item[(b)] \emph{Whenever $U, V \subset \F$ open, $\mu(U \cup V) \le \mu(U) + \mu(V)$.}
	\end{enumerate} 
	Proof: Without loss of generality, let $U \not= \F \not= V$ and $\mu(U), \mu(V) < \infty$ (otherwise the inequality is true). For this reason, let~$U, V \subset \F$ be open sets with~$U^c \not= \varnothing \not= V^c$ and~$D \subset U \cup V$ for $D \in \mathscr{D}$. We then consider the closed sets
	\begin{align*}
	A &:= \left\{ x \in D : d(x, U^c) \ge d(x, V^c) \right\} \subset D \:, \\
	B &:= \left\{ x \in D : d(x, U^c) \le d(x, V^c) \right\} \subset D \:.
	\end{align*}
	Obviously, $A \subset U$ and $B \subset V$. Assuming conversely that $x \in A \setminus U$, we conclude that~$x \in V$, and therefore $d(x, U^c) = 0 < d(x, V^c)$ because $V^c$ is closed, giving rise to the contradiction that $x \notin A$. Similarly, we conclude that $B \subset V$. 
	Since $A \subset D$, by virtue of (a) there exists 
	$E \in \mathscr{D}$ with $A \subset E \subset U$. 
	Similarly, there exists $F \in \mathscr{D}$ such that $B \subset F \subset V$, and $D = A \cup B \subset E \cup F$. Hence (1) and (2) yield
	\begin{align*}
	\varphi(D) \le \varphi(E \cup F) \le \varphi(E) + \varphi(F) \le \mu(U) + \mu(V) \:.
	\end{align*}
	Taking the supremum over all $D \in \mathscr{D}$ with $D \subset U \cup V$ gives (b). 
	\begin{enumerate}[leftmargin=2em]
		\item[(c)] \emph{For all $n \in \N$ and $U_n \subset \F$ open, $\mu \big(\bigcup_{n=1}^{\infty} U_n \big) \le \sum_{n=1}^{\infty} \mu(U_n)$.}  
	\end{enumerate} 
	Proof: Let $D \in \mathscr{D}$ with $D \subset \bigcup_{n=1}^{\infty} U_n$. Then by compactness of $D$ there exists $p \in \N$ such that $D \subset \bigcup_{n=1}^p U_n$. Applying (b) inductively, we conclude that
	\begin{align*}
	\varphi(D) \le \mu \left(\bigcup_{n=1}^p U_n \right) \le \sum_{n=1}^p \mu(U_n) \le \sum_{n=1}^{\infty} \mu(U_n) \:.
	\end{align*}
	Since $D \in \mathscr{D}$ with $D \subset \bigcup_{n=1}^{\infty} U_n$ is arbitrary, we obtain (c). 
	\begin{enumerate}[leftmargin=2em]
		\item[(d)] \emph{$\eta$ is an outer measure.}
	\end{enumerate} 
	Proof: As seen before, $\varphi(\varnothing) = 0$, and monotonicity of $\eta$ is a consequence of (1)--(3) and~\eqref{(regular)}. 
	In order to prove $\sigma$-subadditivity, let $\varepsilon > 0$ and $M_n \subset \F$ with $\eta(M_n) < \infty$ for all~$n \in \N$. In view of \eqref{(regular)}, for every $n \in \N$ there exists an open set $U_n \supset M_n$ with~$\mu(U_n) \le \eta(M_n) + 2^{-n} \: \varepsilon$. Making use of~\eqref{(regular)} and applying (c) yields 
	\begin{align*}
	\eta \left(\bigcup_{n=1}^{\infty} M_n \right) \le \mu \left(\bigcup_{n=1}^{\infty} U_n \right) \le \sum_{n=1}^{\infty} \mu(U_n) \le \sum_{n=1}^{\infty} \eta(M_n) + \varepsilon \:.
	\end{align*}
	Since $\varepsilon > 0$ is arbitrary, we obtain (d). 
	\begin{enumerate}[leftmargin=2em]
		\item[(e)] \emph{Each closed set $A \subset \F$ is $\eta$-measurable.}
	\end{enumerate} 
	Proof: By definition of measurability (cf.~\cite[Definition II.4.2]{elstrodt}), we need to show that, for all $Q \subset \F$,
	\begin{align}\label{(4.18)}
	\eta(Q) \ge \eta(Q \cap A) + \eta(Q \cap A^c) \:.
	\end{align}
	Without loss of generality we may assume that $\eta(Q) < \infty$. 
	We first prove \eqref{(4.18)} in the case that $Q = U \subset \F$ is \emph{open}. To this end, let $\varepsilon > 0$ arbitrary. Given~$A \subset \F$ closed, the set $U \cap A^c$ is open and $\mu(U \cap A^c) = \eta(U \cap A^c) < \infty$. In view of \eqref{(regular)}, there exists~$D \in \mathscr{D}$ with $\varphi(D) \ge \mu(U \cap A^c) - \varepsilon$. Next, since $U \cap D^{c}$ is open, we may choose~$E \in \mathscr{D}$ with $E \subset U \cap D^{c}$ and $\varphi(E) \ge \mu(U \cap D^c) - \varepsilon$. Since $D$, $E$ are disjoint and $D \cup E \subset U$, from (1), (3), \eqref{(regular)} and the fact that $U \cap D^c \supset U \cap A$ we conclude that 
	\begin{align*}
	\mu(U) &\ge \varphi(D \cup E) = \varphi(D) + \varphi(E) \\
	&\ge \mu(U \cap A^c) + \mu(U \cap D^c) - 2 \varepsilon \\
	&\ge \eta(U \cap A) + \mu(U \cap A^c) - 2 \varepsilon \:.
	\end{align*}
	Since $\varepsilon > 0$ is arbitrary, we obtain \eqref{(4.18)} for $Q = U$ open. 
	
	Given arbitrary $Q \subset \F$ with $\eta(Q) < \infty$, for $\varepsilon > 0$ arbitrary we choose $U \supset Q$ open with~$\eta(Q) \ge \eta(U) - \varepsilon$ according to \eqref{(regular)}. Then the latter inequality yields 
	\begin{align*}
	\eta(Q) &\ge \eta(U) - \varepsilon \ge \eta(U \cap A) + \eta(U \cap A^c) - \varepsilon \\ 
	&\ge \eta(Q \cap A) + \eta(Q \cap A^c) - \varepsilon \:,
	\end{align*} 
	proving \eqref{(4.18)}.
	
	As a consequence, the set function $\eta$ is an outer measure according to (d), and each closed set $A \subset \F$ is $\eta$-measurable in view of (e). This yields \eqref{(A)}, which completes the proof. 
\end{proof}

The next result shows that the measure $\rho$ given by \eqref{(rho)} is regular~\cite{elstrodt}. 

\begin{Lemma}\label{Lemma inner regular}
	Let ${\rho} : \B(\F) \to [0, \infty]$ be the measure defined by \eqref{(rho)}. Then every open subset of $\F$ is inner regular. Moreover, the measure~$\rho$ is regular. 
\end{Lemma}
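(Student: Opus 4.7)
The plan is to verify two things from the definition of regularity in~\cite{elstrodt}: inner regularity of every open set, and outer regularity of every Borel set. Both facts should follow almost directly from the construction in Theorem~\ref{Theorem measure}, once we observe that $\eta$ and $\mu$ agree on open sets. In particular, no new analytic input (tightness, Prohorov, etc.) is needed beyond what is already built into $\varphi$, $\mu$ and $\eta$.

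My first step will be to show that $\eta(U) = \mu(U)$ for every open $U \subset \F$. Taking the open cover $V = U$ in the definition~\eqref{(regular)} of $\eta$ gives $\eta(U) \le \mu(U)$. For the reverse inequality, note that if $U \subset V$ with $V$ open, then every $D \in \mathscr{D}$ with $D \subset U$ also satisfies $D \subset V$, and hence $\mu(U) \le \mu(V)$ by monotonicity of the supremum in~\eqref{(regular)}. Taking the infimum over such $V$ yields $\mu(U) \le \eta(U)$. Combined with~\eqref{(rho)}, this gives $\rho(U) = \mu(U)$ for all open $U$.

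Inner regularity of open sets then follows: by construction,
\[
\rho(U) = \mu(U) = \sup \bigl\{ \varphi(D) \,:\, D \in \mathscr{D},\; D \subset U \bigr\},
\]
and every $D \in \mathscr{D}$ is compact. Moreover, monotonicity of $\rho$ combined with $\varphi(D) \le \mu(V)$ for every open $V \supset D$ (again from~\eqref{(regular)}) implies $\varphi(D) \le \eta(D) = \rho(D)$. Thus
\[
\rho(U) \le \sup \bigl\{ \rho(D) : D \in \mathscr{D},\; D \subset U \bigr\} \le \sup \bigl\{ \rho(K) : K \subset U \text{ compact} \bigr\} \le \rho(U),
\]
and equality throughout establishes inner regularity for open sets.

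For outer regularity of an arbitrary Borel set $B \in \B(\F)$, I use the first step once more:
\[
\rho(B) = \eta(B) = \inf \bigl\{ \mu(V) : B \subset V,\; V \text{ open} \bigr\} = \inf \bigl\{ \rho(V) : B \subset V,\; V \text{ open} \bigr\},
\]
which is exactly outer regularity. Together with inner regularity of open sets, this is the definition of a regular Borel measure in the sense of~\cite[Definition~VIII.1.1]{elstrodt}, and the lemma follows. The argument is essentially bookkeeping from the construction in Theorem~\ref{Theorem measure}; I do not expect any real obstacle, since the only potentially delicate point—whether the compact sets in $\mathscr{D}$ are ``rich enough'' to compute $\rho(U)$—is built into the very definition of $\mu$.
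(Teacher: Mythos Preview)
Your proof is correct and follows essentially the same route as the paper: both derive inner regularity of open sets from the identity $\rho(U)=\mu(U)=\sup\{\varphi(D):D\in\mathscr{D},\,D\subset U\}$ and then read off regularity from the construction in Theorem~\ref{Theorem measure}. You are simply more explicit than the paper in verifying $\eta(U)=\mu(U)$, in using only the inequality $\varphi(D)\le\rho(D)$ rather than the equality~\eqref{(weak convergence rho tilde)}, and in spelling out outer regularity directly from the definition of $\eta$.
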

\begin{proof}
	Let us first prove that every open subset of $\F$ is inner regular in view of \eqref{(regular)}. Namely, considering arbitary $U \in \mathscr{O}(\F)$ and $K \in \mathfrak{K}(U)$, 
	according to~\eqref{(regular)} and~\eqref{(weak convergence rho tilde)} we obtain 
	\begin{align*}
	{\rho}(K) \le {\rho}(U) &= \sup \left\{{\varphi}(D) : \text{$D \in \mathscr{D}$, $D \subset U$} \right\} = \sup \left\{{\rho}(D) : \text{$D \in \mathscr{D}$, $D \subset U$} \right\} \\
	&\le \sup \left\{{\rho}(K) : \text{$K \subset U$ compact} \right\} \:.
	\end{align*}
	Taking the supremum on the left hand side yields 
	\begin{align*}
	{\rho}(U) = \sup \left\{ \rho(K) : \text{$K \subset U$ compact} \right\} \qquad \text{for all $U \in \mathscr{O}(\F)$} \:.
	\end{align*}
	From this we conclude that every open set $U \in \mathscr{O}(\F)$ is inner regular (in the sense of~\cite[Definition VIII.1.1]{elstrodt}).
	
	In view of \eqref{(regular)}, we are given
	$\rho(U) = \mu(U)$ for any $U \in \mathscr{O}(\F)$, and the above considerations show that every open set is inner regular. From this we conclude that the measure~$\rho$ is regular in the sense of \cite[Definition~VIII.1.1]{elstrodt}. 
\end{proof}

As a matter of fact, in general it seems possible the regular measure~$\rho$ obtained in Theorem~\ref{Theorem measure} to be zero. Nevertheless, the following remark gives a sufficient condition for the measure $\rho$ defined by~\eqref{(rho)} to be non-zero. 

\begin{Remark}\label{Remark non-trivial}
	Let $(\F^{(n)})_{n \in \N}$ be a finite-dimensional approximation of $\F$ (see \S \ref{S Approximating}). 
	By construction of ${\rho}^{(n)}$ we are given $\supp {\rho}^{(n)} \subset \F^{(n)}$ for every $n \in \N$. Assuming that the Lagrangian is bounded and of bounded range, for every $x \in \F$ and $\delta > 0$ there exists~$B_x \subset \F$ bounded and closed such that~$\L(\tilde{x},y) = 0$ for all $\tilde{x} \in {B_{\delta}(x)}$ and all~$y \notin B_x$.
	Furthermore, in view of boundedness of the Lagrangian we introduce the upper bound~$\mathscr{C} < \infty$ by
	\begin{align*}
		\mathscr{C} := \sup_{x, y \in \F} \L(x,y) > 0 \:. 
	\end{align*}
	Thus for any $n \in \N$ we deduce that~$\L^{(n)}(\tilde{x}, y) = 0$ for all $\tilde{x} \in B_{\delta}^{(n)}(x) := {B_{\delta}(x)} \cap \F^{(n)}$ and all $y \notin B_x^{(n)} := B_x \cap \F^{(n)}$. Hence the EL equations \eqref{(ELn')} and \eqref{(elln')} imply that
	\begin{align*}
	1 \le \int_{\F^{(n)}} \L^{(n)}(\tilde{x},y) \: d{\rho}^{(n)} = \int_{B_x^{(n)}} \L^{(n)}(\tilde{x},y) \: d{\rho}^{(n)} \le \sup_{y \in B_x^{(n)}} \L^{(n)}(\tilde{x},y) \: {\rho}^{(n)}(B_x^{(n)}) 
	\end{align*}
	for every $n \in \N$. Thus positivity \eqref{strictpositive} yields 
	$${\rho}^{(n)}(B_x^{(n)}) \ge \mathscr{C}^{-1} > 0 \qquad \text{for all $n \in \N$} \:. $$
	For each $n \in \N$ and arbitrary $\varepsilon > 0$, by regularity of $\rho^{(n)}$ there exists $D_n \in \mathscr{D}$ such that~$\rho^{(n)}(D_n) > \mathscr{C}^{-1} - \varepsilon$. Moreover, $\hat{D}_N := \bigcup_{n=1}^N D_n \in \mathscr{D}$ for every $N \in \N$. Whenever there exists $N \in \N$ such that $\rho^{(n)}(\hat{D}_N) \ge c$ for almost all $n \in \N$ and some $c > 0$, then the measure~$\rho$ defined by~\eqref{(rho)} is non-zero. If this holds true for an infinite number of disjoints sets $(\hat{D}_{N_i})_{i \in \N}$, the measure~$\rho$ possibly has infinite total volume. 
\end{Remark}

Next, in agreement with \cite[Theorem 16.7]{koenig} and the remark thereafter, it is not clear whether~$\rho$ as given by~\eqref{(rho)} is a locally finite measure. 
Nevertheless, the following results provide sufficient conditions for $\rho$ as obtained in~\eqref{(rho)} to be locally finite.

\begin{Lemma}\label{Lemma locally finite}
	Let ${\rho} : \B(\F) \to [0, \infty]$ be defined by \eqref{(rho)}. Assuming that~$\rho(K) < \infty$ for all $K \in \mathfrak{K}(\F)$, then the measure~$\rho$ is locally finite and thus a Borel measure in the sense of~\cite{gardner+pfeffer}. In this case, $\rho$ is regular and moderate. 
\end{Lemma}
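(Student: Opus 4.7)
The plan is to exploit the outer-regular nature of the construction~\eqref{(regular)}: the outer measure $\eta$ is built by approximating from outside by open sets, $\eta(A)=\inf\{\mu(V):A\subset V,\;V\in\mathscr{O}(\F)\}$, so whenever $\eta(A)<\infty$ some open neighbourhood $V\supset A$ must satisfy $\mu(V)<\infty$. This mechanism converts the assumed pointwise finiteness on compact sets directly into local finiteness, which is the non-trivial conclusion of the lemma.

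A preliminary step is to verify that $\rho$ and $\mu$ agree on open sets. For $U\in\mathscr{O}(\F)$ the choice $V=U$ in the infimum gives $\eta(U)\le\mu(U)$, while monotonicity of $\mu$ (a consequence of property~(1) in Proposition~\ref{Proposition set function}) yields $\mu(U)\le\mu(V)$ for every open $V\supset U$, and hence $\mu(U)\le\eta(U)$. Thus $\rho(U)=\eta(U)=\mu(U)$ for every open $U\subset\F$.

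Fix now $x\in\F$. The singleton $\{x\}$ is compact, so by hypothesis $\rho(\{x\})<\infty$. By definition, $\rho(\{x\})=\eta(\{x\})=\inf\{\mu(V):x\in V\in\mathscr{O}(\F)\}$, so the infimum is finite and some open $V\ni x$ satisfies $\rho(V)=\mu(V)<\infty$. Since $x\in\F$ was arbitrary, $\rho$ is locally finite in the sense of property~(iii) in~\S\ref{seccvpsigma}, and combined with Lemma~\ref{Lemma inner regular} this shows that $\rho$ is a regular Borel measure in the sense of~\cite{gardner+pfeffer}. For moderateness, the Lindel\"of property of the Polish space $\F$ (which is second-countable as a separable metric space) can be invoked: the open cover $\{V_x:x\in\F\}$, with $V_x$ a finite-measure neighbourhood of $x$ chosen as above, admits a countable subcover, exhibiting $\F$ as a countable union of open sets of finite $\rho$-measure.

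I do not anticipate a serious obstacle; the argument is essentially a definitional unpacking of~\eqref{(regular)}. The conceptual point worth highlighting is that the outer-measure construction in Theorem~\ref{Theorem measure} is tailored so that finiteness on points automatically propagates to open neighbourhoods, which bypasses the usual failure of the implication ``finite on compacts $\Rightarrow$ locally finite'' on non-locally compact spaces.
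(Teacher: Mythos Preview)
Your argument is correct, and the route is somewhat different from the paper's. The paper does not unpack the outer-regular structure of~\eqref{(regular)} explicitly; instead it observes that $\F$ is first-countable, invokes Lemma~\ref{Lemma inner regular} together with a standard consequence from Elstrodt (Folgerungen~VIII.1.2~(d)) to obtain local finiteness, and then cites Ulam's theorem (Elstrodt, Satz~VIII.1.16) for regularity and moderateness of Borel measures on Polish spaces. Your approach instead exploits directly that $\eta$ is defined as an outer approximation by open sets, so finiteness of $\rho(\{x\})$ forces the existence of an open neighbourhood of finite measure; this is more self-contained, avoids the first-countability hypothesis, and makes transparent \emph{why} the construction in Theorem~\ref{Theorem measure} circumvents the general failure of ``finite on compacts $\Rightarrow$ locally finite'' in non-locally compact spaces. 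For moderateness you use the Lindel\"of property directly rather than appealing to Ulam's theorem, which again is a more hands-on but equally valid argument. One minor remark: your justification of monotonicity of $\mu$ via property~(1) of Proposition~\ref{Proposition set function} is slightly indirect---monotonicity of $\mu$ on open sets follows immediately from its definition as a supremum over $\mathscr{D}$-subsets---but this does not affect correctness.
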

\begin{proof}
	We point out that the space~$\F$ is first-countable. Thus
	under the assumption that $\rho(K) < \infty$ for all~$K \in \mathfrak{K}(\F)$, the statement that $\rho$ is locally finite is a consequence of Lemma \ref{Lemma inner regular} and~\cite[Folgerungen~VIII.1.2~(d)]{elstrodt}. 
	The last statement follows from Ulam's theorem \cite[Theorem VIII.1.16]{elstrodt}. 
\end{proof}

\begin{Remark}
	We point out that, if~$\F$ in Definition \ref{Definition non-locally compact} \emph{were} locally compact, then the measure~$\rho$ constructed in the proof of Theorem \ref{Theorem measure} \emph{would be} locally finite, i.e.\ a Borel measure in the sense of~\cite{gardner+pfeffer}. Namely, whenever $x \in \F$, there exists a compact neighborhood~$V$ of~$x$. Hence Lemma \ref{Lemma upper} implies that~$\rho^{(n)}(V) \le C_V$ for all $n \in \N$ and some~$C_V > 0$. Choosing~$U_x \subset V$ open with $x \in U$, we conclude that
	\begin{align*}
	\rho(U_x) = \sup \left\{\varphi(D) : \text{$D \subset U_x$, $D \in \mathscr{D}$} \right\} \le \sup \left\{\lim_{n \to \infty} \rho^{(n)}(D) : \text{$D \subset V$, $D \in \mathscr{D}$} \right\} \le C_V 
	\end{align*}
	as desired. 
\end{Remark}

In the remainder of this subsection, we discuss the properties~\rm{(iii)} and~\rm{(iv)} in \S \ref{seccvpsigma}. Neither condition~\rm{(iii)} nor condition~\rm{(iv)} do hold in general, but the following result establishes a connection between 
condition~\rm{(iv)} in~\S \ref{seccvpsigma} and locally finite measures.

\begin{Lemma}\label{Lemma iv Borel}
	Assume that the Lagrangian $\L : \F \times \F \to \R_0^+$ is lower semi-continuous, symmetric and strictly positive on the diagonal \eqref{strictpositive}, and let $\rho$ be a measure on $\B(\F)$. Under the assumption that condition~{\rm{(iv)}} in \S \ref{seccvpsigma} holds (see \eqref{Lint}), i.e.
	\begin{align*}
	\sup_{x \in \F} \int_{\F} \L(x,y) \: d\rho(y) < \infty \:,
	\end{align*}
	the measure $\rho$ is locally finite (i.e.\ condition~{\rm{(iii)}} in \S \ref{seccvpsigma} is satisfied).  
\end{Lemma}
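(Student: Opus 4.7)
The plan is to exploit the joint lower semi-continuity of $\L$ together with the strict positivity on the diagonal in order to produce, around each point $x \in \F$, an open neighborhood on which $\L$ is uniformly bounded below, and then to convert the uniform bound in condition (iv) into a finite upper bound on $\rho$ of that neighborhood.

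First, fix $x \in \F$. Since $\L(x,x) > 0$ and $\L$ is lower semi-continuous in the sense of condition (ii) of \S\ref{seccvpsigma}, joint lower semi-continuity of $\L$ at the point $(x,x) \in \F \times \F$ yields an open neighborhood $W$ of $(x,x)$ in the product topology on which
\[ \L(y,z) > \tfrac{1}{2} \L(x,x) =: c > 0 \qquad \text{for all } (y,z) \in W . \]
By definition of the product topology, there exist open neighborhoods $U_1, U_2$ of $x$ with $U_1 \times U_2 \subset W$, and setting $U := U_1 \cap U_2$ gives an open neighborhood of $x$ such that $\L(y,z) \geq c$ for all $y,z \in U$.

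Next, setting $C := \sup_{\tilde{x} \in \F} \int_\F \L(\tilde{x},y)\,d\rho(y) < \infty$ and choosing $y = x \in U$, we obtain from the above inequality that
\[ C \,\geq\, \int_\F \L(x,z)\,d\rho(z) \,\geq\, \int_U \L(x,z)\,d\rho(z) \,\geq\, c\,\rho(U) , \]
so that $\rho(U) \leq C/c < \infty$. Since $x \in \F$ was arbitrary, this shows that every point of $\F$ has an open neighborhood of finite $\rho$-measure, i.e.\ $\rho$ is locally finite in the sense of condition (iii) of \S\ref{seccvpsigma}.

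There is no significant obstacle: once joint lower semi-continuity is used to produce a product neighborhood $U \times U$ on which $\L$ is bounded below by a positive constant, the desired bound follows directly from condition (iv) by testing against the point $x$ itself. The mild subtlety is only the passage from lower semi-continuity of $\L$ as a function of two variables (condition (ii)) to a product-form neighborhood, which is automatic from the product topology.
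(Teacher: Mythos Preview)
Your proof is correct and follows essentially the same approach as the paper's: both use lower semi-continuity and strict positivity on the diagonal to find an open neighborhood of $x$ on which $\L(x,\cdot)$ is bounded below by $c = \L(x,x)/2 > 0$, and then bound $\rho$ of that neighborhood via condition~(iv). The only cosmetic differences are that the paper argues by contradiction and uses lower semi-continuity in the second variable alone, whereas you argue directly and invoke joint lower semi-continuity to obtain a product neighborhood $U \times U$ (of which only the slice $\{x\}\times U$ is actually needed).
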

\begin{proof}
	Assume conversely that there exists $x \in \F$ such that $\rho(U) = \infty$ for any open neighborhood $U$ of $x$. Then~$\L(x,x) > 0$ due to strict positivity on the diagonal~\eqref{strictpositive}. Moreover, by lower semi-continuity of the Lagrangian there exists an open neighborhood $U_x$ of $x$ such that $\L(x,y) > \L(x,x)/2 > 0$ for all $y \in U_x$. Consequently,
	\begin{align*}
	\int_{\F} \L(x,y) \: d\rho(y) \ge \int_{U_x} \L(x,y) \: d\rho(y) > \L(x,x)/2 \: \rho(U_x) = \infty
	\end{align*}
	in contradiction to condition~{\rm{(iv)}} in \S \ref{seccvpsigma}. 
\end{proof}

\subsection{Convergence on Relatively Compact Subsets}\label{S Vague Convergence}
In Section \ref{Section minimizers bounded range} below our goal is to show that, under suitable assumptions, the measure~$\rho$ as given by \eqref{(rho)} is a minimizer under variations of finite volume (see Definition \ref{Definition finite volume}). To this end, we provide some useful tools which shall be worked out in the remainder of this section. For clarity, we point out that for every~$D \in \mathscr{D}$ there exists some $n' \in \N$ such that $D^{\circ} \not= \varnothing$ in the relative topology of~$\F^{(n)}$ for all $n \le n'$ and $D^{\circ} = \varnothing$ in the relative topology of~$\F^{(n)}$ for all $n > n'$. Considering the restriction $\rho|_{D^{\circ}}$ shall always be understood in the sense of a restriction to the interior of~$D$ 
in the relative topology of $\F^{(n')}$. In order to derive weak convergence on suitable relatively compact sets (see Lemma \ref{Lemma weak convergence} below), we require some properties of the measures~$\rho_D$ with~$D \in \mathscr{D}$ as obtained in \eqref{(weak convergence on Dm)}. 

Given $D, E \in \mathscr{D}$ with $D \subset E$ and 
denoting the interior of $D$ by $D^{\circ}$, we claim that 
\begin{align}\label{(coincidence)}
\int_{D} f \: d\rho_{D} = \int_{E} f \: d\rho_{E} \qquad \text{for all $f \in C_c(D^{\circ})$} \:. 
\end{align}
Namely, in view of $C_c(D^{\circ}) \subset C_b(D) \cap C_b(E)$, weak convergence \eqref{(weak convergence on Dm)} yields 
\begin{align*}
\int_{D} f \: d\rho_{D} = \lim_{k \to \infty} \int_{D} f \: d\rho^{(k)}|_{D} \stackrel{\text{($\star$)}}{=} \lim_{k \to \infty} \int_{E} f \: d\rho^{(k)}|_{E} = \int_{E} f \: d\rho_{E}
\end{align*}
for all $f \in C_c(D^{\circ})$, 
where in ($\star$) we made use of the fact that $\supp f \subset D^{\circ} \subset E$. 

This allows us to prove the following result. 

\begin{Prp}
	Whenever $D \in \mathscr{D}$ and $\Omega \subset D^{\circ}$ open, then 
	\begin{align}\label{(sup coincidence)}
		\sup \left\{\rho_{\tilde{D}}|_{D^{\circ}}(\tilde{D}) : \text{$\tilde{D} \subset \Omega$, $\tilde{D} \in \mathscr{D}$} \right\} =\sup \left\{\rho_{D}|_{D^{\circ}}(\tilde{D}) : \text{$\tilde{D} \subset \Omega$, $\tilde{D} \in \mathscr{D}$} \right\} \:. 
	\end{align}
\end{Prp}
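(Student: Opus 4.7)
The plan is to identify both suprema in~\eqref{(sup coincidence)} with a common value, comparing them via two applications of the Portmanteau theorem to the weak convergence $\rho^{(k)}|_D \rightharpoonup \rho_D$ from~\eqref{(weak convergence on Dm)}. As a preliminary reduction, every admissible $\tilde{D}$ satisfies $\tilde{D} \subset \Omega \subset D^{\circ}$, so the restrictions in~\eqref{(sup coincidence)} collapse: $\rho_{\tilde{D}}|_{D^{\circ}}(\tilde{D}) = \rho_{\tilde{D}}(\tilde{D})$ and $\rho_D|_{D^{\circ}}(\tilde{D}) = \rho_D(\tilde{D})$. Moreover, since $\tilde{D}$ is both open and closed in itself, the weak convergence $\rho^{(k)}|_{\tilde{D}} \rightharpoonup \rho_{\tilde{D}}$ combined with~\eqref{(infty)} yields
\[
\rho_{\tilde{D}}(\tilde{D}) \;=\; \lim_{k \to \infty} \rho^{(k)}(\tilde{D}) \;=\; \varphi(\tilde{D}) \:.
\]

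For the inequality LHS $\le$ RHS, I would apply Portmanteau for closed sets: since $\tilde{D}$ is closed in the compact space $D$,
\[
\rho_D(\tilde{D}) \;\ge\; \limsup_{k \to \infty} \rho^{(k)}(\tilde{D}) \;=\; \varphi(\tilde{D}) \;=\; \rho_{\tilde{D}}(\tilde{D}) \:,
\]
and passing to the supremum over admissible $\tilde{D}$ yields the claim.

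For the reverse inequality RHS $\le$ LHS, the key step is to thicken each $\tilde{D}$ inside $\Omega$ using the openness of the latter. Given $\tilde{D} \in \mathscr{D}$ with $\tilde{D} \subset \Omega$, the construction in step~(a) of the proof of Theorem~\ref{Theorem measure} produces some $\tilde{D}' \in \mathscr{D}$ with $\tilde{D} \subset (\tilde{D}')^{\circ} \subset \tilde{D}' \subset \Omega$ (the proof there actually yields the interior containment, obtained by covering the compact set $\tilde{D}$ by interiors of members of $\mathscr{D}$ and extracting a finite subcover). Since $(\tilde{D}')^{\circ}$ is open in $\F^{(n')}$ and contained in $D^{\circ} \subset D$, it is open in $D$ as well; hence Portmanteau for open sets together with monotonicity yields
\[
\rho_D(\tilde{D}) \;\le\; \rho_D\bigl((\tilde{D}')^{\circ}\bigr) \;\le\; \liminf_{k \to \infty} \rho^{(k)}\bigl((\tilde{D}')^{\circ}\bigr) \;\le\; \lim_{k \to \infty} \rho^{(k)}(\tilde{D}') \;=\; \rho_{\tilde{D}'}(\tilde{D}') \:.
\]
Since $\tilde{D}'$ is itself admissible, taking the supremum on both sides gives the desired inequality.

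The main technical point requiring attention is ensuring that step~(a) of Theorem~\ref{Theorem measure} genuinely applies here, since $\Omega$ is only open in the subspace topology of $\F^{(n')}$ rather than in $\F$; however, the entire argument takes place inside the locally compact space $\F^{(n')}$, and the neighborhood-base property of the generators $V_{j,k}^{(n')}$ of $\mathscr{D}$ within $\F^{(n')}$ ensures that the thickening construction carries over without essential change.
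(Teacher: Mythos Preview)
Your proof is correct and follows essentially the same strategy as the paper: both arguments rest on the weak convergence~\eqref{(weak convergence on Dm)} together with a thickening of $\tilde{D}$ to some $\tilde{D}' \in \mathscr{D}$ with $\tilde{D} \subset (\tilde{D}')^{\circ} \subset \tilde{D}' \subset \Omega$. The difference is purely in packaging. Where you invoke the Portmanteau theorem (closed-set direction for LHS $\le$ RHS, open-set direction for RHS $\le$ LHS), the paper writes out the equivalent estimate by hand using a cutoff function $\eta \in C_c(\Omega;[0,1])$ with $\eta \equiv 1$ on the inner set and $\supp \eta$ contained in the thickened set, combined with the identity~\eqref{(coincidence)}. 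Your use of Portmanteau for closed sets in the first direction is in fact slightly more economical: you obtain $\rho_{\tilde{D}}(\tilde{D}) = \varphi(\tilde{D}) \le \rho_D(\tilde{D})$ directly, with the \emph{same} $\tilde{D}$ on both sides, whereas the paper still passes through a thickened set $\tilde{E}$ there as well.
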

\begin{proof}
	In order to prove \eqref{(sup coincidence)}, we need to show that for all $E, F \in \mathscr{D}$ with $E, F \subset \Omega$ there exist sets $\tilde{E}, \tilde{F} \in \mathscr{D}$ with $\tilde{E}, \tilde{F} \subset \Omega$ such that
	\begin{align*}
	\rho_{E}|_{D^{\circ}}(E) \le \rho_{D}|_{D^{\circ}}(\tilde{E}) \qquad \text{and} \qquad \rho_{D}|_{D^{\circ}}(F) \le \rho_{\tilde{F}}|_{D^{\circ}}(\tilde{F}) \:. 
	\end{align*}
	Whenever $E \subset \Omega$ compact, there exists $\eta \in C_c(\Omega; [0,1])$ with $\eta|_{E} \equiv 1$ and~$\tilde{E} \in \mathscr{D}$ with~$\supp \eta \subset \tilde{E}^{\circ} \subset \Omega$, and by weak convergence \eqref{(weak convergence on Dm)} we are given
	\begin{align*}
	\rho_{E}|_{D^{\circ}}(E) &= \int_{E} d\rho_E = \lim_{n \to \infty} \int_{\F} d\rho^{(n)}|_E \le \lim_{n \to \infty} \int_{\F} \eta \: d\rho^{(n)}|_{D} = \int_{\F} \eta \: d\rho_{D} 
	\le \rho_D|_{D^{\circ}}(\tilde{E}) \:. 
	\end{align*}
	On the other hand, whenever $F \in \mathscr{D}$ with $F \subset \Omega$, there exists $\eta \in C_c(\Omega; [0,1])$ with~$\eta|_{F} \equiv 1$ as well as $\tilde{F} \in \mathscr{D}$ with $\supp \eta \subset \tilde{F}^{\circ}$ and~$\tilde{F}^{\circ} \subset \Omega$. Thus by~\eqref{(coincidence)} we obtain
	\begin{align*}
		\rho_{D}|_{D^{\circ}}(F) = \int_{F} d\rho_D \le \int_{D} \eta \: d\rho_D = \int_{\tilde{F}} \eta \: d\rho_{\tilde{F}} \le \rho_{\tilde{F}}|_{D^{\circ}}(\tilde{F}) \:,
	\end{align*}
	which completes the proof. 
\end{proof}

\begin{Lemma}\label{Lemma vague convergence}
	For every $D \in \mathscr{D}$, the measures $\rho|_{D^{\circ}}$ and $\rho_D|_{D^{\circ}}$ coincide. Moreover,
	\begin{align}\label{(vague convergence D)}
	\rho^{(n)}|_{D^{\circ}}  \stackrel{v}{\to} \rho|_{D^{\circ}} \qquad \text{vaguely} \:.
	\end{align}
\end{Lemma}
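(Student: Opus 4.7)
The plan is to deduce $\rho|_{D^\circ}=\rho_{D}|_{D^\circ}$ by comparing these two regular measures on the open subsets $\Omega\subset D^\circ$, using the preceding proposition \eqref{(sup coincidence)} as the key bridge, and then to read off the vague convergence directly from the weak convergence $\rho^{(k)}|_{D}\rightharpoonup\rho_{D}$ on the compact set $D$.

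First I would observe that both $\rho|_{D^\circ}$ and $\rho_{D}|_{D^\circ}$ are regular Borel measures: the former by Lemma~\ref{Lemma inner regular}, and the latter because $\rho_{D}$ is a finite Borel measure on the compact Polish space $D$. Hence it suffices to show that they agree on every open $\Omega\subset D^\circ$. Applying the weak convergence $\rho^{(k)}|_{\tilde D}\rightharpoonup\rho_{\tilde D}$ to the constant function $1\in C_b(\tilde D)$ (valid since $\tilde D$ is compact) yields $\varphi(\tilde D)=\rho_{\tilde D}(\tilde D)=\rho_{\tilde D}|_{D^\circ}(\tilde D)$ for every $\tilde D\in\mathscr{D}$ with $\tilde D\subset\Omega$, so that \eqref{(regular)} and \eqref{(weak convergence rho tilde)} together yield
\[
\rho(\Omega)=\sup\bigl\{\rho_{\tilde D}|_{D^\circ}(\tilde D):\tilde D\in\mathscr{D},\ \tilde D\subset\Omega\bigr\}.
\]

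The preceding proposition \eqref{(sup coincidence)} now rewrites this supremum as $\sup\{\rho_{D}|_{D^\circ}(\tilde D):\tilde D\in\mathscr{D},\ \tilde D\subset\Omega\}$, and I would argue it equals $\rho_{D}|_{D^\circ}(\Omega)$. Monotonicity gives one inequality; the other uses inner regularity of $\rho_{D}$ together with the fact that every compact $K\subset\Omega$ is contained in some $\tilde D\in\mathscr{D}$ with $\tilde D\subset\Omega$ (since $\Omega\subset\F^{(n')}$ for the $n'$ associated to $D$, density of $E^{(n')}$ in $\F^{(n')}$ and compactness of $K$ furnish a finite cover of $K$ by balls $V_{j_i,k_i}^{(n')}\subset\Omega$ whose union belongs to $\mathscr{D}$). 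This establishes the coincidence on open sets, and hence on all Borel subsets of $D^\circ$ by regularity. For the vague convergence, any $f\in C_c(D^\circ)$ extends by zero to a function in $C_b(D)$, and the weak convergence \eqref{(weak convergence on Dm)} together with the coincidence just proved yields
\[
\int_{\F}f\,d\rho^{(n)}=\int_{D}f\,d\rho^{(n)}|_{D}\longrightarrow\int_{D}f\,d\rho_{D}=\int_{D^\circ}f\,d\rho|_{D^\circ},
\]
which is precisely $\rho^{(n)}|_{D^\circ}\stackrel{v}{\to}\rho|_{D^\circ}$.

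The main obstacle is the covering step, i.e.\ producing $\tilde D\in\mathscr{D}$ with $K\subset\tilde D\subset\Omega$ for arbitrary compact $K\subset\Omega\subset D^\circ$; this is where the explicit closure of $\mathscr{D}$ under finite unions, together with the construction of the $V_{j,k}^{(n)}$ as small compact neighborhoods of a dense set in each finite-dimensional slice $\F^{(n)}$, is used in an essential way. Once this point is handled, everything else is a routine matching of suprema and integrals via regularity of all measures involved and weak convergence on $D$.
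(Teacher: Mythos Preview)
Your proposal is correct and follows essentially the same route as the paper: establish regularity of both $\rho|_{D^\circ}$ and $\rho_D|_{D^\circ}$, reduce to open $\Omega\subset D^\circ$, use inner regularity together with the covering of compacta by elements of~$\mathscr{D}$ to rewrite $\rho|_{D^\circ}(\Omega)$ as a supremum over $\varphi(\tilde D)=\rho_{\tilde D}(\tilde D)$, invoke~\eqref{(sup coincidence)} to pass to $\rho_D|_{D^\circ}$, and then read off vague convergence from weak convergence on~$D$. One small remark: the regularity of $\rho|_{D^\circ}$ does not follow directly from Lemma~\ref{Lemma inner regular} (which concerns $\rho$ on~$\F$), but rather from the fact that $\rho|_{D^\circ}$ is a finite Borel measure on the Polish space~$D^\circ$ and Ulam's theorem --- this is exactly how the paper argues, and your covering step is likewise handled there in the same way.
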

\begin{proof}
	According to \eqref{(weak convergence on Dm)} and \eqref{(infty)}, the measure ${\rho}_D|_{D^{\circ}} : \B(D^{\circ}) \to [0, \infty)$ is finite for any $D \in \mathscr{D}$. As a consequence, it is locally finite and thus Borel in the sense of~\cite{gardner+pfeffer}. Moreover, since open subsets of Polish spaces are Polish (see \cite[\S 26]{bauer}), it is regular in view of Ulam's theorem~\cite[Satz~VIII.1.16]{elstrodt} (alternatively, regularity follows by~\cite[Corollary~7.1.9]{bogachev} and the fact that each metrizable space is perfectly normal~\cite[Corollary~3.21]{aliprantis}). Similar arguments yield regularity of ${\rho}|_{D^{\circ}}$ for any~$D \in \mathscr{D}$. Thus for any $D \in \mathscr{D}$, we may approximate arbitrary Borel sets $A \in \B(D^{\circ})$ by compact sets from inside,  
	\begin{align*}
	{\rho}_D|_{D^{\circ}}(A) &= \sup \left\{{\rho}_D|_{D^{\circ}}(K) : \text{$K \subset A$ compact} \right\} \:, \\
	{\rho}|_{D^{\circ}}(A) &= \sup \left\{{\rho}|_{D^{\circ}}(K) : \text{$K \subset A$ compact} \right\} \:. 
	\end{align*}
	Whenever $D \in \mathscr{D}$ and $\Omega \subset D$ open, for each $K \subset \Omega$ compact there exists $\tilde{D} \in \mathscr{D}$ such that $K \subset \tilde{D} \subset \Omega$ by construction of $\mathscr{D}$. From this we conclude that
	\begin{align*}
	{\rho}_D|_{D^{\circ}}(\Omega) &= \sup \left\{{\rho}_D|_{D^{\circ}}(K) : \text{$K \subset \Omega$ compact} \right\} = \sup \left\{{\rho}_D|_{D^{\circ}}(\tilde{D}) : \text{$\tilde{D} \subset \Omega$, $\tilde{D} \in \mathscr{D}$} \right\} \:, \\
	{\rho}|_{D^{\circ}}(\Omega) &= \sup \left\{{\rho}|_{D^{\circ}}(K) : \text{$K \subset \Omega$ compact} \right\} = \sup \left\{{\rho}|_{D^{\circ}}(\tilde{D}) : \text{$\tilde{D} \subset \Omega$, $\tilde{D} \in \mathscr{D}$} \right\} \:. 
	\end{align*} 
	Moreover, for any $\tilde{D} \in \mathscr{D}$, by \eqref{(weak convergence on Dm)} and \eqref{(weak convergence rho tilde)} we obtain 
	\begin{align}\label{(equality)}
	{\varphi}(\tilde{D}) = \lim_{k \to \infty} {\rho}^{(k)}(\tilde{D}) = \lim_{k \to \infty} {\rho}^{(k)}|_{\tilde{D}}(\tilde{D}) = {\rho}_{\tilde{D}}(\tilde{D}) \:.
	\end{align}
	Given $D \in \mathscr{D}$ and $\Omega \subset D^{\circ}$ open, we conclude that $\rho|_{D^{\circ}}$ as well as $\rho_D|_{D^{\circ}}$ are regular finite Borel measures on~$\B(D^{\circ})$, implying that
	\begin{align*}
	\rho|_{D^{\circ}}(\Omega) &= \sup \left\{\varphi(\tilde{D}) : \text{$\tilde{D} \subset \Omega$, $\tilde{D} \in \mathscr{D}$} \right\} \stackrel{\eqref{(equality)}}{=} \sup \left\{\rho_{\tilde{D}}(\tilde{D}) : \text{$\tilde{D} \subset \Omega$, $\tilde{D} \in \mathscr{D}$} \right\} \\
	&= \sup \left\{\rho_{\tilde{D}}|_{D^{\circ}}(\tilde{D}) : \text{$\tilde{D} \subset \Omega$, $\tilde{D} \in \mathscr{D}$} \right\} \stackrel{\eqref{(sup coincidence)}}{=} \sup \left\{\rho_{D}|_{D^{\circ}}(\tilde{D}) : \text{$\tilde{D} \subset \Omega$, $\tilde{D} \in \mathscr{D}$} \right\} \\
	&= \rho_D|_{D^{\circ}}(\Omega) \:. \phantom{\sup \left\{\rho_{D}|_{D^{\circ}}(\tilde{D}) : \text{$\tilde{D} \subset \Omega$, $\tilde{D} \in \mathscr{D}$} \right\}}
	\end{align*}
	As a consequence, the measures $\rho|_{D^{\circ}}$ and $\rho_D|_{D^{\circ}}$ coincide on all open sets $\Omega \subset \D^{\circ}$. Making use of \cite[Lemma 7.1.2]{bogachev}, we conclude that $\rho|_{D^{\circ}}$ and $\rho_D|_{D^{\circ}}$ already coincide on all Borel sets, i.e.
	\begin{align}\label{(Borel coincidence)}
	\rho|_{D^{\circ}} = \rho_D|_{D^{\circ}} \qquad \text{for all $D \in \mathscr{D}$} \:.
	\end{align}
	Given $f \in C_c(D^{\circ})$, we thus obtain
	\begin{align*}
		\lim_{n \to \infty} \int_{\F} f \: d\rho^{(n)}|_{D^{\circ}} = \lim_{n \to \infty} \int_{\F} f \: d\rho^{(n)}|_D \stackrel{\eqref{(weak convergence on Dm)}}{=} \int_{\F} f \: d\rho_D = \int_{\F} f \: d\rho_D|_{D^{\circ}} \stackrel{\eqref{(Borel coincidence)}}{=}  \int_{\F} f \: d\rho|_{D^{\circ}} \:.
	\end{align*} 
	Since $f \in C_c(D^{\circ})$ was arbitrary, we obtain vague convergence
	\begin{align*}
		\rho^{(n)}|_{D^{\circ}}  \stackrel{v}{\to} \rho|_{D^{\circ}} \:.
	\end{align*}
	This completes the proof. 
\end{proof}

Having proved vague convergence on open subsets of $D \in \mathscr{D}$, the following result even yields weak convergence on suitable relatively compact subsets $V \subset \F$ (so-called \emph{continuity sets}, cf.~\cite[Section~8.2]{bogachev}).   

\begin{Lemma}\label{Lemma weak convergence}
	For every $D \in \mathscr{D}$ 
	there exists $E \in \mathscr{D}$ with $D \subset E^{\circ}$ as well as a relatively compact, open subset $V \subset E^{\circ}$ with $D \subset V$ such that 
	\begin{align}\label{(weak convergence V)}
		\rho^{(n)}|_V \rightharpoonup \rho|_V \qquad \text{weakly} \:.
	\end{align} 
	Similarly, whenever $D \in \mathscr{D}^{(n)}$ and $U \supset D$ open, there exists a relatively compact, open subset $V \subset \F^{(n)}$ with $D \subset V \subset U$ such that \eqref{(weak convergence V)} holds. 
\end{Lemma}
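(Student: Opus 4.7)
The plan is to localize the problem to a single compact member of $\mathscr{D}$ whose interior contains $D$, so that the vague convergence $\rho^{(k)}|_{E^{\circ}} \stackrel{v}{\to} \rho|_{E^{\circ}}$ from Lemma~\ref{Lemma vague convergence} is available, and then to upgrade vague to weak convergence by taking $V$ to be a $\rho$-continuity set sandwiched between $D$ and $E^{\circ}$.

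First, pick $n'$ with $D \subset \F^{(n')}$. By the construction in~\S\ref{S Construction of Countable Set}, every $x \in D$ lies in the $\F^{(n')}$-interior of some $V_{j,k}^{(n')} \in \mathscr{D}$. Compactness of $D$ yields a finite sub-cover, whose union $E$ belongs to $\mathscr{D}$ by the iterated construction~\eqref{(D)} and satisfies $D \subset E^{\circ}$ in the relative topology of $\F^{(n')}$. Lemma~\ref{Lemma vague convergence} applied to $E$ then gives $\rho^{(k)}|_{E^{\circ}} \stackrel{v}{\to} \rho|_{E^{\circ}}$ as finite measures, with $\rho(E^{\circ}) \le \varphi(E) < \infty$ in view of~\eqref{(infty)} and the uniform bound $\rho^{(k)}(E) \le C_E$ coming from Lemma~\ref{Lemma upper}.

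Second, since $D$ is compact inside the open set $E^{\circ} \subset \F^{(n')}$ and $\F^{(n')}$ is a finite-dimensional Banach space, there is $r_0 > 0$ such that
\[ V_r := \bigl\{ x \in \F^{(n')} : d(x,D) < r \bigr\} \]
has compact closure contained in $E^{\circ}$ for every $r \in (0,r_0)$ (Heine--Borel). For such $r$ the boundaries $\partial V_r$ are contained in the pairwise disjoint level sets $\{d(\cdot,D)=r\}$, so at most countably many of them can carry positive $\rho$-mass since $\rho|_{E^{\circ}}$ is finite; choose $r \in (0,r_0)$ with $\rho(\partial V_r)=0$ and set $V := V_r$. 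Then $V$ is open with $D \subset V$ and $\bar V$ compact in $E^{\circ}$.

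Finally, because $V$ is relatively compact in the locally compact metric space $E^{\circ}$, is a $\rho$-continuity set, and the measures $\rho^{(k)}|_{E^{\circ}}$ converge vaguely with uniformly bounded total mass, the standard Portmanteau-type passage (cf.~\cite[Theorem~8.2.3]{bogachev}) upgrades vague to weak convergence on $V$, yielding~\eqref{(weak convergence V)}. The second assertion follows from the same template carried out inside $\F^{(n)}$, where one additionally shrinks the covering neighbourhoods $V_{j,k}^{(n)}$ (and hence $E$) and the radius $r$ so that $E \subset U$ and $V \subset U$. The main obstacle is the continuity-set selection: it relies essentially on finiteness of $\rho|_{E^{\circ}}$, which in turn rests on the identification $\rho|_{E^{\circ}}=\rho_E|_{E^{\circ}}$ from Lemma~\ref{Lemma vague convergence} and the finite-mass bound $\rho_E(E)=\varphi(E)<\infty$.
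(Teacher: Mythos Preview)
Your proof is correct and follows the same strategy as the paper: localize to $E^{\circ}$ via Lemma~\ref{Lemma vague convergence}, select a $\rho$-continuity set $V$ sandwiched between $D$ and $E^{\circ}$, and upgrade vague to weak convergence on $V$. The paper produces the continuity set via the base of continuity sets from \cite[Proposition~8.2.8]{bogachev} and spells out the final upgrade by normalizing the measures and invoking \cite[Corollary~30.9]{bauer}, whereas you use the equivalent level-set trick $\{d(\cdot,D)=r\}$ and a direct Portmanteau citation; both executions are standard and interchangeable.
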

\begin{proof}
	Given $D \in \mathscr{D}$, 
	by construction of $\mathscr{D}$ we know that~$D$ is compact and thus contained in the interior of some~$E \in \mathscr{D}$ with $\rho(E) < \infty$ in view of~\eqref{(infty)} and~\eqref{(weak convergence rho tilde)}. Therefore, $\rho|_E : \B(E) \to [0, \infty)$ is a nonnegative finite Borel measure, where~$\B(E)$ denotes the Borel $\sigma$-algebra on the topological space~$E$. Since $E$ is metrizable, it is completely regular, implying that the class $\Gamma_{\rho|_E}$ of all Borel sets in $E$ with boundaries of $\rho|_E$-measure zero contains a base (consisting of open sets) of the topology of $E$ (for details see~\cite[Proposition 8.2.8]{bogachev}). Since $E$ is compact, the set $D \subset E$ can be covered by finitely many relatively compact, open sets $V_1, \ldots, V_N \subset E^{\circ}$ in $\Gamma_{\rho|_E}$. By construction, the closure of the set $V := \bigcup_{i=1}^N V_i \subset E^{\circ}$ is compact. Considering the restriction $\rho^{(n)}|_V$, for any $f \in C_c(V)$ we obtain
	\begin{align*}
		\lim_{n \to \infty} \int_V f \: d\rho^{(n)}|_V = \lim_{n \to \infty} \int_{\F} f \: d\rho^{(n)}|_{E^{\circ}} \stackrel{\eqref{(vague convergence D)}}{=} \int_{\F} f \: d\rho|_{E^{\circ}} = \int_V f \: d\rho|_V \:,
	\end{align*}
	proving vague convergence 
	\begin{align}\label{(vague convergence V)}
		\rho^{(n)}|_V \stackrel{v}{\to} \rho|_V \:. 
	\end{align}
	Since $\Gamma_{\rho|_E}$ is a subalgebra in $\B(E)$ (see \cite[Proposition 8.2.8]{bogachev}), the set $V$ is also contained in $\Gamma_{\rho|_E}$, implying that $\rho(\partial V) = \rho|_E(\partial V) = 0$. Note that the measures $(\rho^{(n)}|_{E^{\circ}})$ as well as $\rho|_{E^{\circ}}$ are regular Borel measures and thus Radon \cite{schwartz-radon}. Therefore, making use of vague convergence \eqref{(vague convergence D)} and applying \cite[Theorem 30.2]{bauer}, for the relatively compact, open set $V \subset E^{\circ}$ and the compact set $\overline{V} \subset E^{\circ}$ we obtain
	\begin{align*}
		\rho(V) = \rho(\overline{V}) \ge \limsup_{n \to \infty} \rho^{(n)}(\overline{V}) \ge \limsup_{n \to \infty} \rho^{(n)}(V) \ge \liminf_{n \to \infty} \rho^{(n)}(V) \ge \rho(V) \:,
	\end{align*}
	proving that
	\begin{align}\label{(4.19)}
		\rho(V) = \lim_{n \to \infty} \rho^{(n)}(V) \:.
	\end{align}
	Let us point out that, for each $n \in \N$, the measure $\rho^{(n)}|_V/\rho^{(n)}(V)$ is normalized in the sense that $\rho^{(n)}|_V(V)/\rho^{(n)}(V) = 1$ (cf.~\cite[\S 3.2]{noncompact}). Furthermore, applying vague convergence \eqref{(vague convergence V)} as well as~\eqref{(4.19)}, for any $f \in C_c(V)$ we are given 
	\begin{align*}
		\lim_{n \to \infty} \int_V f \: d\rho^{(n)}|_V/\rho^{(n)}(V) = \int_V f \: d\rho|_V/\rho(V) \:.
	\end{align*}
	As a consequence, the sequence of normalized measures~$(\rho^{(n)}|_V/\rho^{(n)}(V))_{n \in \N}$ converges vaguely to the normalized measure~$\rho|_V/\rho(V)$, and from \cite[Corollary 30.9]{bauer} we deduce that~$(\rho^{(n)}|_V/\rho^{(n)}(V))_{n \in \N}$ converges \emph{weakly} to the normalized measure~$\rho|_V/\rho(V)$. Thus in view of 
	\begin{align*}
		\lim_{n \to \infty} \int_{V} f \: d\rho^{(n)}|_V &= \lim_{n \to \infty} \rho^{(n)}(V) \int_{V} f \: d\rho^{(n)}|_V/\rho^{(n)}(V) = \rho(V) \int_{V} f \: d\rho^{(n)}|_V/\rho(V) \\
		&= \int_{V} f \: d\rho^{(n)}|_V
	\end{align*}
	for any $f \in C_b(V)$, we finally obtain weak convergence $\rho^{(n)}|_V \rightharpoonup \rho|_V$. 
\end{proof}

By contrast to \cite{noncompact}, it is not reasonable to consider vague convergence $\rho^{(n)} \stackrel{v}{\to} \rho$ in view of~\cite[Exercise 14.4]{koenig}.

\section{Minimizers for Lagrangians of Bounded Range}\label{Section minimizers bounded range}
\subsection{Preliminaries}\label{S Preliminaries Condition B} 
This section is devoted to the proof that, under suitable assumptions, the measure~$\rho$ as defined in \eqref{(rho)} is a minimizer of the causal variational principle~\eqref{(cvp)} under variations of finite volume (see Definition \ref{Definition finite volume}). 
This is accomplished in \S \ref{S finite volume}. 
To this end, we proceed in several steps. In this subsection, we introduce an additional assumption on the measure~$\rho$ obtained in Theorem~\ref{Theorem measure} (see~\eqref{(assumption B)}). Afterwards we prove that $\rho$ is a minimizer on suitable compact subsets (see \S \ref{S minimizer compact} and~\S \ref{S compact support}). 
Assuming that $\rho \not= 0$ is locally finite, we finally show that~$\rho$ satisfies corresponding Euler-Lagrange (EL) equations (see~\S \ref{S EL}), which have the same structure as the EL equations obtained in~\cite{noncompact}. 
Throughout this section, we shall assume that the Lagrangian is of bounded range (see Definition~\ref{Definition bounded range}). 

In order to prove that $\rho$ is a minimizer, we impose the following condition:  
\begin{itemize}
	\item[(B)] For any $\varepsilon > 0$ and $B \subset \F$ bounded, there exists $N \in \N$ such that 
	\begin{align}\label{(assumption B)}
	\rho(B \setminus B^{(n)}) < \varepsilon \qquad \text{for all $n \ge N$} \:,
	\end{align}
	where $B^{(n)} := B \cap \F^{(n)}$. 
\end{itemize}

\begin{Lemma}\label{Lemma B locally finite}
	Assume that the measure $\rho$ defined by \eqref{(rho)} satisfies condition (B). Then the measure~$\rho$ is locally finite, and any bounded subset of $\F$ has finite $\rho$-measure. 
\end{Lemma}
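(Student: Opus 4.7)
The plan is to first establish the stronger statement that $\rho(B) < \infty$ for every bounded Borel set $B \subset \F$, from which local finiteness follows immediately since every point of the metric space $\F$ admits a bounded open neighbourhood (any open metric ball).

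Given such a bounded set $B$, I would invoke condition~(B) with $\varepsilon = 1$ to pick $n \in \N$ with $\rho(B \setminus B^{(n)}) < 1$, thereby reducing the problem to bounding $\rho(B^{(n)})$ for this fixed~$n$. The essential observation is that $B^{(n)} = B \cap X_n$ lies in the finite-dimensional subspace $X_n$, so by the Heine-Borel theorem it is totally bounded in $X_n$. For any $k \in \N$, total boundedness produces a finite set $x_1, \ldots, x_L \in B^{(n)}$ with $B^{(n)} \subset \bigcup_i B_{1/(4k)}(x_i)$; invoking the density of $E^{(n)}$ in $\F^{(n)}$ to shift each $x_i$ onto some $x_{j_i}^{(n)} \in E^{(n)}$ at distance less than $1/(4k)$ yields
\begin{align*}
B^{(n)} \subset \bigcup_{i=1}^{L} \bigl( B_{1/(2k)}(x_{j_i}^{(n)}) \cap \F^{(n)} \bigr) \subset \bigcup_{i=1}^{L} V_{j_i, k}^{(n)} =: D \:,
\end{align*}
where the second inclusion uses the explicit choice of the $V_{j,k}^{(n)}$ from the footnote in~\S\ref{S Construction of Countable Set}. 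Being a finite union of $V_{j,k}^{(n)}$'s, the compact set $D$ belongs to $\mathscr{D}$ by construction.

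To bound $\rho(D)$, I would note that the EL equations~\eqref{(ELn')} ensure that $\int_\F \L(x,y)\,d\rho^{[n]}(y) = 1$ on $\supp \rho^{[n]}$, so Lemma~\ref{Lemma upper} applies to the diagonal subsequence $(\rho^{(k)})_{k \in \N}$ from~\eqref{(rhorund)} and produces a constant $C_D$ with $\rho^{(k)}(D) \le C_D$ for all $k$. Passing to the limit via~\eqref{(weak convergence rho tilde)} then yields $\rho(D) = \varphi(D) \le C_D < \infty$, and combining with condition~(B) gives $\rho(B) \le \rho(B^{(n)}) + 1 \le \rho(D) + 1 < \infty$.

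The only subtle step I anticipate is the covering argument above. Its subtlety stems from the fact that $\mathscr{D}$ consists exclusively of compact subsets of $\bigcup_m \F^{(m)}$, and this union need not exhaust $\F$, so one cannot in general enclose an arbitrary compact subset of $\F$ in an element of $\mathscr{D}$. Condition~(B) is precisely what lets us circumvent this obstacle by discarding the portion of $B$ lying outside $\F^{(n)}$; once restricted to $\F^{(n)}$, the finite-dimensional Heine-Borel property together with the density of $E^{(n)}$ delivers the required cover from elements of $\mathscr{V}^{(1)}$.
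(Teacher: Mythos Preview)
Your proof is correct and follows essentially the same approach as the paper: invoke condition~(B) to reduce to bounding $\rho(B^{(n)})$, use Heine--Borel in the finite-dimensional $X_n$ to cover $B^{(n)}$ by finitely many sets from $\mathscr{D}$, and conclude finiteness from~\eqref{(infty)} and~\eqref{(weak convergence rho tilde)}. The only cosmetic difference is that the paper deduces local finiteness by showing $\rho(K)<\infty$ for every compact $K$ and then quoting Lemma~\ref{Lemma locally finite}, whereas you argue more directly via bounded open balls; both routes are fine.
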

\begin{proof}
	Assuming that $B \subset \F$ is bounded, in view of condition~(B) there exists~$N \in \N$ such that $\rho(B \setminus B^{(n)}) < \varepsilon$ for all $n \ge N$, where~$B^{(n)} := B \cap \F^{(n)}$ 
	is relatively compact in view of~\cite[Bemerkungen 2.9]{alt}. For this reason, $B^{(n)}$ can be covered by a finite number of compact sets $D_1, \ldots, D_L$ with $D_i \in \mathscr{D}^{(n)}$ for all~$i = 1, \ldots, L$, where~$\mathscr{D}^{(n)}$ is given by~\eqref{(Dn)}. From \eqref{(infty)} we obtain~$\rho(B^{(n)}) < \infty$, implying that
	\begin{align*}
	\rho(B) = \rho(B \setminus B^{(n)}) + \rho(B^{(n)}) < \infty \:.
	\end{align*}
	Since each compact set $K \subset \F$ is bounded, the measure $\rho$ is locally finite in view of Lemma \ref{Lemma locally finite}. 
\end{proof}

\subsection{Minimizers under Variations of Finite-Dimensional Compact Support}\label{S minimizer compact}
Before proving our first existence result, we point out that the restricted Lagrangian 
$$\L^{(n)} = \L|_{\F^{(n)} \times \F^{(n)}} : \F^{(n)} \times \F^{(n)} \to \R_0^+$$ 
is of compact range (see Definition \ref{defcompactrange} and \S \ref{S Approximating}) for every $n \in \N$. 
Therefore, for all~$j,k,n \in \N$, there exist compact subsets~$(V_{j,k}^{(n)})' \subset \F^{(n)}$ such that~$\L(x,y) = 0$ for all~$x \in V_{j,k}^{(n)}$ and~$y \in \F^{(n)} \setminus (V_{j,k}^{(n)})'$. 
By construction of $\mathscr{D}$ (see \S \ref{S Construction of Countable Set}), the set~$(V_{j,k}^{(n)})'$ can be covered by a finite number of sets~$(V_{j'_{\ell},k'_{\ell}}^{(n)})_{\ell = 1, \ldots, L}$ in~$\mathscr{D}$, 
whose union is also contained in~$\mathscr{D}$. For this reason, we may assume 
that~$(V_{j,k}^{(n)})' \in \mathscr{D}$ for all~$j,k,n \in \N$. 

After these preparations, we are now in the position to state our first existence result. 

\begin{Prp}\label{Proposition minimizer on V}
	Assume that the Lagrangian $\L \in C_b(\F \times \F; \R_0^+)$ is of bounded range, and that condition~\eqref{(ivn)} holds. Moreover, assume that the measure~${\rho}$ defined by~\eqref{(rho)} satisfies condition~{\rm{(B)}} in Section~\ref{Section minimizers bounded range}. 
	Then $\rho$ is a \textbf{minimizer under variations in~$\mathscr{D}$} in the sense that
	\[ \big( \Sact(\tilde{\rho}) - \Sact(\rho) \big) \geq 0 \] 
	whenever $\tilde{\rho}$ satisfying~\eqref{totvol} is a regular measure on $\B(\F)$ with $\supp (\tilde{\rho} - \rho) \in \mathscr{D}$. 
\end{Prp}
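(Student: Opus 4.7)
The plan is to transfer the variation $\tilde{\rho}$ of $\rho$ into a finite-volume variation of the approximating minimizer $\rho^{(n)}$ on $\F^{(n)}$, to invoke minimality of $\rho^{(n)}$ (guaranteed by Theorem~\ref{Theorem 4.9 + 4.10} thanks to boundedness of $\L$ and~\eqref{(ivn)}), and then to pass to the limit $n \to \infty$. Set $\mu := \tilde{\rho} - \rho$ and $D := \supp \mu \in \mathscr{D}$. Since $\L$ has bounded range and $D$ is bounded, there is a bounded $D' \supset D$ with $\L(x,y) = 0$ for $x \in D$, $y \notin D'$; by the construction of $\mathscr{D}$ in~\S\ref{S Construction of Countable Set} we may enlarge $D'$ so that $D' \in \mathscr{D}^{(m_0)}$ for some $m_0 \in \N$. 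Applying the second part of Lemma~\ref{Lemma weak convergence} to $D'$ yields a relatively compact open $V \subset \F^{(m_0)}$ with $D' \subset V$, $\rho(\partial V) = 0$, and $\rho^{(n)}|_V \rightharpoonup \rho|_V$ weakly.

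For $n \geq m_0$ define
\[
\lambda_n := \frac{\rho^{(n)}(V)}{\tilde{\rho}(V)}\:, \qquad \tilde{\rho}^{(n)} := \rho^{(n)}|_{\F^{(n)} \setminus V} + \lambda_n\, \tilde{\rho}|_V
\]
(the degenerate cases $\tilde{\rho}(V) = 0$ or $\rho^{(n)}(V) = 0$ are handled separately and reduce to a trivial comparison). Then $\tilde{\rho}^{(n)} \geq 0$, $\tilde{\rho}^{(n)}(\F^{(n)}) = \rho^{(n)}(\F^{(n)})$, and the signed measure $\mu_n := \tilde{\rho}^{(n)} - \rho^{(n)} = \lambda_n\, \tilde{\rho}|_V - \rho^{(n)}|_V$ is supported in the compact set $\overline{V} \subset \F^{(n)}$ with uniformly bounded total variation. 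Since $\rho^{(n)}$ is a minimizer under variations of finite volume on $\F^{(n)}$,
\[
0 \leq \Sact^{(n)}\bigl(\tilde{\rho}^{(n)}\bigr) - \Sact^{(n)}\bigl(\rho^{(n)}\bigr) = 2 \int_V \ell_n(x)\, d\mu_n(x) + \iint_{V \times V} \L(x,y)\, d\mu_n(x)\, d\mu_n(y)\:,
\]
where $\ell_n(x) := \int_\F \L(x,y)\, d\rho^{(n)}(y)$.

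Passing to the limit $n \to \infty$ relies on three convergences. First, $\lambda_n \to 1$, because $V$ is a $\rho$-continuity set containing $\supp \mu$ and $\mu(\F) = 0$, hence $\rho^{(n)}(V) \to \rho(V) = \tilde{\rho}(V)$. Second, $\mu_n \rightharpoonup \mu$ weakly on $V$: for $f \in C_b(V)$, the identity $\int f\, d\mu_n = \lambda_n \int f\, d\tilde{\rho}|_V - \int f\, d\rho^{(n)}|_V$ combined with the weak convergence of $\rho^{(n)}|_V$ gives $\int f\, d\mu_n \to \int f\, d(\tilde{\rho} - \rho)|_V = \int f\, d\mu$. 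Third, $\ell_n \to \ell_\rho$ uniformly on $\overline{V}$, where $\ell_\rho(x) := \int_\F \L(x,y)\, d\rho(y)$: bounded range localizes the integral to a bounded $D'' \in \mathscr{D}$, and a further application of Lemma~\ref{Lemma weak convergence} around $D''$ combined with uniform continuity of $\L$ on the compact product $\overline{V} \times \overline{D''}$ forces uniform convergence. Condition~(B) enters through Lemma~\ref{Lemma B locally finite} to guarantee that $\rho$ is locally finite with $\rho(V), \rho(D'') < \infty$, so that $\tilde{\rho}|_V$ is finite and $\ell_\rho$ is continuous on $\overline{V}$ (by dominated convergence using bounded range). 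Combining these facts with the standard upgrade of $\mu_n \rightharpoonup \mu$ to $\mu_n \otimes \mu_n \rightharpoonup \mu \otimes \mu$ on the compact product $\overline{V} \times \overline{V}$, the right-hand side converges to $2 \int \ell_\rho\, d\mu + \iint \L\, d\mu \otimes d\mu = \Sact(\tilde{\rho}) - \Sact(\rho)$, establishing the inequality.

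The main obstacle is the simultaneous compatibility of the various convergences: choosing $V$ to be a $\rho$-continuity set that contains the ``influence region'' $D'$; upgrading the weak convergence of $\rho^{(n)}$ to uniform convergence of $\ell_n$ on $\overline{V}$, which forces one to work on yet another enlargement $D''$ controlled by the bounded range; and obtaining weak convergence of the product measures $\mu_n \otimes \mu_n$ in order to pass to the limit in the quadratic term.
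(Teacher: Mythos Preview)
Your overall strategy is the same as the paper's: rescale $\tilde\rho|_V$ by a factor $\lambda_n=\rho^{(n)}(V)/\tilde\rho(V)$ to produce a competitor $\tilde\rho^{(n)}$ for $\rho^{(n)}$ on $\F^{(n)}$, invoke minimality of $\rho^{(n)}$, and pass to the limit using Lemma~\ref{Lemma weak convergence}. Where the two arguments diverge is in how the limit is taken: the paper uses weak convergence of \emph{product} measures (via Billingsley's Theorem~2.8) to handle the double integrals directly, whereas you try to establish uniform convergence $\ell_n\to\ell_\rho$ on $\overline V$.

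There is, however, a genuine gap in your localisation step. You write that the bounded set $D'$ produced by the bounded-range hypothesis can be ``enlarged so that $D'\in\mathscr D^{(m_0)}$''. This is not possible: every element of $\mathscr D^{(m_0)}$ is a compact subset of the finite-dimensional slice $\F^{(m_0)}$, whereas $D'$ is a bounded subset of the non-locally compact space $\F$ and in general meets $\F\setminus\F^{(m)}$ for every $m$. The same confusion recurs when you say ``bounded range localizes the integral to a bounded $D''\in\mathscr D$'': bounded range on $\F$ yields a bounded set $W\subset\F$, never an element of $\mathscr D$. Consequently your claimed uniform convergence of $\ell_n$ is not justified: for $x\in\overline V$ one has $\ell_n(x)=\int_{W\cap\F^{(n)}}\L(x,y)\,d\rho^{(n)}(y)$ with $W\cap\F^{(n)}$ growing with $n$, and there is no fixed compact set on which both $\rho^{(n)}$ and $\rho$ concentrate their mass.

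This is precisely where condition~(B) enters in the paper, and not merely ``through Lemma~\ref{Lemma B locally finite}''. The paper uses~(B) \emph{directly} to split
\[
\int_{\F} \L(x,y)\,d\rho(y)=\int_{V'}\L(x,y)\,d\rho(y)+\int_{W\setminus W^{(n')}}\L(x,y)\,d\rho(y)\,,
\]
where $W^{(n')}=W\cap\F^{(n')}$ is relatively compact in $\F^{(n')}$ (hence can be enclosed in some $V'$ coming from Lemma~\ref{Lemma weak convergence}) and the second term is small because $\rho(W\setminus W^{(n')})<\tilde\varepsilon$ by~(B). Your argument can be repaired along these lines, but the role of condition~(B) is to truncate the bounded range set to a finite-dimensional slice, not merely to guarantee finiteness of $\rho(V)$.
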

\begin{proof}
	Assume that $\tilde{\rho} : \B(\F) \to [0, \infty]$ is a measure on the Borel $\sigma$-algebra of $\F$ with~$D := \supp (\tilde{\rho} - \rho) \in \mathscr{D}$ such that \eqref{totvol} is satisfied, i.e. 
	\[0 < \tilde{\rho}(D) = \rho(D) < \infty \:. \]
	Since $D \in \mathscr{D}$ is compact, the difference \eqref{integrals} is well-defined in view of \cite[\S 4.3]{noncompact}, 
	\begin{align*}
	\Sact(\tilde{\rho}) - \Sact(\rho) &= \int_{\F} d(\tilde{\rho} - \rho)(x) \int_{\F} d\rho(y) \:\L(x,y) + \int_{\F} d\rho(x) \int_{\F} d(\tilde{\rho} - \rho)(y) \:\L(x,y) \\
	&\qquad \quad + \int_{\F} d(\tilde{\rho}- \rho)(x) \int_{\F} d(\tilde{\rho}- \rho)(y) \:\L(x,y) \:.
	\end{align*}
	Making use of the symmetry of the Lagrangian and applying Fubini's theorem, we can write this expression as
	\begin{align*}
	\Sact(\tilde{\rho}) - \Sact(\rho) &= 2\int_{\F} d(\tilde{\rho} - \rho)(x) \int_{\F} d\rho(y) \:\L(x,y) \\
	&\qquad \quad + \int_{\F} d(\tilde{\rho}- \rho)(x) \int_{\F} d(\tilde{\rho}- \rho)(y) \:\L(x,y) \:.
	\end{align*}

	Since $D \in \mathscr{D}$ is a bounded subset of $\F$, Definition \ref{Definition bounded range} yields the existence of some bounded set $W \subset \F$ such that $\L(x,y) = 0$ for all $x \in D$ and $y \in \F \setminus W$. 
	For arbitrary~$\tilde{\varepsilon} > 0$, by virtue of condition~{\rm{(B)}} in Section~\ref{Section minimizers bounded range} (see \eqref{(assumption B)}) there exists some integer~$n' = n'(D)$ such that~$\rho(W \setminus W^{(n)}) < \tilde{\varepsilon}$ for all~$n \ge n'$. We choose~$n' \in \N$ sufficiently large and let~$\tilde{D} \in \mathscr{D}$ with~$\tilde{D} \subset \F^{(n')}$ and~$\tilde{D} \supset D$. 
	As explained at the beginning of~\S \ref{S minimizer compact}, 
	there exists~$\tilde{D}' \in \mathscr{D}$ with~$\tilde{D}' \subset \F^{(n')}$ and~$\L(x,y) = 0$ for all $x \in \tilde{D}$ and~$y \in \F^{(n')} \setminus \tilde{D}'$. 
	In particular, $\L(x,y) = 0$ for all $x \in D$ and~$y \in \F^{(n')} \setminus \tilde{D}'$. 	
	In view of Lemma~\ref{Lemma weak convergence}, there exist relatively compact open sets~$V, V' \subset \F^{(n')}$ with~$V \supset \tilde{D}$ and~$V' \supset \tilde{D}'$ such that $\L(x,y) = 0$ for all~$x \in V$ and $y \in \F^{(n')} \setminus V'$, and $\rho^{(n)}|_V \rightharpoonup \rho|_V$ as well as~$\rho^{(n)}|_{V'} \rightharpoonup \rho|_{V'}$. 
	These considerations give rise to  
	\begin{align*}
	&\Sact(\tilde{\rho}) - \Sact(\rho) = 2\int_V d(\tilde{\rho}- \rho)(x) \int_{V'} d\rho(y) \:\L(x,y) \\
	&\qquad + 2\int_V d(\tilde{\rho}- \rho)(x) \int_{\F \setminus V'} d\rho(y) \:\L(x,y) + \int_V d(\tilde{\rho} - \rho)(x) \int_V d(\tilde{\rho}- \rho)(y) \:\L(x,y) \:,
	\end{align*}
	where the expression
	\begin{align*}
		\int_V d(\tilde{\rho}- \rho)(x) \int_{\F \setminus V'} d\rho(y) \:\L(x,y) = \int_D d(\tilde{\rho}- \rho)(x) \int_{W \setminus W^{(n')}} d\rho(y) \:\L(x,y) 
	\end{align*}
	can be chosen arbitrarily small for sufficiently large $n' \in \N$. Thus for $\varepsilon > 0$ arbitrary, we may arrange that
	\begin{align*}
	\Sact(\tilde{\rho}) - \Sact(\rho) &\ge 2\int_V d(\tilde{\rho}- \rho)(x) \int_{V'} d\rho(y) \:\L(x,y) \\
	&\qquad \quad + \int_V d(\tilde{\rho} - \rho)(x) \int_V d(\tilde{\rho}- \rho)(y) \:\L(x,y) - \varepsilon \:. 
	\end{align*}
	
	Next, in analogy to the proof of \cite[Theorem 4.9]{noncompact}, for any~$n \in \N$ we introduce the measures $\tilde{\rho}_n : \B(\F) \to [0, \infty]$ by 
	\begin{align*}
	\tilde{\rho}_n := \left\{ \begin{array}{cl}
	c_n \:\tilde{\rho} \qquad &\text{on $V$} \\[0.2em]
	{\rho}^{(n)} \qquad &\text{on $\F \setminus V$}
	\end{array} \right. \qquad \text{with} \qquad c_n := \frac{{\rho}^{(n)}(V)}{\tilde{\rho}(V)} \qquad \text{for all $n \in \N$} \:.
	\end{align*} 
	Since $\rho$ and $\tilde{\rho}$ coincide on $\F \setminus D$, by virtue of \eqref{(4.19)} (see Lemma \ref{Lemma weak convergence}) we obtain 
	\begin{align*}
	\lim_{n \to \infty} c_n = \lim_{n \to \infty} \frac{\rho^{(n)}(V)}{\tilde{\rho}(V)} = \frac{\rho(V)}{\tilde{\rho}(V)} = \frac{\rho(V \setminus D) + \rho(D)}{\tilde{\rho}(V \setminus D) + \tilde{\rho}(D)} = 1 \:. 
	\end{align*}
	Making use of the fact that $V \subset \F$ is separable (see for instance~\cite[Corollary 3.5]{aliprantis}) and applying~\cite[Theorem 2.8]{billingsley} in a similar fashion to the proof of \cite[Theorem 4.9]{noncompact}, we thus 
	arrive at
	\begin{align*}
	\Sact(\tilde{\rho}) - \Sact(\rho) &\ge \lim_{n \to \infty} \left[2 \int_V d\big(c_n \:\tilde{\rho}- \rho^{(n)}\big)(x) \int_{V'} d\rho^{(n)}(y) \:\L(x,y) \right. \\
	&\qquad \qquad  \left. + \int_V d\big(c_n \:\tilde{\rho} - \rho^{(n)}\big)(x) \int_V d\big(c_n \:\tilde{\rho}- \rho^{(n)}\big)(y) \: \L(x,y) \right] - 2 \varepsilon \:.
	\end{align*}
	In view of the fact that $\tilde{\rho}_n$ and $\rho^{(n)}$ coincide on $\F^{(n)} \setminus V$ for sufficiently large~$n \in \N$ and $\L(x,y) = 0$ for all $x \in V$ and~$y \notin V'$, the difference $\Sact(\tilde{\rho}) - \Sact(\rho)$ can finally be estimated by 
	\begin{align*}
	\Sact(\tilde{\rho}) - \Sact(\rho) &\ge \lim_{n \to \infty} \left[2 \int_{\F^{(n)}} d\big(\tilde{\rho}_n- \rho^{(n)}\big)(x) \int_{\F^{(n)}} d\rho^{(n)}(y) \:\L(x,y) \right. \\
	&\qquad \qquad \left. + \int_{\F^{(n)}} d\big(\tilde{\rho}_n - \rho^{(n)}\big)(x) \int_{\F^{(n)}} d\big(\tilde{\rho}_n - \rho^{(n)} \big)(y) \:\L(x,y) \right] - 2 \varepsilon \:. 
	\end{align*}
	Since $\rho^{(n)}$ is a minimizer on $\F^{(n)}$ for every $n \in \N$ (see \S \ref{S Approximating} and \S \ref{S Global Measure}), we are given
	\begin{align}\label{(ge')}
	\big(\Sact_{\F^{(n)}}(\tilde{\rho}_n) - \Sact_{\F^{(n)}}(\rho^{(n)})\big) \ge 0 \qquad \text{for all $n \in \N$} \:.
	\end{align}
	Taking the limit $n \to \infty$ on the left hand side of \eqref{(ge')}, one obtains exactly the above expression in square brackets for~$\Sact(\tilde{\rho}) - \Sact(\rho)$. Since $\varepsilon > 0$ is arbitrary, this implies that
	\begin{align*}
	\left(\Sact(\tilde{\rho}) - \Sact(\rho)\right) \ge 0 \:.
	\end{align*}
	Hence $\rho$ is a minimizer under variations in~$\mathscr{D}$.  
\end{proof}

Our next goal is to extend the previous result to minimizers under variations of finite-dimensional compact support, which is defined as follows. 

\begin{Def}\label{Definition finite-dimensional support}
	A regular measure $\rho$ on $\B(\F)$ is said to be a \textbf{minimizer under variations of finite-dimensional compact support} if the inequality 
	\begin{align*}
	\big(\Sact(\tilde{\rho}) - \Sact(\rho) \big) \ge 0
	\end{align*}
	holds for any regular measure $\tilde{\rho}$ on $\B(\F)$ satisfying \eqref{totvol} with the following property: There exists~$n' \in \N$ such that, for all $n \ge n'$,  
	\[\text{$\supp \big(\tilde{\rho} - \rho \big) \subset \F^{(n)}$ compact} \qquad \text{and} \qquad \supp \big(\tilde{\rho} - \rho \big) \cap \F \setminus \F^{(n)} = \varnothing \:. \] 
\end{Def}

Based on this definition, we may state the following existence result. 

\begin{Prp}\label{Proposition finite-dimensional}
	Assume that $\L \in C_b(\F \times \F; \R_0^+)$ is of bounded range, and assume that condition~\eqref{(ivn)} holds. Furthermore, assume that the measure~${\rho}$ defined by \eqref{(rho)} satisfies condition~{\rm{(B)}} in Section~\ref{Section minimizers bounded range}. 
	Then $\rho$ is a minimizer under variations of finite-dimensional compact support.  
\end{Prp}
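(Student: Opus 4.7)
The plan is to reduce Proposition~\ref{Proposition finite-dimensional} to (a minor adaptation of) the argument already carried out in Proposition~\ref{Proposition minimizer on V}. Let $\tilde{\rho}$ be a regular variation of $\rho$ of finite volume whose difference is supported in a compact set $K \subset \F^{(n')}$ for some (arbitrarily large) $n'$. The only reason Proposition~\ref{Proposition minimizer on V} required $\supp(\tilde{\rho}-\rho)\in\mathscr{D}$ was to fit the support into the family used for constructing the global measure. But the collection~$\mathscr{D}$ was built precisely so that, in every locally compact slice $\F^{(n')}$, the basic compact neighborhoods $V_{j,k}^{(n')}$ form a basis of neighborhoods of a dense set of points in $\F^{(n')}$. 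Hence I can cover the compact set $K \subset \F^{(n')}$ by finitely many such $V_{j,k}^{(n')}$; their union $D$ lies in~$\mathscr{D}^{(n')}$ and satisfies $K \subset D$.

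From here on, I would essentially copy the proof of Proposition~\ref{Proposition minimizer on V}. Using symmetry and Fubini, write
\[
\Sact(\tilde\rho)-\Sact(\rho)=2\int_\F d(\tilde\rho-\rho)(x)\int_\F d\rho(y)\,\L(x,y)+\int_\F d(\tilde\rho-\rho)(x)\int_\F d(\tilde\rho-\rho)(y)\,\L(x,y),
\]
and use bounded range together with condition~(B) to cut the outer $\rho$-integral down to a bounded set $W$ and then to $W \cap \F^{(n')}$ at the cost of an arbitrarily small error $\varepsilon$. As explained at the beginning of \S\ref{S minimizer compact}, one can enlarge $D$ to some $\tilde D \in \mathscr{D}^{(n')}$ with $D \subset \tilde D$ and find $\tilde D' \in \mathscr{D}^{(n')}$ so that $\L(x,y)=0$ for $x \in \tilde D$, $y \in \F^{(n')}\setminus \tilde D'$. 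Then Lemma~\ref{Lemma weak convergence} gives relatively compact open sets $V \supset \tilde D$ and $V' \supset \tilde D'$ in $\F^{(n')}$ on which $\rho^{(n)}|_V \rightharpoonup \rho|_V$ and $\rho^{(n)}|_{V'} \rightharpoonup \rho|_{V'}$ weakly.

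Define the approximating measures $\tilde\rho_n := c_n\,\tilde\rho$ on $V$ and $\tilde\rho_n := \rho^{(n)}$ on $\F\setminus V$, with $c_n := \rho^{(n)}(V)/\tilde\rho(V)$. Lemma~\ref{Lemma weak convergence} (together with the identity $\rho=\tilde\rho$ outside $K \subset V$) gives $c_n \to 1$. Invoking \cite[Theorem~2.8]{billingsley} just as in the proof of Proposition~\ref{Proposition minimizer on V} yields
\[
\Sact(\tilde\rho)-\Sact(\rho)\ge \lim_{n\to\infty}\bigl(\Sact_{\F^{(n)}}(\tilde\rho_n)-\Sact_{\F^{(n)}}(\rho^{(n)})\bigr)-2\varepsilon.
\]
By construction, $\tilde\rho_n$ is a variation of $\rho^{(n)}$ on $\F^{(n)}$ of compact support with matching total volume, so each bracketed term is nonnegative because $\rho^{(n)}$ is a minimizer on $\F^{(n)}$ (\S\ref{S Approximating}). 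Since $\varepsilon>0$ was arbitrary, the result follows.

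The main obstacle is purely bookkeeping: one must be careful that each of the bounded and compactly-supported pieces of the variation really sits inside a single $\F^{(n')}$ (which is guaranteed by the hypothesis in Definition~\ref{Definition finite-dimensional support}), and that the tail $\rho(W\setminus W^{(n')})$ is genuinely small — this is exactly where condition~(B) is used, and it is the only place where the step beyond Proposition~\ref{Proposition minimizer on V} is needed. No new analytic input is required.
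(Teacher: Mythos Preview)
Your proposal is correct and follows essentially the same approach as the paper. The paper's version is slightly more elaborate: it writes the difference of actions over $K_\sharp$ and then adds and subtracts error terms over $V\setminus K_\sharp$, bounding them via regularity --- a step that is in fact redundant here (since $\tilde\rho-\rho$ vanishes outside $K_\sharp\subset V$, your direct reduction to Proposition~\ref{Proposition minimizer on V} works) but sets up the template reused in Lemma~\ref{Lemma compact support}, where the compact support need not lie in any single $\F^{(n')}$.
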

\begin{proof}
	Assuming that $\tilde{\rho}$ is a variation of finite-dimensional compact support, there exists $n' \in \N$ such that 
	\[\text{$K_{\sharp} := \supp \big(\tilde{\rho} - \rho \big) \subset \F^{(n)}$} \]
	is compact for all $n \ge n'$, implying that 
	$\tilde{\rho}(K_{\sharp}) = \rho(K_{\sharp}) < \infty$ (in virtue of Lemma~\ref{Lemma B locally finite}). 
	According to Lemma \ref{Lemma inner regular}, the measure $\rho$ given by~\eqref{(rho)} is regular, and by
	regularity of~${\rho}$ and~$\tilde{\rho}$, for $\tilde{\varepsilon} > 0$ arbitrary we may choose~$U \supset K_{\sharp}$ open such that
	\[{\rho}(U \setminus K_{\sharp}) < \tilde{\varepsilon} \qquad \text{and} \qquad \tilde{\rho}(U \setminus K_{\sharp}) < \tilde{\varepsilon} \:. \]
	By construction of $\mathscr{D}$, there is some compact set~$D \in \mathscr{D}$ such that $K_{\sharp} \subset D \subset U^{(n')}$, where $U^{(n')} := U \cap \F^{(n')}$. In particular,  
	\[{\rho}(D \setminus K_{\sharp}) < \tilde{\varepsilon} \qquad \text{and} \qquad \tilde{\rho}(D \setminus K_{\sharp}) < \tilde{\varepsilon} \:. \]
	Applying Lemma \ref{Lemma weak convergence} yields the existence of some relatively compact set $V \subset \F$ with~$D \subset V \subset U^{(n')}$ such that \eqref{(weak convergence V)} holds. Moreover, 
	\[{\rho}(V \setminus K_{\sharp}) < \tilde{\varepsilon} \qquad \text{and} \qquad \tilde{\rho}(V \setminus K_{\sharp}) < \tilde{\varepsilon} \:. \]
	Since $\tilde{\rho} - \rho$ is a signed measure of finite total variation and compact support, the difference \eqref{integrals} is well-defined (cf. \cite[\S 4.3]{noncompact}), 
	\begin{align*}
	\left(\Sact(\tilde{\rho}) - \Sact(\rho)\right) &= 2\int_{K_{\sharp}} d(\tilde{\rho} - \rho)(x) \int_{\F} d\rho(y) \:\L(x,y) \\
	&\qquad \quad + \int_{K_{\sharp}} d(\tilde{\rho}- \rho)(x) \int_{K_{\sharp}} d(\tilde{\rho}- \rho)(y) \:\L(x,y) \:. 
	\end{align*}
	By adding and subtracting the terms
	\begin{align*}
	2 \int_{V \setminus K_{\sharp}} d(\tilde{\rho} - \rho)(x) \int_{\F} d\rho(y) \: \L(x,y) + \int_{V \setminus K_{\sharp}} d(\tilde{\rho} - \rho)(x) \int_{K_{\sharp}} d(\tilde{\rho} - \rho)(y) \: \L(x,y)
	\end{align*}
	as well as
	\[ \int_{V} d(\tilde{\rho} - \rho)(x) \int_{V \setminus K_{\sharp}} d(\tilde{\rho} - \rho)(y) \: \L(x,y) \:, \]
	one easily verifies that
	\begin{align*}
	&\big(\Sact(\tilde{\rho}) - \Sact(\rho) \big) = 2 \int_{V} d(\tilde{\rho} - \rho)(x) \int_{\F} d\rho(y) \: \L(x,y) \\
	&\quad \!\! + \int_{V} d(\tilde{\rho} - \rho)(x) \int_V d(\tilde{\rho} - \rho)(y) \: \L(x,y) - \left[ \int_{V} d(\tilde{\rho} - \rho)(x) \int_{V \setminus K_{\sharp}} d(\tilde{\rho} - \rho)(y) \: \L(x,y) \right. \\
	&\quad \!\! + \left. 2 \int_{V \setminus K_{\sharp}} d(\tilde{\rho} - \rho)(x) \int_{\F} d\rho(y) \: \L(x,y) + \int_{V \setminus K_{\sharp}} d(\tilde{\rho} - \rho)(x) \int_{K_{\sharp}} d(\tilde{\rho} - \rho)(y) \: \L(x,y)\right] \:. 
	\end{align*}
	Since the Lagrangian $\L$ is of bounded range (see Definition~\ref{Definition bounded range}), its restriction $\L^{(n')}$ is of compact range, implying that $\L(x,y) = 0$ for all~$x \in V$ and~$y \notin V'$ for some relatively compact, open set~$V' \subset \F^{(n')}$ such that \eqref{(weak convergence V)} holds. 
	Choosing the open set~$U \supset K_{\sharp}$ suitably, one thus can arrange that 
	\begin{align*}
	\left|\int_{V \setminus K_{\sharp}} d(\tilde{\rho} - \rho)(x) \int_{\F} d\rho(y) \: \L(x,y) \right| \le \underbrace{\sup_{x,y \in V'} \L(x,y) \: \rho(V')}_{\text{$< \infty$}} \underbrace{\big(\left|\tilde{\rho}(V \setminus K_{\sharp})\right| + \left|\rho(V \setminus K_{\sharp}) \right|\big)}_{\text{$< \tilde{\varepsilon}$}}
	\end{align*}
	is arbitrarily small. 
	Applying similar arguments to all summands of the above term in square brackets, one obtains the estimate
	\begin{align*}
	\big(\Sact(\tilde{\rho}) - \Sact(\rho) \big) &\ge \left\{ 2 \int_{V} d(\tilde{\rho} - \rho)(x) \int_{\F} d\rho(y) \: \L(x,y) \right. \\
	& \qquad \qquad \left. + \int_{V} d(\tilde{\rho} - \rho)(x) \int_V d(\tilde{\rho} - \rho)(y) \: \L(x,y) \right\} - \varepsilon 
	\end{align*}
	for any given $\varepsilon > 0$. 
	Proceeding in analogy to the proof of Proposition~\ref{Proposition minimizer on V} by applying weak convergence~\eqref{(weak convergence V)} together with \cite[Theorem 2.8]{billingsley} (for details we refer to the proof of \cite[Theorem 4.9]{noncompact}), one can show that the term in curly brackets is greater than or equal to zero, up to an arbitrarily small error term. Since $\varepsilon > 0$ was chosen arbitrarily, we finally arrive at
	\begin{align*}
	\big(\Sact(\tilde{\rho}) - \Sact(\rho) \big) \ge 0 \:,
	\end{align*}
	which proves the claim. 
\end{proof}

\subsection{Existence of Minimizers under Variations of Compact Support}\label{S compact support}
Having derived the above preparatory results in \S \ref{S minimizer compact}, we are now in the position to deal with minimizers under variations of compact support (see Definition \ref{Definition compact support}). For clarity, we point out that, whenever~$\tilde{\rho}$ is a variation of compact support of the measure $\rho$ satisfying~\eqref{totvol}, the condition~$|\tilde{\rho} - \rho| < \infty$ yields~$\rho(K) = \tilde{\rho}(K) < \infty$, where the compact set $K \subset \F$ is defined by~$K := \supp (\tilde{\rho} - \rho)$ (for details see Definition~\ref{deffv} and the explanations in \S \ref{seccvpsigma}). 

\begin{Lemma}\label{Lemma compact support}
	Assume that $\L \in C_b(\F \times \F; \R_0^+)$ is of bounded range, and assume that condition~\eqref{(ivn)} holds. Moreover, assume that the measure~${\rho}$ defined by \eqref{(rho)} satisfies condition~{\rm{(B)}} in Section~\ref{Section minimizers bounded range}. 
	Then $\rho$ is a minimizer under variations of compact support. 
\end{Lemma}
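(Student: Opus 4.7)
\textbf{Proof plan for Lemma \ref{Lemma compact support}.}
The strategy is to reduce the lemma to Proposition~\ref{Proposition finite-dimensional} by approximating an arbitrary variation of compact support by variations of finite-dimensional compact support and passing to the limit in the action.

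To set up, let $\tilde{\rho}$ be a variation of compact support of $\rho$ and set $K := \supp(\tilde{\rho}-\rho)$. By the discussion preceding the lemma, $K$ is compact and $\tilde{\rho}(K) = \rho(K) < \infty$. Since $K$ is compact (hence bounded) in the metric space $\F$ and $\L$ is of bounded range, Definition~\ref{Definition bounded range} yields a bounded set $W \supset K$ such that $\L(x,y) = 0$ whenever $x \in K$ and $y \notin W$; by symmetry of $\L$ we may arrange the same statement with the roles of $x$ and $y$ interchanged. Lemma~\ref{Lemma B locally finite} gives $\rho(W) < \infty$, and condition~(B) gives $\rho(W \setminus \F^{(n)}) \to 0$ as $n \to \infty$.

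Next, using the Jordan decomposition write $\nu := \tilde{\rho} - \rho = \nu^+ - \nu^-$ with mutually singular finite Borel measures $\nu^\pm$ supported in $K$; positivity of $\tilde{\rho}$ forces $\nu^- \leq \rho|_K$. Define approximating variations
\[
\tilde{\rho}_n := \rho + \nu^+|_{\F^{(n)}} - \nu^-|_{\F^{(n)}} + \big(\nu^+(K \setminus \F^{(n)}) - \nu^-(K \setminus \F^{(n)})\big)\,\sigma_n,
\]
where $\sigma_n$ is a probability measure supported in a fixed compact set $D_0 \in \mathscr{D}$ with $D_0 \subset K \cap \F^{(n)}$ for all $n \geq n_0$ (the existence of such a $D_0$ follows from condition~(B), since $\rho(K) > 0$ implies $\rho(K \cap \F^{(n)}) > 0$ eventually). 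By construction $(\tilde{\rho}_n - \rho)(\F) = 0$, the signed measure $\tilde{\rho}_n - \rho$ is supported in the compact subset $(K \cap \F^{(n)}) \cup D_0$ of $\F^{(n)}$, and the correction is small enough (via Lemma~\ref{Lemma B locally finite} applied to $\nu^- \leq \rho$) that $\tilde{\rho}_n \geq 0$ for large $n$. Thus each $\tilde{\rho}_n$ is a variation of finite-dimensional compact support in the sense of Definition~\ref{Definition finite-dimensional support}, and Proposition~\ref{Proposition finite-dimensional} gives $\Sact(\tilde{\rho}_n) - \Sact(\rho) \geq 0$ for every sufficiently large $n$.

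It remains to show $\Sact(\tilde{\rho}_n) \to \Sact(\tilde{\rho})$. Expanding $\Sact(\tilde{\rho}_n) - \Sact(\rho)$ and $\Sact(\tilde{\rho}) - \Sact(\rho)$ via~\eqref{integrals} and exploiting bounded range of $\L$, every integral reduces to one over $W \times W$. Boundedness of $\L$ and finiteness of $\rho(W)$ yield the uniform bound $\int_\F \L(x,y)\,d\rho(y) \leq \|\L\|_\infty \rho(W)$ for $x \in K$, which supplies a dominating function for the $\nu^-$-integrals; together with the domination $\nu^- \leq \rho$ and condition~(B) the contributions from $\nu^-|_{\F \setminus \F^{(n)}}$ vanish in the limit by dominated convergence. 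The $\nu^+$-contributions are handled by monotone convergence along the increasing sequence $\F^{(n)} \uparrow \bigcup_n \F^{(n)}$, absorbing the remaining mass into the vanishingly small correction term $\sigma_n$. Passing to the limit in $\Sact(\tilde{\rho}_n) \geq \Sact(\rho)$ gives the claim. The main obstacle is to verify that the correction by $\sigma_n$ together with the Jordan-decomposition splitting is compatible with positivity of $\tilde{\rho}_n$ while simultaneously giving action convergence; this is where the combination of condition~(B), the domination $\nu^- \leq \rho$, and the bounded-range/boundedness of $\L$ is used in an essential way.
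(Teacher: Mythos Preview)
Your reduction to Proposition~\ref{Proposition finite-dimensional} via the restrictions $\nu^\pm|_{\F^{(n)}}$ has a genuine gap. The union $\bigcup_n \F^{(n)} = \F \cap \bigcup_n X_n$ is only \emph{dense} in $\F$, not equal to $\F$. Condition~(B) guarantees $\rho(K \setminus \F^{(n)}) \to 0$, and this is why your treatment of $\nu^-$ (via $\nu^- \le \rho|_K$) goes through; but there is no analogous hypothesis on the arbitrary variation $\tilde\rho$, hence none on $\nu^+ \le \tilde\rho|_K$. Concretely, if $\nu^+$ charges a point $x_0 \in K \setminus \bigcup_n \F^{(n)}$ (nothing in the assumptions excludes this), then $\nu^+|_{\F^{(n)}}$ misses that mass for every $n$, so $\nu^+(K\setminus \F^{(n)}) \not\to 0$. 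Consequently the correction coefficient $c_n = \nu^+(K\setminus\F^{(n)}) - \nu^-(K\setminus\F^{(n)})$ does \emph{not} vanish, and your ``monotone convergence along $\F^{(n)}\uparrow\bigcup_n\F^{(n)}$'' recovers only the $\nu^+$-integral over $\bigcup_n\F^{(n)}$, not over $K$. In the limit your $\tilde\rho_n$ approaches a \emph{different} variation of $\rho$ (one in which $\nu^+|_{K\setminus\bigcup_n\F^{(n)}}$ has been replaced by a multiple of the fixed measure $\sigma$), so $\Sact(\tilde\rho_n)\to\Sact(\tilde\rho)$ fails in general.

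The paper takes a different route and does not restrict $\tilde\rho$ to $\F^{(n)}$. It first enlarges $K$ to an open $U$ with $\rho(U\setminus K),\,\tilde\rho(U\setminus K)$ small by outer regularity, then invokes the $\mathscr{D}$-inner-regularity of $\rho$ built into the construction (equation~\eqref{(regular)}) to find $D\in\mathscr{D}$ with $D\subset U$, and passes to a relatively compact $V\subset\F^{(n')}$ via Lemma~\ref{Lemma weak convergence}. The comparison on $\F^{(n)}$ is then obtained by \emph{scaling}, setting $\tilde\rho_n := c_n\,\tilde\rho$ on $V$ and $\rho^{(n)}$ elsewhere (as in the proof of Proposition~\ref{Proposition minimizer on V}), and the limit is taken through the weak convergence $\rho^{(n)}|_V\rightharpoonup\rho|_V$ rather than by truncating $\tilde\rho$.

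A minor secondary point: your $D_0\in\mathscr{D}$ with $D_0\subset K\cap\F^{(n)}$ need not exist, since members of $\mathscr{D}$ have non-empty interior in some $\F^{(m)}$ while $K\cap\F^{(n)}$ may well have empty interior; this is easily repaired by letting $\sigma_n$ live anywhere in $\F^{(n)}$, but the main gap above remains.
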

\begin{proof}
	Let $\tilde{\rho}$ be a variation of compact support. Then the set~$K := \supp (\tilde{\rho} - \rho) \subset \F$ is compact, and~$\rho(K) = \tilde{\rho}(K) < \infty$ according to \eqref{totvol}.  
	Given $\tilde{\varepsilon} > 0$ arbitrary, by regularity of $\rho$, $\tilde{\rho}$ there exists~$U \supset K$ open such that $\rho(U), \tilde{\rho}(U) < \infty$ and 
	\begin{align*}
	\rho(U \setminus K) < \tilde{\varepsilon} \qquad \text{and} \qquad \tilde{\rho}(U \setminus K) < \tilde{\varepsilon} \:.
	\end{align*}
	Moreover, in view of \eqref{(regular)}, there exists $D \in \mathscr{D}$ such that
	\begin{align*}
	\rho(U \setminus D) < \tilde{\varepsilon} \qquad \text{and} \qquad \tilde{\rho}(U \setminus D) < \tilde{\varepsilon} \:.
	\end{align*}
	Since $K \subset \F$ is compact, the difference \eqref{integrals} is well-defined (cf. \cite[\S 4.3]{noncompact}), 
	\begin{align*}
	\left(\Sact(\tilde{\rho}) - \Sact(\rho)\right) &= 2\int_{K} d(\tilde{\rho} - \rho)(x) \int_{\F} d\rho(y) \:\L(x,y) \\
	&\qquad \quad + \int_{K} d(\tilde{\rho}- \rho)(x) \int_{K} d(\tilde{\rho}- \rho)(y) \:\L(x,y) \:. 
	\end{align*}
	Note that, by definition of $\mathscr{D}$, each $D \in \mathscr{D}$ is the finite union of finite-dimensional subsets of~$(\F^{(n)})_{n \in \N}$ (cf.~\S \ref{S Construction of Countable Set}). As a consequence, for each $D \in \mathscr{D}$ there exists $n' \in \N$ such that $D \subset \F^{(n)}$ for all $n \ge n'$. Moreover, by construction of $\mathscr{D}$ there exists $E \in \mathscr{D}$ such that $D \subset E^{\circ} \subset U^{(n)}$, where $U^{(n)} = U \cap \F^{(n)}$, and according to Lemma \ref{Lemma weak convergence} there exists a relatively compact, open set $V \subset E^{\circ}$ such that $D \subset V$. In particular, 
	\[\rho(U \setminus V) < \tilde{\varepsilon} \:, \qquad \tilde{\rho}(U \setminus V) < \tilde{\varepsilon} \:.  \]
	Thus adding and subtracting the terms
	\begin{align*}
	2 \int_{U \setminus K} d(\tilde{\rho} - \rho)(x) \int_{\F} d\rho(y) \: \L(x,y) + \int_{U \setminus K} d(\tilde{\rho} - \rho)(x) \int_{K} d(\tilde{\rho} - \rho)(y) \: \L(x,y)
	\end{align*}
	as well as
	\[ \int_{U} d(\tilde{\rho} - \rho)(x) \int_{U \setminus K} d(\tilde{\rho} - \rho)(y) \: \L(x,y) \:, \] 
	and proceeding in analogy to the proof of Proposition \ref{Proposition finite-dimensional}, we conclude that 
	\begin{align*}
	\big(\Sact(\tilde{\rho}) - \Sact(\rho) \big) \ge 0 \:. 
	\end{align*}
	Hence $\rho$ is indeed a minimizer under variations of compact support. 
\end{proof}

\subsection{Existence of Minimizers under Variations of Finite Volume}\label{S finite volume} We now proceed similarly to~\cite[\S 4.4]{noncompact} in order to prove the existence of minimizers under variations of finite volume (see Definition \ref{Definition finite volume}). For the difference of the actions \eqref{integrals} to be well-defined, we require the additional property~{\rm{(iv)}} in \S \ref{seccvpsigma}, i.e. 
\begin{align*}
	\sup_{x \in \F} \int_{\F}\L(x,y)\: d\rho(y) < \infty \:.
\end{align*}
Then we can state the following result.
\begin{Thm}\label{Theorem finite volume}
	Assume that $\L \in C_b(\F \times \F; \R_0^+)$ is of bounded range, and assume that condition~\eqref{(ivn)} holds. Furthermore, assume that the measure~${\rho}$ defined by~\eqref{(rho)} satisfies condition~{\rm{(B)}} in Section~\ref{Section minimizers bounded range}. 
	Then $\rho$ is a minimizer under variations of finite volume. 
\end{Thm}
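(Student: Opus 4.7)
The plan is to reduce the claim to Lemma~\ref{Lemma compact support} by approximating an arbitrary variation of finite volume by variations of compact support. Let $\tilde{\rho}$ be a regular measure on $\B(\F)$ satisfying \eqref{totvol}. By Definition~\ref{Definition finite volume} there exists $B \in \B(\F)$ with $\rho(B), \tilde{\rho}(B) < \infty$ and $\rho|_{\F \setminus B} = \tilde{\rho}|_{\F \setminus B}$, together with $\rho(B) = \tilde{\rho}(B)$. Since $\rho$ is regular (Lemma~\ref{Lemma inner regular}) and $\tilde{\rho}$ is regular by assumption, inner regularity applied to the finite-measure set $B$ yields an increasing sequence of compact sets $K_n \subset B$ such that
\[ \rho(B \setminus K_n) + \tilde{\rho}(B \setminus K_n) \to 0 \qquad \text{as } n \to \infty \:. \]
For $n$ sufficiently large we have $\tilde{\rho}(K_n) > 0$; set $c_n := \rho(K_n)/\tilde{\rho}(K_n)$, so that $c_n \to 1$, and define the regular measure
\[ \tilde{\rho}_n := c_n\, \tilde{\rho}|_{K_n} + \rho|_{\F \setminus K_n} \:. \]
By construction $\supp(\tilde{\rho}_n - \rho) \subset K_n$ is compact, and
\[ \tilde{\rho}_n(\F) = c_n\, \tilde{\rho}(K_n) + \rho(\F \setminus K_n) = \rho(K_n) + \rho(\F \setminus K_n) = \rho(\F) \:, \]
so $\tilde{\rho}_n$ is a variation of compact support of $\rho$. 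Lemma~\ref{Lemma compact support} then gives $\Sact(\tilde{\rho}_n) - \Sact(\rho) \ge 0$ for every sufficiently large $n$.

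It remains to show
\[ \Sact(\tilde{\rho}_n) - \Sact(\rho) \;\longrightarrow\; \Sact(\tilde{\rho}) - \Sact(\rho) \:. \]
The differences $\tilde{\rho} - \tilde{\rho}_n$ are supported in $B$ with finite total variation, hence the difference $\Sact(\tilde{\rho}) - \Sact(\tilde{\rho}_n)$ can be expanded via \eqref{integrals} and, using symmetry of $\L$, reduced to
\begin{align*}
\Sact(\tilde{\rho}) - \Sact(\tilde{\rho}_n) &= 2 \int_{\F} d(\tilde{\rho} - \tilde{\rho}_n)(x) \int_{\F} d\tilde{\rho}_n(y)\: \L(x,y) \\
&\qquad + \int_{\F} d(\tilde{\rho} - \tilde{\rho}_n)(x) \int_{\F} d(\tilde{\rho} - \tilde{\rho}_n)(y)\: \L(x,y) \:.
\end{align*}
From the explicit decomposition of $\tilde{\rho}_n$, together with boundedness of $\L$ and condition~\eqref{Lint}, one obtains the uniform bound
\[ \sup_{n}\, \sup_{x \in \F} \int_{\F} \L(x,y)\: d\tilde{\rho}_n(y) \;<\; \infty \:, \]
because the inner integral splits into the contribution over $K_n$ (controlled by $\|\L\|_\infty \cdot c_n \tilde{\rho}(K_n)$) and the contribution over $\F \setminus K_n$ (controlled by $\sup_x \int \L(x,y)\, d\rho(y)$). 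Meanwhile the total variation estimate
\[ \| \tilde{\rho} - \tilde{\rho}_n \|_{\mathrm{TV}} \;\le\; |c_n - 1|\, \tilde{\rho}(K_n) + \tilde{\rho}(B \setminus K_n) + \rho(B \setminus K_n) \;\longrightarrow\; 0 \]
together with the uniform bound above shows that both terms in the displayed expansion tend to zero. Since $\Sact(\tilde{\rho}_n) - \Sact(\rho) \ge 0$ for all large $n$, passing to the limit yields $\Sact(\tilde{\rho}) - \Sact(\rho) \ge 0$, which is the desired inequality.

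The only delicate point is the uniform bound on $\sup_x \int \L(x,y)\, d\tilde{\rho}_n(y)$ which makes the first term in the expansion vanish; this uses condition~\eqref{Lint}, boundedness of $\L$, and the fact that $c_n$ stays bounded. The second (quadratic) term is immediate from $\|\L\|_\infty < \infty$. The rest of the argument is bookkeeping with regularity and inner regularity; the key structural input is Lemma~\ref{Lemma compact support}, which supplies the nonnegativity along the compactly supported approximants.
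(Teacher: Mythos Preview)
Your proof is correct and takes a genuinely different route from the paper. The paper's argument essentially re-runs the machinery of Lemma~\ref{Lemma compact support}: it takes $B=\supp(\tilde\rho-\rho)$, chooses an open $U\supset B$ with $\rho(U\setminus B),\tilde\rho(U\setminus B)<\tilde\varepsilon$, and then repeats the approximation by sets in~$\mathscr{D}$, the weak-convergence argument for $\rho^{(n)}$, and the comparison with the finite-dimensional minimizers. By contrast, you use Lemma~\ref{Lemma compact support} as a black box and work entirely at the level of the variation~$\tilde\rho$: you approximate it by the compactly supported variations~$\tilde\rho_n$ and pass to the limit directly in the action. This is cleaner and more conceptual, and it isolates exactly where boundedness of~$\L$ and condition~(iv) are needed (namely, for the uniform bound $\sup_n\sup_x\int\L(x,y)\,d\tilde\rho_n(y)<\infty$ and the well-definedness of~\eqref{integrals}). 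The paper's approach, on the other hand, stays closer to the underlying construction and would adapt more readily if one wanted to weaken the hypotheses on~$\L$.

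Two small points worth tightening. First, you should note that $\tilde\rho_n$ is locally finite (hence regular by Ulam's theorem), since $\rho$ is locally finite by Lemma~\ref{Lemma B locally finite} and $\tilde\rho_n-\rho$ is a finite signed measure on~$K_n$; this is needed to invoke Lemma~\ref{Lemma compact support}. Second, the identity $(\Sact(\tilde\rho)-\Sact(\rho))=(\Sact(\tilde\rho)-\Sact(\tilde\rho_n))+(\Sact(\tilde\rho_n)-\Sact(\rho))$ is a statement about the formal differences~\eqref{integrals}, not about actual (possibly infinite) actions; it holds by a direct bilinear expansion, but it is worth a sentence. Finally, both your argument and the paper's invoke condition~(iv) (your~\eqref{Lint}) for~$\rho$, which is stated in the section preamble rather than in the theorem itself; you are using it consistently with the paper.
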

\begin{proof}
	Assume that $\tilde{\rho}$ be a variation of finite volume satisfying \eqref{totvol}. Introducing the set~$B := \supp (\tilde{\rho} - \rho)$, we thus obtain $\rho(B) = \tilde{\rho}(B) < \infty$. Given $\tilde{\varepsilon} > 0$ arbitrary, by regularity of $\rho$, $\tilde{\rho}$ there exists~$U \supset B$ open such that 
	\begin{align*}
	\rho(U \setminus B) < \tilde{\varepsilon} \qquad \text{and} \qquad \tilde{\rho}(U \setminus B) < \tilde{\varepsilon} \:.
	\end{align*}
	Making use of the additional assumption that condition (iv) in \S \ref{seccvpsigma} is satisfied, the difference of the actions \eqref{integrals} is well-defined. 
	Therefore, proceeding in analogy to the proof of Lemma \ref{Lemma compact support} finally gives rise to 
	\begin{align*}
	\big(\Sact(\tilde{\rho}) - \Sact(\rho) \big) \ge 0 \:,
	\end{align*}
	which implies that $\rho$ is a minimizer under variations of finite volume.  
\end{proof}

The remainder of this section is devoted to the derivation of the corresponding EL equations for minimizers under variations of finite volume (\S \ref{S EL}).

\subsection{Derivation of the Euler-Lagrange Equations}\label{S EL}
The strategy in \cite{noncompact} was to derive the EL equations in order to prove the existence of minimizers under variations of finite volume. However, proceeding similar to \cite[\S 4.2]{noncompact} does not seem promising in the infinite-dimensional setting. For this reason, we rather proceed in the opposite direction by first proving the existence of minimizers and then deriving the EL equations. More precisely, under the assumptions that condition (iv) in \S \ref{seccvpsigma} as well as condition~{\rm{(B)}} in Section~\ref{Section minimizers bounded range} are satisfied, 
Theorem \ref{Theorem finite volume} shows that the measure $\rho$ given by~\eqref{(rho)} is a minimizer under variations of finite volume (see Definition \ref{Definition finite volume}). Under these assumptions, Lemma \ref{Lemma iv Borel} implies that $\rho$ is locally finite. 
This allows us to proceed similarly to the proof of~\cite[Lemma 2.3]{jet}, thus giving rise to corresponding EL equations. For convenience, let us state the latter result in greater generality.

\begin{Thm}[\textbf{The Euler-Lagrange equations}]\label{Theorem EL}
	Let $\F$ be topological Hausdorff space, let $\rho$ be a Borel measure on $\F$ (in the sense of~\cite{gardner+pfeffer}, i.e.\ a locally finite measure) and assume that $\L : \F \times \F \to \R_0^+$ is symmetric and lower semi-continuous. If $\rho \not= 0$ is a 
	minimizer of the causal variational principle \eqref{(cvp)}, \eqref{totvol} under variations of finite volume, then the \emph{Euler-Lagrange equations}
	\begin{align}\label{(EL)}
	\ell|_{\supp \rho} \equiv \inf_{x \in \F} \ell(x) 
	\end{align}
	hold, where the mapping $\ell : \F \to [0, \infty)$ is defined by 
	\begin{align}\label{(ell)}
	\ell(x) := \int_{\F} \L(x,y) \: d\rho(y) - \mathfrak{s}
	\end{align}
	for some parameter~$\mathfrak{s} \in \R$. 
\end{Thm}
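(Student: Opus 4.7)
My plan is the classical Lagrange-multiplier argument adapted to the measure-theoretic setting, mimicking the proof of~\cite[Lemma~2.3]{jet}. One constructs a one-parameter family of admissible variations of~$\rho$ that preserves the zero-total-mass constraint in~\eqref{totvol}, reads off the first-order optimality condition, and upgrades the resulting integrated inequality to a pointwise identity on~$\supp\rho$ via lower semi-continuity of~$\ell$.

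Since~$\rho \neq 0$ is locally finite, there exists an open set with positive finite $\rho$-measure. For any $x_0 \in \F$, any Borel set $V \subset \F$ with $0 < \rho(V) < \infty$, and any $\tau \in [0,1]$, I would consider the non-negative regular Borel measure
\begin{align*}
\tilde{\rho}_\tau := \rho - \tau\, \rho|_V + \tau\, \rho(V)\, \delta_{x_0} \:.
\end{align*}
By construction, $(\tilde{\rho}_\tau - \rho)(\F) = 0$ and $|\tilde{\rho}_\tau - \rho|(\F) \le 2\tau \rho(V) < \infty$, so $\tilde{\rho}_\tau$ is a variation of finite volume in the sense of Definition~\ref{Definition finite volume}. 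Plugging $\tilde{\rho}_\tau$ into~\eqref{integrals}, using the symmetry of~$\L$ and the fact that $\int d(\tilde{\rho}_\tau - \rho) = 0$ annihilates the additive constant $-\mathfrak{s}$ inside $\ell$, the linear-in-$\tau$ contribution to $\Sact(\tilde{\rho}_\tau) - \Sact(\rho)$ equals
\begin{align*}
2 \tau \Bigl( \rho(V)\, \ell(x_0) - \int_V \ell\, d\rho \Bigr),
\end{align*}
while the remainder is of order $\tau^2$. Dividing the minimizing inequality $\Sact(\tilde{\rho}_\tau) - \Sact(\rho) \ge 0$ by $\tau$ and sending $\tau \downarrow 0$ yields
\begin{align*}
\ell(x_0) \ge \frac{1}{\rho(V)} \int_V \ell\, d\rho \qquad \text{for every $x_0 \in \F$ and every admissible $V$.}
\end{align*}

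Taking the infimum of the left-hand side over $x_0 \in \F$, together with the trivial pointwise bound $\ell \ge \inf_\F \ell$, forces $\frac{1}{\rho(V)} \int_V \ell\, d\rho = \inf_\F \ell =: c$ for every admissible~$V$. Combined with $\ell \ge c$ pointwise, this means $\ell = c$ $\rho$-almost everywhere on every open $V$ with $\rho(V) > 0$; equivalently, the set $\{\ell > c\}$ has $\rho$-measure zero, since otherwise it would itself serve as an admissible~$V$. As $\{\ell > c\}$ is open by lower semi-continuity of~$\ell$ (which in turn follows from Fatou's lemma applied to the lsc Lagrangian~$\L$), its complement is closed and contains $\supp\rho$; hence $\ell(x^*) \le c$ for every $x^* \in \supp\rho$, and combined with $\ell \ge c$ pointwise this gives $\ell \equiv c$ on $\supp\rho$. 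Choosing $\mathfrak{s} := \inf_{x \in \F} \int_\F \L(x,y)\, d\rho(y)$ makes $c = 0$, so $\ell\colon \F \to [0,\infty)$, as required.

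The principal technical difficulty will be justifying the first-order expansion and the quadratic remainder estimate without any boundedness or finiteness hypothesis on~$\L$: a priori $\ell(x_0)$ may equal $+\infty$ and the quadratic term $\tau^2 \iint \L\, d\sigma\, d\sigma$ with $\sigma := -\rho|_V + \rho(V)\delta_{x_0}$ need not be finite. I would handle this exactly as in~\cite[Lemma~2.1]{jet} and~\cite[Lemma~2.3]{jet}: the target inequality is trivial when $\ell(x_0) = +\infty$, so one may restrict to $x_0$ with $\ell(x_0) < \infty$, and then choose $V$ inside a bounded level set $\{\ell \le k\}$ of positive $\rho$-measure, on which all three integrals in~\eqref{integrals} are automatically finite. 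This suffices for the upgrade step, since the lower semi-continuity argument only requires the identity $\ell = c$ $\rho$-almost everywhere on open neighborhoods of points $x^* \in \supp\rho$.
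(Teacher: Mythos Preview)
Your argument follows precisely the route the paper prescribes: its entire proof is the single line ``Proceed in analogy to the proof of \cite[Lemma 2.3]{jet},'' and you have reproduced that argument faithfully, including the handling of the quadratic remainder via restriction to sublevel sets of~$\ell$.

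One small slip in the upgrade step: you assert that if $\rho(\{\ell>c\})>0$ then $\{\ell>c\}$ ``would itself serve as an admissible~$V$,'' but admissibility requires $\rho(V)<\infty$, and nothing in the hypotheses rules out $\rho(\{\ell>c\})=\infty$. The fix is immediate and is in fact closer to what you sketch in your last paragraph: for any $x^*\in\supp\rho$, if $\ell(x^*)>c$ then by lower semi-continuity there is an open $W\ni x^*$ contained in $\{\ell>c\}$, and by local finiteness of~$\rho$ you may shrink~$W$ so that $\rho(W)<\infty$; since $x^*\in\supp\rho$ forces $\rho(W)>0$, this~$W$ is admissible, and $\int_W(\ell-c)\,d\rho=0$ together with $\ell>c$ on~$W$ gives the contradiction. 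This yields $\ell\equiv c$ on $\supp\rho$ directly, without needing the intermediate (and in a general Hausdorff space not obviously true) claim that $\rho(\{\ell>c\})=0$.
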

\begin{proof}
	Proceed in analogy to the proof of \cite[Lemma 2.3]{jet}.
\end{proof}

For clarity, we point out that Theorem \ref{Theorem EL} requires that $\rho$ is locally finite. Choosing the parameter~$\mathfrak{s}$ suitably, one can arrange that the infimum in \eqref{(EL)} vanishes: 

\begin{Lemma}\label{Lemma EL0}
	Assume that the measure~$\rho$ given by \eqref{(rho)} is non-zero. 
	Then, under the assumptions of Theorem \ref{Theorem finite volume}, 
	for a suitable choice of~$\mathfrak{s} \ge 0$ in~\eqref{(ell)} the Euler-Lagrange equations~\eqref{(EL)} read  
	\begin{align}\label{(EL0)}
	\ell|_{\supp \rho} \equiv \inf_{x \in \F} \ell(x) = 0 \:.
	\end{align}
\end{Lemma}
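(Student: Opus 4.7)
The lemma is essentially a normalization statement: Theorem \ref{Theorem EL} asserts the existence of a Lagrange-multiplier-like constant $\mathfrak{s}$ making the EL equations hold, and Lemma \ref{Lemma EL0} only needs to check that this $\mathfrak{s}$ can be taken to be non-negative and to pin the common value of $\ell|_{\supp \rho}$ and $\inf_\F \ell$ at zero simultaneously.

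The plan is to first invoke Theorem \ref{Theorem finite volume} to conclude that, under the present hypotheses (continuity and bounded range of $\L$, condition \eqref{(ivn)}, and condition~(B)), the measure $\rho$ defined by \eqref{(rho)} is a minimizer of the causal variational principle under variations of finite volume. Next, since condition~(iv) of \S\ref{seccvpsigma} is assumed (it is what makes the difference of actions \eqref{integrals} well-defined in Theorem~\ref{Theorem finite volume}), Lemma \ref{Lemma iv Borel} gives that $\rho$ is locally finite, i.e.\ a Borel measure in the sense of \cite{gardner+pfeffer}. These are precisely the hypotheses of Theorem~\ref{Theorem EL}, so that theorem applies and yields some $\mathfrak{s}_0 \in \R$ such that $\ell_0(x) := \int_\F \L(x,y)\,d\rho(y) - \mathfrak{s}_0$ satisfies $\ell_0|_{\supp \rho} \equiv \inf_{x \in \F} \ell_0(x)$.

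The key observation is that the auxiliary function
\[
L(x) := \int_\F \L(x,y)\,d\rho(y)
\]
is independent of the choice of $\mathfrak{s}$, and the EL identity from Theorem~\ref{Theorem EL} is equivalent to the $\mathfrak{s}$-free statement $L|_{\supp \rho} \equiv \inf_{x \in \F} L(x)$. Set
\[
\mathfrak{s} := \inf_{x \in \F} L(x) .
\]
Because $\L \geq 0$, one has $L(x) \geq 0$ for every $x \in \F$, hence $\mathfrak{s} \geq 0$. Condition~(iv) together with the assumption $\rho \neq 0$ (so $\supp \rho \neq \varnothing$) shows that $\mathfrak{s}$ is finite: indeed, for any $x_0 \in \supp \rho$ we have $0 \le \mathfrak{s} \le L(x_0) < \infty$. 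Defining $\ell(x) := L(x) - \mathfrak{s}$ as in \eqref{(ell)}, the identity $L|_{\supp \rho} \equiv \inf_\F L = \mathfrak{s}$ becomes
\[
\ell|_{\supp \rho} \equiv 0 = \inf_{x \in \F} \ell(x),
\]
which is \eqref{(EL0)}.

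No serious obstacle is expected; the entire content is a normalization of the arbitrary additive constant $\mathfrak{s}$ permitted by Theorem \ref{Theorem EL}. The one point to be careful about is that all the technical assumptions needed to chain Theorem \ref{Theorem finite volume} $\Rightarrow$ Lemma~\ref{Lemma iv Borel} $\Rightarrow$ Theorem~\ref{Theorem EL} are in force (in particular, condition~(iv) and the hypothesis $\rho \neq 0$ are both used, the former to guarantee local finiteness and finiteness of $\mathfrak{s}$, the latter to ensure $\supp \rho \neq \varnothing$ so that the EL identity has content).
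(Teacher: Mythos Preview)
Your proof is correct and follows essentially the same route as the paper: invoke Theorem~\ref{Theorem finite volume} to obtain a minimizer, use Lemma~\ref{Lemma iv Borel} (via condition~(iv)) to get local finiteness, apply Theorem~\ref{Theorem EL} to obtain the EL equations, and then normalize the additive constant. The only difference is that you make the choice $\mathfrak{s} = \inf_{x \in \F} \int_\F \L(x,y)\,d\rho(y)$ explicit, whereas the paper simply observes that $0 \le \int_\F \L(x,y)\,d\rho(y) < \infty$ for all $x$ and says this ``allows us to choose $\mathfrak{s} \ge 0$'' appropriately.
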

\begin{proof}
	Assuming that $\rho \not= 0$, we conclude that $\supp \rho \not= \varnothing$. 
	Moreover, under the assumptions of Theorem \ref{Theorem finite volume}, by Lemma \ref{Lemma iv Borel} we know that~$\rho$ is locally finite. Next, from Theorem \ref{Theorem finite volume} and Theorem \ref{Theorem EL} we deduce that~$\rho$ satisfies the EL equations~\eqref{(EL)}. By assumption, condition~{\rm{(iv)}} in~\S \ref{seccvpsigma} is satisfied, implying that 
	\begin{align*}
	0 \le \int_{\F} \L(x,y) \: d\rho(y) < \infty \qquad \text{for every $x \in \F$} \:. 
	\end{align*}
	This allows us to choose $\mathfrak{s} \ge 0$ such that \eqref{(EL0)} holds.
\end{proof}

Lemma \ref{Lemma EL0} generalizes the results of \cite[Section 2]{jet} and \cite[Section 4]{noncompact} 
to the infinite-dimensional setting. 
It remains an open task to prove the existence of a Lagrangian of bounded range such that the measure~$\rho$ in \eqref{(rho)} is non-zero and satisfies condition~{\rm{(iv)}} in~\S \ref{seccvpsigma} as well as condition~{\rm{(B)}} in Section~\ref{Section minimizers bounded range}.

\section{Minimizers for Lagrangians Vanishing in Entropy}\label{Section Decay}	
In Section~\ref{Section minimizers bounded range} the results from \cite[Section 4]{noncompact} were generalized to the non-locally compact setting.  
This raises the question whether it is possible also to weaken the assumption that the Lagrangian is of bounded range (see Definition \ref{Definition bounded range}) similarly to Lagrangians decaying in entropy as introduced in \cite[Section 5]{noncompact}. It is precisely the objective of this section to analyze this question in detail. To this end, we first generalize the notion of Lagrangians decaying in entropy (\S \ref{S Entropy}). Afterwards we proceed similarly as in \cite[Section 5]{noncompact} and Section \ref{Section minimizers bounded range} to prove that, under suitable assumptions, the measure $\rho$ obtained in Theorem \ref{Theorem measure} is a minimizer under variations of compact support and variations of finite volume (see Definition \ref{Definition compact support} and Definition \ref{Definition finite volume}). 

\subsection{Lagrangians Vanishing in Entropy}\label{S Entropy}
This subsection is devoted to generalize the notion of Lagrangians decaying in entropy as introduced in \cite[Section~5]{noncompact}. More precisely, 
in order for the constructions in \cite[Section 5]{noncompact} to work, the definition of Lagrangians decaying in entropy (see Definition~\ref{Definition vanishing} and~\cite[Definition~5.1]{noncompact}) requires an 
\emph{un\-bounded} Heine-Borel metric. 
On the other hand, any Heine-Borel space (that is, a topological space endowed with a Heine-Borel metric) is $\sigma$-compact and locally compact, see~\cite{williamson}. In particular, every separable Heine-Borel space~$X$ 
is a second-countable, locally compact Hausdorff space,
and hemicompact (see~\cite[Problem~17I]{willard})
in view of~\cite[Exercise~3.8.C]{engelking}. Accordingly, there is a sequence~$(K_n)_{n \in \N}$ of compact subsets of $X$ with $K_n \subset K_{n+1}^{\circ}$ for every~$n \in \N$ such that any compact set~$K \subset X$ is contained in~$K_n$ for some~$n \in \N$ and~$X = \bigcup_{n=1}^{\infty} K_n$ (also see~\cite[Lemma 29.8]{bauer}). Moreover, in view of \cite[Theorem 31.5]{bauer}, the space~$X$ is Polish. These considerations motivate the following procedure.

For any second-countable, locally compact Hausdorff topological space~$X$, assume that the Lagrangian~$\L : X \times X \to \R_0^+$ is continuous, symmetric and positive on the diagonal~\eqref{strictpositive}. Moreover, assume that the measure~$\tilde{\rho}$ on $\B(X)$ is obtained similarly to the constructions in \cite[\S 4.1]{noncompact}. In order for the constructions in \cite[Section 5]{noncompact} to apply, one requires that, for any $x \in X$ and arbitrary~$\varepsilon > 0$, there exists $K_{x, \varepsilon} \subset X$ compact such that 
\begin{align*}
\int_{X \setminus K_{x, \varepsilon}} \L(x,y) \: d\tilde{\rho}(y) < \varepsilon \:.
\end{align*}
The results in \cite{williamson} imply that for any separable, locally compact metric space~$(X,d)$ there is a Heine-Borel metric~$d_{\textup{HB}}$ on~$X$ which generates the same topology.
In order to also allow for \emph{bounded} Heine-Borel metrics in \cite[Section 5]{noncompact}, it seems preferable to not specify the set $K_{x,  \varepsilon}$ in \cite[eq.~(5.2)]{noncompact} and the calculations thereafter in terms of a possibly bounded Heine-Borel metric~$d_{\textup{HB}}$. For this reason, it seems preferable to formulate the calculations in \cite[\S 5.1]{noncompact} purely in terms of compact subsets rather than in terms of (closed) balls (whose diameter depends on the corresponding Heine-Borel metric). To this end, for any topological space $Y$, we denote the set of all functions~$f : Y \to \R$ by~$\mathbb{F}(Y)$, and let $\mathbb{F}^+(Y)$ be the subset of non-negative such functions. Given a second-countable, locally compact Hausdorff space $X$, by~$\mathbb{F}_0^+(X)$ we denote the subset of non-negative functions \emph{vanishing at infinity} in the sense that, for any $\varepsilon > 0$, there exists $K \subset X$ compact with $f|_{X \setminus K} < \varepsilon$ (for continuous functions vanishing at infinity we refer to~\cite[\S VIII.2]{elstrodt}). 

This allows us to generalize the definition of Lagrangians decaying in entropy by restating condition~(c) in Definition \ref{Definition vanishing} in the following way: Given the compact exhaustion~$(K_m)_{m \in \N}$ of $X$ with~$K_m \subset K_{m+1}^{\circ}$ for every $m \in \N$ and~$X = \bigcup_{m=1}^{\infty} K_m$, 
for every~$x \in X$ let $N = N(x)$ be the least integer such that~$x \in K_m$ for all $m \ge N$. We now introduce the sets $(K_m(x))_{m \in \N}$ by
\begin{align}\label{(Km)}
K_m(x) := K_{m + N -1} \qquad \text{for all $m \in \N$} \:. 
\end{align}
Introducing \emph{entropy}~$E_x(K_m(x), \delta)$ according to~\S \ref{seccvpsigma} as the smallest number of balls of radius $\delta > 0$ covering $K_m(x)$, and replacing (c) in Definition \ref{Definition vanishing} by (c'), we define Lagrangians vanishing in entropy as follows. 
\begin{Def}\label{Definition entropy new}
	Let $(X, d)$ be a second-countable, locally compact metric space. 
	Then the Lagrangian $\L : X \times X \to \R_0^+$ is said to {\bf{vanish in entropy}}
	if the following conditions are satisfied: 
	\begin{enumerate}[leftmargin=2em]
		\item[\rm{(a)}] $c:=\inf_{{x} \in X} \L({x},{x}) > 0$.
		\item[\rm{(b)}] There is a compact set $K \subset X$ such that 
		\[ \delta := \inf_{{x} \in X \setminus K} \sup \left\{ s \in \R \::\: \L({x},y) \ge \frac{c}{2} \quad \text{for all~$y \in B_s(x)$} \right\} > 0 \:. \]
		\item[\rm{(c')}] The Lagrangian has the following decay property:
		Given an exhaustion of $X$ by compact subsets $(K_m)_{m \in \N}$, 
		there exists $f : X \times X \to \R_0^+$ with~$f(x, \cdot) \in \mathbb{F}_0^+(X)$ for every $x \in X$ such that, for every $x \in X$ and all~$m \in \N$,
		\begin{align*}
		\L(x,y) \le 2^{-m} \frac{f(x,y)}{C_x(m, \delta)} \qquad \text{for all $y \in K_m(x)$} \:,
		\end{align*}
		where $(K_m(x))_{m \in \N}$ is defined by \eqref{(Km)}, 
		$$C_x(m,\delta) := C \: E_x(K_{m+2}(x), \delta) \qquad \text{for all $x \in \F$, $m \in \N$ and $\delta > 0$} $$
		(with entropy~$E_x(K_m(x), \delta)$ as introduced in \S \ref{seccvpsigma}), 
		and the constant $C$ is given by
		$$ C := 1+ \frac{2}{c} < \infty \:. $$
	\end{enumerate}
\end{Def}\noindent
As mentioned in \cite{noncompact}, we may assume that $\delta = 1$ (otherwise we suitably rescale the corresponding metric on~$X$). 
Let us point out that Definition \ref{Definition entropy new}, by contrast to Definition \ref{Definition vanishing} (see \cite[Definition~5.1]{noncompact}), does not require a Heine-Borel metric and thus allows for more general applications. 
Definition \ref{Definition vanishing} can be considered as a special case of Definition \ref{Definition entropy new}. 

Under the assumptions (a), (b), (c'), for any $x \in X$ and~$\varepsilon > 0$ there exists $K_{x, \varepsilon} \subset X$ compact such that 
\begin{align*}
\int_{X \setminus K_{x, \varepsilon}} \L(x,y) \: d\tilde{\rho}(y) < \varepsilon \:.
\end{align*}
To see this, we make use of the fact that~$K_n \subset K_{n+1}^{\circ}$ for all $n \in \N$. Given $x \in X$ and arbitrary~$\varepsilon > 0$, there exists~$\tilde{K}_{x, \varepsilon} \subset X$ compact with $f(x,y) < \varepsilon/6$ for all~$y \notin \tilde{K}_{x, \varepsilon}$. Since $X$ is hemicompact, there exists~$n \in \N$ with $\tilde{K}_{x, \varepsilon} \subset K_n$. We denote the least such integer by~$N_0 = N_0(x, \varepsilon)$. Then the compact set (cf.~\cite[eq.~(5.2)]{noncompact})
\begin{align}\label{(5.2')}
K_{x, \varepsilon} := K_{N_0} \subset X
\end{align}
has the desired property: 
\begin{align*}
&\int_{X \setminus K_{N_0}} \L(x,y) \: d\tilde{\rho}(y) = \sum_{m = N_0}^{\infty} \int_{K_{m+1} \setminus K_m} \L(x,y) \: d\tilde{\rho}(y) \\
&\qquad \le \sum_{m = N_0}^{\infty} \sup_{y \in K_{m+1}} \L(x,y) \: \underbrace{\tilde{\rho} \left(K_{m+1} \setminus K_m \right)}_{\text{$\le C_x(m,1)$}} \le \sup_{y \in X \setminus K_{N_0}} f(x,y) \sum_{m = N_0}^{\infty} 2^{-m} < \varepsilon/3 \:.
\end{align*}
By definition of $C_x(m, \delta)$, we are given
\begin{align}\label{(5.3')}
\int_{X \setminus K_{x, \varepsilon}} \L(\tilde{x}, y) \: d\tilde{\rho}(y) < \varepsilon/3
\end{align}
for all $\tilde{x}$ in a sufficiently small neighborhood of $x$. 

Assuming that the Lagrangian is continuous, we proceed similarly to \cite{noncompact} to prove that the same is true for the measures~$\tilde{\rho}^{(n)}$ as given by \cite[eq.~(4.5)]{noncompact} (where the measures~$\tilde{\rho}^{(n)}$ originate in the same manner as in \cite[\S 4.1]{noncompact}). 
More precisely, for any given~$x \in X$ and $\varepsilon > 0$, we introduce the compact sets $A_m(x) \subset X$ by
\begin{align*}
A_m(x) := \overline{K_{m+1}(x) \setminus K_m(x)} \qquad \text{for all $m \ge N_0 = N_0(x, \varepsilon)$} \:. 
\end{align*}
Next, regularity of~$\tilde{\rho}$ yields the existence of open sets $U_m(x) \supset A_m(x)$ with $U_m(x) \subset K_{m+1}(x) \setminus K_{m-1}(x)$ such that
\begin{align*}
\tilde{\rho}\left(U_m \setminus \big(K_{m+1}(x) \setminus K_m(x) \big) \right) < 2^{-m-1} \varepsilon/3 \qquad \text{for all $m \ge N_0$}
\end{align*}
In view of \cite[Lemma 2.92]{aliprantis}, for every $m \ge N_0$ there exists $\eta_m \in C_c(U_m(x); [0,1])$ such that $\eta_m|_{A_m(x)} \equiv 1$, implying that $\L(x, \cdot) \: \eta_m \in C_c(U_m(x))$ for all $m \ge N_0$. Repeating the arguments in \cite{noncompact}, we finally arrive at \cite[eq.~(5.6)]{noncompact}, i.e.
\begin{align}\label{(5.6')}
\int_{X \setminus K_{x, \varepsilon}} \L(\tilde{x},y) \: d\tilde{\rho}^{(n)}(y) < \varepsilon \qquad \text{and} \qquad \int_{X \setminus K_{x, \varepsilon}} \L(\tilde{x},y) \: d\tilde{\rho}(y) < \varepsilon
\end{align}
for all $\tilde{x}$ in a small neighborhood of $x$ and sufficiently large $n \in \N$. As a consequence, all results in \cite[Section 5]{noncompact} remain valid for Lagrangians decreasing in entropy. 

The advantage of Definition \ref{Definition entropy new} is that it applies to arbitrary second-countable, locally compact metric spaces. In particular, by contrast to Definition \ref{Definition vanishing}, it need not be endowed with an unbounded Heine-Borel metric.
Furthermore, the concept of Lagrangians vanishing in entropy carries over to possibly non-locally compact metric spaces in the following way. 
\begin{Def}\label{Definition vanish in entropy}
	Given a metric space $(X, d)$, 
	the Lagrangian $\L : X \times X \to \R_0^+$ is said to \textbf{vanish in entropy} if, for any second-countable, locally compact Hausdorff space~$Y \subset X$, its restriction~$\L|_{Y \times Y} : Y \times Y \to \R_0^+$ vanishes in entropy with respect to the 
	induced metric~$d_Y := d|_{Y \times Y}$ (see Definition~\ref{Definition entropy new}). 
\end{Def}

\subsection{Preparatory Results}\label{S Preparatory results}
After these preliminaries we return to causal variational principles in the non-locally compact setting (see Definition \ref{Definition non-locally compact}). Accordingly, let $X$ be a separable infinite-dimensional complex Banach space and assume that~$\F \subset X$ is a non-locally compact Polish subspace (with respect to the Fréchet metric~$d$ induced by the norm on~$X$). Then $(\F, d)$ is a separable, complete metric space. In what follows we assume that the Lagrangian~$\L : \F \times \F \to \R_0^+$ vanishes in entropy (see Definition~\ref{Definition vanish in entropy} and Definition~\ref{Definition entropy new}). 
Considering the finite-dimensional exhaustion~$\F^{(n)} \subset \F$ endowed with the induced metric~$d_n := d|_{\F^{(n)} \times \F^{(n)}}$ for all $n \in \N$,
the explanations in \S \ref{S Approximating} imply that each closed bounded subset of $\F^{(n)}$ (with respect to~$d_n$) is compact. Moreover, the restricted Lagrangians~$\L^{(n)} : \F^{(n)} \times \F^{(n)} \to \R_0^+$ vanish in entropy (see Definition \ref{Definition entropy new}). As outlined in \S \ref{S Entropy}, all results in \cite[Section 5]{noncompact} remain valid if we replace ``decaying in entropy'' by ``vanishing in entropy.'' 
In particular, by applying \cite[Theorem 5.8]{noncompact} we conclude that for each $n \in \N$, there is some regular Borel measure ${\rho}^{[n]}$ on $\F^{(n)}$ which is a minimizer of the corresponding action~$\Sact^{(n)} := \Sact_{\F^{(n)}}$ under variations of compact support, where 
\begin{align*}
\Sact_{E}(\rho) := \int_E d\rho(x) \int_E d\rho(y) \: \L(x,y)
\end{align*}
for any $E \in \B(\F)$ (cf. \cite[\S 3.2]{noncompact}). 
Moreover, in view of \cite[Theorem 5.5]{noncompact}, for all $n \in \N$ the following Euler-Lagrange equations hold,  
\begin{align*}
{\ell}^{[n]}|_{\supp {\rho}^{[n]}} \equiv \inf_{x \in \F} {\ell}^{[n]}(x) = 0 \:, 
\end{align*}
where the mapping ${\ell}^{[n]} : \F \to \R $ is defined by 
\begin{align*}
{\ell}^{[n]}(x) := \int_{\F} \L(x,y) \: d{\rho}^{[n]}(y) - 1 \:.
\end{align*}
We point out that Lemma \ref{Lemma upper} is applicable to the sequence $({\rho}^{[n]})_{n \in \N}$, implying that 
\begin{align*}
{\rho}^{[n]}(K) \le C_K \qquad \text{for all $n \in \N$} \:.
\end{align*}
For this reason, we may 
proceed in analogy to Section \ref{Section Construction Global} by introducing a countable set~$\mathscr{D} \subset \mathfrak{K}(\F)$ (see \S \ref{S Construction of Countable Set}). Next, in analogy to \S \ref{S Global Measure} we iteratively restrict the sequence of measures~$({\rho}^{[n]})_{n \in \N}$ to~$D_m \subset \F$ compact with $D_m \in \mathscr{D}$ for all $m \in \N$ and denote the resulting diagonal sequence by $({\rho}^{(k)})_{k \in \N}$ (cf.~\eqref{(rhorund)}). Defining the corresponding set function~$\varphi : \mathscr{D} \to [0, \infty)$ by \eqref{(infty)} and proceeding in analogy to the proof of Theorem~\ref{Theorem measure}, we obtain a (possibly trivial) measure~$\rho$ on the Borel $\sigma$-algebra $\B(\F)$. Lemma \ref{Lemma inner regular} yields that the resulting measure~$\rho : \B(\F) \to [0, + \infty]$ is regular. Moreover, the useful results Lemma \ref{Lemma vague convergence} and Lemma~\ref{Lemma weak convergence} still apply. 

In analogy to Remark~\ref{Remark non-trivial}, the following remark yields a sufficient condition for the measure~$\rho$ obtained in Theorem~\ref{Theorem measure} to be non-zero.

\begin{Remark}\label{Remark non-trivial vanishing}
	Let $(\F^{(n)})_{n \in \N}$ be a finite-dimensional approximation of $\F$ (see \S \ref{S Approximating}). 
	Assuming that the Lagrangian is bounded and vanishes in entropy (see Definition \ref{Definition vanish in entropy}), for every $x^{(n)} \in \F^{(n)}$ and $0 < \varepsilon < 1$ there exists $K_{x, \varepsilon}^{(n)} \subset \F^{(n)}$ compact such that
	\begin{align*}
	\int_{\F \setminus K_{x,\varepsilon}^{(n)}} \L(x^{(n)}, y) \: d\rho^{(n)}(y) < \varepsilon \:.
	\end{align*}
	In view of boundedness of the Lagrangian we introduce the upper bound~$\mathscr{C} < \infty$ by
	\begin{align*}
	\mathscr{C} := \sup_{x, y \in \F} \L(x,y) > 0 \:. 
	\end{align*}
	Then the EL equations \eqref{(ELn')} and \eqref{(elln')} yield
	\begin{align*}
	1 \le \int_{\F} \L(x^{(n)},y) \: d{\rho}^{(n)} = \int_{K_{x, \varepsilon}^{(n)}} \L(x^{(n)},y) \: d{\rho}^{(n)} + \int_{\F \setminus K_{x, \varepsilon}^{(n)}} \L(x^{(n)},y) \: d{\rho}^{(n)} \:,
	\end{align*}
	implying that
	\begin{align*}
	0 < \frac{1- \varepsilon}{\mathscr{C}} \le \rho^{(n)}(K_{x, \varepsilon}^{(n)}) \qquad \text{for sufficiently large~$n \in \N$} \:. 
	\end{align*}
	Without loss of generality we may assume that $K_{x, \varepsilon}^{(n)} \in \mathscr{D}$ for every $n \in \N$. Moreover, we are given~$K_{x, \varepsilon}(N) := \bigcup_{n=1}^N K_{x, \varepsilon}^{(n)} \in \mathscr{D}$ for every $N \in \N$. Therefore, whenever there exists $N \in \N$ such that $\rho^{(n)}(K_{x, \varepsilon}(N)) \ge c$ for almost all~$n \in \N$ and some~$c > 0$, the measure~$\rho$ as defined by \eqref{(rho)} is non-zero. If this holds true for an infinite number of disjoints sets $(K_{x_i, \varepsilon}(N_i))_{i \in \N}$, the measure~$\rho$ possibly has infinite total volume. 
\end{Remark}

The remainder of this section is devoted to the proof that the measure~$\rho$ defined by~\eqref{(rho)} is, under suitable assumptions, a minimizer under variations of compact support as well as under variations of finite volume. For non-trivial minimizers, we shall derive the corresponding EL equations (see~\S\ref{S EL decay}).

\subsection{Existence of Minimizers} \label{S compact support decay}
The aim of this subsection is to prove that, under suitable assumptions, the measure $\rho$ defined by~\eqref{(rho)}
is a minimizer of the causal variational principle \eqref{(cvp)}, \eqref{totvol} under variations of finite volume (see Definition \ref{Definition finite volume}). To this end, we first show that~$\rho$ is a minimizer of the causal action under variations of compact support (see Definition~\ref{Definition compact support}).

In order to show that the measure~$\rho$ obtained in Theorem \ref{Theorem measure} is a minimizer under variations of compact support, we need to assume that condition~{\rm{(iv)}} in \S \ref{seccvpsigma} holds (cf.~\cite[\S 5.4]{noncompact}), i.e.
\begin{align*}
\sup_{x \in \F} \int_{\F}\L(x,y)\: d\rho(y) < \infty \:.
\end{align*} 
Under the additional assumption that the measure~$\rho$ obtained in Theorem~\ref{Theorem measure} also satisfies condition~{\rm{(B)}} in Section~\ref{Section minimizers bounded range}, 
we obtain the following existence result.  

\begin{Lemma}\label{Lemma compact support decay}
	Assume that the Lagrangian $\L \in C_b(\F \times \F; \R_0^+)$ vanishes in entropy,
	and that condition~\eqref{(ivn)} holds. 
	Moreover, assume that the measure~$\rho$ obtained in \eqref{(rho)} satisfies condition~{\rm{(B)}} in Section~\ref{Section minimizers bounded range}, 
	and that condition~{\rm{(iv)}} in \S \ref{seccvpsigma} holds. 
	Then~$\rho$ is a minimizer under variations of compact support. 
\end{Lemma}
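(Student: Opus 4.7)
The plan is to follow the strategy of the proof of Lemma~\ref{Lemma compact support} but to replace the exploitation of bounded range by the tail estimates available for Lagrangians vanishing in entropy. Let $\tilde{\rho}$ be a variation of compact support, so that $K := \supp(\tilde{\rho} - \rho)$ is compact and $\rho(K) = \tilde{\rho}(K) < \infty$ by \eqref{totvol}. Under condition~{\rm{(iv)}} in \S\ref{seccvpsigma}, the difference $\Sact(\tilde{\rho}) - \Sact(\rho)$ is well-defined and admits the symmetric expansion used in the proof of Lemma~\ref{Lemma compact support}.

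The first step is to extract, for arbitrary $\varepsilon > 0$, a compact set $K_\varepsilon \subset \F$ such that
$$\int_{\F \setminus K_\varepsilon} \L(x,y)\, d\rho(y) < \varepsilon \qquad \text{and} \qquad \int_{\F \setminus K_\varepsilon} \L(x,y)\, d\rho^{(n)}(y) < \varepsilon$$
hold uniformly for $x$ in an open neighborhood $W \supset K$ and for all sufficiently large $n \in \N$. This is the analogue of \eqref{(5.3')} and \eqref{(5.6')} in the non-locally compact setting. The estimate for $\rho$ is obtained by applying the decay structure of Definition~\ref{Definition entropy new} to the second-countable, locally compact subspace generated by a fixed compact exhaustion containing $W$, as done in the passage leading to \eqref{(5.6')}. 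The corresponding estimate for the $\rho^{(n)}$ follows from condition~\eqref{(ivn)} together with the finite-dimensional analogues of \eqref{(5.3')}--\eqref{(5.6')} applied on each $\F^{(n)}$; condition~{\rm{(B)}} then guarantees that $\rho^{(n)}$ concentrates on $K_\varepsilon \cap \F^{(n)}$ up to an error of order $\varepsilon$, yielding the uniform bound.

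Once the tail is controlled, the second step is to choose $D \in \mathscr{D}$ with $K \cup K_\varepsilon \subset D^{\circ}$ (this is possible by construction of $\mathscr{D}$ in \S\ref{S Construction of Countable Set} together with condition~{\rm{(B)}}) and to apply Lemma~\ref{Lemma weak convergence} to obtain a relatively compact open set $V \supset D$ on which $\rho^{(n)}|_V \rightharpoonup \rho|_V$. Splitting each of the three iterated integrals in $\Sact(\tilde{\rho}) - \Sact(\rho)$ into contributions over $V$ and its complement, the complementary pieces are bounded by $O(\varepsilon)$ thanks to the uniform tail estimate. The contribution from $V$ is then handled exactly as in the proof of Lemma~\ref{Lemma compact support}: introduce the renormalized measures $\tilde{\rho}_n$ equal to $c_n \tilde{\rho}$ on $V$ and $\rho^{(n)}$ on $\F \setminus V$ with $c_n \to 1$, apply \cite[Theorem~2.8]{billingsley} together with the weak convergence on $V$, and invoke the minimizing property of $\rho^{(n)}$ under variations of compact support on $\F^{(n)}$ (available here from \cite[Theorem~5.9]{noncompact} since $\L^{(n)}$ vanishes in entropy). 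Letting $\varepsilon \searrow 0$ concludes $\Sact(\tilde{\rho}) - \Sact(\rho) \ge 0$.

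The main obstacle is establishing the uniform tail estimate for the sequence $(\rho^{(n)})_{n \in \N}$. In the locally compact setting of \cite[Section~5]{noncompact} this was immediate from a fixed compact exhaustion of the whole space, but here $\F$ is not $\sigma$-compact, so the exhaustion must be chosen separately in each $\F^{(n)}$ and the pieces must be glued via condition~{\rm{(B)}}. Once uniformity is in hand, the remainder of the proof is a direct transcription of Lemma~\ref{Lemma compact support} with additional $\varepsilon$-error terms absorbed at the very end.
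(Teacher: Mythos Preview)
Your route is more elaborate than the paper's and contains a genuine gap. The paper does not attempt to establish a uniform tail estimate for the sequence $(\rho^{(n)})_{n\in\N}$, nor does it invoke the entropy structure in this proof at all. Instead it uses only condition~(iv) for the limiting measure~$\rho$: for each $x\in K$ one finds a ball $B_R(x)$ with $\int_{\F\setminus B_R(x)}\L(x,y)\,d\rho(y)<\tilde\varepsilon/2$, and then by continuity of~$\L$ and compactness of~$K$ passes to a single \emph{bounded} set $B_K$ (a finite union of balls, not a compact set) with $\int_{\F\setminus B_K}\L(x,y)\,d\rho(y)<\tilde\varepsilon$ for all $x\in K$. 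Condition~(B) then gives $\rho(B_K)<\infty$ via Lemma~\ref{Lemma B locally finite}, and from that point on the argument is a transcription of Lemma~\ref{Lemma compact support}. The entropy-vanishing hypothesis enters only through the existence of the finite-dimensional minimizers~$\rho^{(n)}$ constructed in \S\ref{S Preparatory results}.

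Your first step asks for a single compact $K_\varepsilon$ with $\int_{\F\setminus K_\varepsilon}\L(x,y)\,d\rho^{(n)}(y)<\varepsilon$ uniformly in large~$n$, but the justification does not hold: condition~(B) is a statement about~$\rho$, not about the approximating measures~$\rho^{(n)}$, so it cannot force $\rho^{(n)}$ to concentrate on $K_\varepsilon\cap\F^{(n)}$. The entropy estimates \eqref{(5.3')}--\eqref{(5.6')} do produce compact sets $K_{x,\varepsilon}^{(n)}\subset\F^{(n)}$, but these depend on~$n$, and you give no mechanism for merging them into a single $K_\varepsilon$ that works for all large~$n$ (in a non-$\sigma$-compact ambient space this is not automatic). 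The paper sidesteps the whole issue by never asking for a tail estimate on~$\rho^{(n)}$: once the $\rho$-integral is truncated to the bounded set~$B_K$, the weak-convergence machinery of Lemma~\ref{Lemma weak convergence} and the proof of Lemma~\ref{Lemma compact support} handle the passage from~$\rho^{(n)}$ to~$\rho$ without further tail control.
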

\begin{proof}
	Assume that~$\tilde{\rho} : \B(\F) \to [0, \infty]$ is a regular Borel measure satisfying~\eqref{totvol} such that~$K := \supp (\tilde{\rho} - \rho)$ is a compact subset of~$\F$. Then 
	the difference of actions~\eqref{integrals} as given by 
	\begin{align*}
	\left(\Sact(\tilde{\rho}) - \Sact(\rho)\right) &= 2\int_{K} d(\tilde{\rho} - \rho)(x) \int_{\F} d\rho(y) \:\L(x,y) \\
	&\qquad \quad + \int_{K} d(\tilde{\rho}- \rho)(x) \int_{K} d(\tilde{\rho}- \rho)(y) \:\L(x,y) 
	\end{align*} 
	is well-defined (see the explanations in \cite[\S 4.3]{noncompact}). Assuming that condition~{\rm{(iv)}} in~\S \ref{seccvpsigma} holds, for any~$x \in \F$ and~$\tilde{\varepsilon} > 0$ there exists an integer~$R = R(x, \tilde{\varepsilon})$ such that 
	\begin{align}
	\int_{\F \setminus B_R(x)} \L(x,y) \: d\rho(y) < \tilde{\varepsilon}/2 \:.
	\end{align}
	By continuity of the Lagrangian, there is an open neighborhood $U_x$ of $x$ such that
	\begin{align}
	\int_{\F \setminus B_R(x)} \L(z,y) \: d\rho(y) < \tilde{\varepsilon} \qquad \text{for all $z \in U_x$} \:.
	\end{align}
	Proceeding in analogy to the proof of~\cite[Theorem 5.8]{noncompact} by covering the compact set~$K \subset \F$ by a finite number of such neighborhoods $U_{x_1}, \ldots, U_{x_L}$ and introducing 
	the bounded set~$B_K := \bigcup_{j=1}^L B_R(x_j)$, we conclude that 
	\begin{align}
	\int_{\F \setminus B_K} \L(x,y) \: d\rho(y) < \tilde{\varepsilon} \qquad \text{for all $x \in K$} \:.
	\end{align}
	This implies that, by choosing $\tilde{\varepsilon}> 0$ suitably, the last summand in the expression 
	\begin{align*}
	&\left(\Sact(\tilde{\rho}) - \Sact(\rho)\right) = \left[ 2\int_{K} d(\tilde{\rho} - \rho)(x) \int_{B_K} d\rho(y) \:\L(x,y) \right. \\
	&\qquad \left. + \int_{K} d(\tilde{\rho}- \rho)(x) \int_{K} d(\tilde{\rho}- \rho)(y) \:\L(x,y) \right] + 2\int_{K} d(\tilde{\rho} - \rho)(x) \int_{\F \setminus B_K} d\rho(y) \:\L(x,y)
	\end{align*}
	is arbitrarily small. For this reason, it remains to consider the term in square brackets in more detail. 
	Combining the facts that~$\rho$ satisfies condition~{\rm{(B)}} in Section~\ref{Section minimizers bounded range} and that~$B_K \subset \F$ is bounded, Lemma \ref{Lemma B locally finite} implies that $\rho(B_K) < \infty$. Proceeding similarly to the proof of Lemma \ref{Lemma compact support}, we deduce that the term in square bracket is bigger than or equal to zero, up to an arbitrarily small error term. This gives rise to
	\begin{align*}
	\left(\Sact(\tilde{\rho}) - \Sact(\rho)\right) \ge 0 \:,
	\end{align*} 
	which proves the claim. 
\end{proof}

Proceeding similarly to the proof of Lemma~\ref{Lemma compact support decay}, we obtain the following result. 

\begin{Thm}\label{Theorem finite volume decay}
	Assume that the Lagrangian $\L \in C_b(\F \times \F; \R_0^+)$ vanishes in entropy,
	and that condition~\eqref{(ivn)} holds. 
	Moreover, assume that the measure~$\rho$ obtained in~\eqref{(rho)} satisfies condition~{\rm{(B)}} in Section~\ref{Section minimizers bounded range}, 
and that condition~{\rm{(iv)}} in \S \ref{seccvpsigma} holds. 
	Then~$\rho$ is a minimizer under variations of finite volume. 
\end{Thm}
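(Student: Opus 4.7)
The plan is to combine the inner-regular compactification argument underlying Theorem \ref{Theorem finite volume} with the entropy-decay localization used in the proof of Lemma \ref{Lemma compact support decay}. Let $\tilde{\rho}$ be a regular variation of finite volume satisfying \eqref{totvol}, and set $B := \supp(\tilde{\rho} - \rho)$, so that $\rho(B) = \tilde{\rho}(B) < \infty$ while $B$ is in general not compact. Condition~{\rm{(iv)}} in~\S\ref{seccvpsigma} ensures that the difference of actions \eqref{integrals} is well-defined by~\cite[Lemma~2.1]{jet}, and by symmetry of~$\L$ together with Fubini's theorem one may rewrite it as
\[
\Sact(\tilde\rho) - \Sact(\rho) = 2\int_B d(\tilde\rho - \rho)(x)\int_\F d\rho(y)\,\L(x,y) + \int_B d(\tilde\rho - \rho)(x)\int_B d(\tilde\rho - \rho)(y)\,\L(x,y).
\]

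First I would compactify $B$ from the inside. By inner regularity of $\rho$ and $\tilde{\rho}$ (see Lemma \ref{Lemma inner regular}), for any $\tilde{\varepsilon}>0$ there exists a compact set $K \subset B$ with $\rho(B \setminus K), \tilde{\rho}(B \setminus K) < \tilde{\varepsilon}$. Using condition~{\rm{(iv)}} to bound $\int_\F \L(x,y)\,d\rho(y)$ uniformly in $x$, and using $|\tilde\rho - \rho|(\F) < \infty$ to bound the mixed terms, the contribution of $B \setminus K$ to each of the two summands above is bounded by a constant multiple of $\tilde{\varepsilon}$. Hence, up to an arbitrarily small error, it suffices to prove non-negativity of
\[
2\int_K d(\tilde\rho - \rho)(x)\int_\F d\rho(y)\,\L(x,y) + \int_K d(\tilde\rho - \rho)(x)\int_K d(\tilde\rho - \rho)(y)\,\L(x,y),
\]
which is exactly the quantity treated in the proof of Lemma \ref{Lemma compact support decay}.

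Next I would replicate the entropy-decay localization from Lemma \ref{Lemma compact support decay}: using condition~{\rm{(iv)}}, continuity of $\L$ and compactness of $K$, I cover $K$ by finitely many open neighborhoods $U_{x_1}, \ldots, U_{x_L}$ on which the tail integrals $\int_{\F \setminus B_{R_j}(x_j)} \L(z,y)\,d\rho(y)$ are uniformly smaller than $\tilde{\varepsilon}$, and set $B_K := \bigcup_{j=1}^L B_{R_j}(x_j)$. This further reduces the problem to a localized quantity in which the outer $\F$-integral against $\rho$ is replaced by an integral over the bounded set $B_K$. Condition~{\rm{(B)}} together with Lemma \ref{Lemma B locally finite} yields $\rho(B_K) < \infty$ and allows the passage to the finite-dimensional approximations $\F^{(n)}$ via the weak convergence on a relatively compact continuity set $V \supset K$ provided by Lemma \ref{Lemma weak convergence}, exactly as in the proofs of Proposition \ref{Proposition finite-dimensional} and Lemma \ref{Lemma compact support}. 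Since each $\rho^{(n)}$ is a minimizer on $\F^{(n)}$, the localized difference is non-negative in the limit up to an arbitrarily small error.

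The main obstacle will be bookkeeping: three independent approximations must be performed simultaneously and their errors absorbed into one single $\varepsilon$. These are \textbf{(i)} the inner-regular compactification $K \subset B$ of the non-compact support, \textbf{(ii)} the entropy-decay truncation replacing $\F$ by the bounded set $B_K$ in the outer Lagrangian integral via condition~{\rm{(iv)}}, and \textbf{(iii)} the finite-dimensional approximation via condition~{\rm{(B)}} on a relatively compact continuity neighborhood of $K$. Each of these has already been carried out in isolation in Theorem \ref{Theorem finite volume} and Lemma \ref{Lemma compact support decay}; the present task is to verify that they can be chained together without interference. Letting $\tilde{\varepsilon} \to 0$ once all three estimates are in place yields $\Sact(\tilde\rho) - \Sact(\rho) \geq 0$, establishing that $\rho$ is a minimizer under variations of finite volume.
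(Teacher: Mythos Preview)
Your proposal is correct and takes essentially the same approach as the paper: reduce the support of $\tilde\rho-\rho$ to a compact set via regularity, then invoke the entropy-decay localization of Lemma~\ref{Lemma compact support decay} together with condition~{\rm{(B)}} and the finite-dimensional approximation. The only cosmetic difference is that the paper first passes to an open neighborhood $U\supset B$ by outer regularity and then approximates $U$ from inside by compact sets (mirroring the pattern of Theorem~\ref{Theorem finite volume} and \cite[Theorem~5.9]{noncompact}), whereas you approximate $B$ directly from inside; both routes achieve the same compactification with the same error control.
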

\begin{proof}
	Assume that~$\tilde{\rho} : \B(\F) \to [0, \infty]$ is a regular measure 
	satisfying~\eqref{totvol}. 
	By virtue of Lemma~\ref{Lemma B locally finite}, the measure~$\rho$ is locally finite, implying that~$\tilde{\rho}$ is also a locally finite measure (see the explanations after Definition \ref{Definition finite volume}). 
	Introducing~$B := \supp (\tilde{\rho} - \rho)$, we are given~$\rho(B) = \tilde{\rho}(B) < \infty$. Since condition~(iv) in~\S \ref{seccvpsigma} holds, 
	the difference of actions~\eqref{integrals} is well-defined and given by 
	\begin{align*}
	\left(\Sact(\tilde{\rho}) - \Sact(\rho)\right) &= 2\int_{B} d(\tilde{\rho} - \rho)(x) \int_{\F} d\rho(y) \:\L(x,y) \\
	&\qquad \quad + \int_{B} d(\tilde{\rho}- \rho)(x) \int_{B} d(\tilde{\rho}- \rho)(y) \:\L(x,y) \:.
	\end{align*} 
	Making use of regularity of $\rho$ and $\tilde{\rho}$, for arbitrary $\tilde{\varepsilon} > 0$ there is~$U \supset B$ open such that
	\begin{align*}
	\rho(U \setminus B) < \tilde{\varepsilon} \qquad \text{and} \qquad \tilde{\rho}(U \setminus B) < \tilde{\varepsilon} \:.
	\end{align*}
	Approximating $U$ from inside by compact sets $K$ such that
	\begin{align*}
	\rho(U \setminus K) < \tilde{\varepsilon} \qquad \text{and} \qquad \tilde{\rho}(U \setminus K) < \tilde{\varepsilon}
	\end{align*}
	and proceeding in analogy to the proof of \cite[Theorem 5.9]{noncompact} and Lemma \ref{Lemma compact support decay}, we finally may deduce that 
	\begin{align*}
	\left(\Sact(\tilde{\rho}) - \Sact(\rho)\right) \ge 0 \:,
	\end{align*} 
	which proves the claim. 
\end{proof}

Theorem \ref{Theorem finite volume decay} concludes the existence theory in the non-locally compact setting. 

\subsection{Derivation of the Euler-Lagrange Equations}\label{S EL decay}
Under the assumptions of Theorem \ref{Theorem finite volume decay}, for non-trivial measures~$\rho \not= 0$
we are able to deduce the corresponding Euler-Lagrange equations. More precisely, in analogy to~\cite[Lemma 2.3]{jet} we obtain the following result.

\begin{Thm}\label{Theorem EL decay}
	Assume that the Lagrangian $\L \in C_b(\F \times \F; \R_0^+)$ vanishes in entropy (see Definition \ref{Definition vanish in entropy}), and that condition~\eqref{(ivn)} holds. Moreover, assume that the regular measure~$\rho$ obtained in \eqref{(rho)} is non-zero and satisfies condition~{\rm{(B)}} in Section~\ref{Section minimizers bounded range} 
	as well as condition~{\rm{(iv)}} in \S \ref{seccvpsigma}.  
	Then the following Euler-Lagrange equations hold,
	\begin{align}\label{(EL decay)}
	\ell|_{\supp \rho} \equiv \inf_{x \in \F} \ell(x) = 0 \:,
	\end{align}
	where $\ell \in C(\F)$ is defined by \eqref{(ell)} for a suitable parameter~$\mathfrak{s} \in \R_0^+$. 
\end{Thm}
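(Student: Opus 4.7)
The plan is to mirror the argument of Lemma~\ref{Lemma EL0} in the setting where the Lagrangian vanishes in entropy rather than being of bounded range, so the proof reduces to a sequence of applications of results already established in the paper. More concretely, I would combine Theorem~\ref{Theorem finite volume decay}, Lemma~\ref{Lemma iv Borel} and the abstract EL result Theorem~\ref{Theorem EL}, and then normalize the Lagrange parameter to obtain the stated form~\eqref{(EL decay)}.

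First I would invoke Theorem~\ref{Theorem finite volume decay}: under the given hypotheses (continuous bounded Lagrangian $\L$ vanishing in entropy, condition~\eqref{(ivn)}, condition~{\rm(B)}, and condition~{\rm(iv)} in~\S\ref{seccvpsigma}), the regular measure~$\rho$ defined by~\eqref{(rho)} is a minimizer of the causal variational principle~\eqref{(cvp)}, \eqref{totvol} under variations of finite volume. Second, since the Lagrangian is continuous (hence lower semi-continuous), symmetric and strictly positive on the diagonal, and since condition~{\rm(iv)} is assumed, Lemma~\ref{Lemma iv Borel} applies and guarantees that $\rho$ is locally finite, i.e.\ a Borel measure in the sense of~\cite{gardner+pfeffer}. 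Third, because $\F$ is a Polish space and therefore Hausdorff, and because $\L$ is symmetric and lower semi-continuous, with $\rho \neq 0$ by assumption, Theorem~\ref{Theorem EL} is directly applicable and yields a parameter $\mathfrak{s} \in \R$ such that $\ell|_{\supp \rho} \equiv \inf_{x \in \F} \ell(x)$, where $\ell$ is defined by~\eqref{(ell)}.

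It remains to verify that this parameter can be taken in $\R_0^+$ and that the infimum equals zero, which is the content of Lemma~\ref{Lemma EL0} in the bounded-range setting and carries over verbatim. Indeed, condition~{\rm(iv)} in~\S\ref{seccvpsigma} ensures that
\[
0 \le \int_{\F} \L(x,y)\: d\rho(y) < \infty \qquad \text{for every $x \in \F$}\:,
\]
so the integral in~\eqref{(ell)} is a well-defined non-negative function of $x$. Since $\rho \neq 0$ implies $\supp \rho \neq \varnothing$, the value of $\int_\F \L(x,y)\, d\rho(y)$ on $\supp \rho$ is a finite non-negative constant (by the already-obtained version of~\eqref{(EL decay)} without the zero on the right hand side), and choosing $\mathfrak{s}\ge 0$ to equal this common value simultaneously makes $\ell$ vanish on $\supp \rho$ and non-negative on $\F$, yielding~\eqref{(EL decay)}.

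The continuity of $\ell$ required to state the EL equations in $C(\F)$ follows by dominated convergence: the Lagrangian is bounded and continuous, so condition~{\rm(iv)} supplies an integrable dominating function for the family $\{\L(x_n,\,\cdot\,)\}$ whenever $x_n \to x$, letting us pass to the limit inside the integral defining $\ell$. The only mildly delicate step is verifying that the choice of $\mathfrak{s}\ge 0$ is consistent, but since $\L \ge 0$ everywhere this is automatic and poses no real obstacle; the substantive work has already been done in Theorem~\ref{Theorem finite volume decay} and Lemma~\ref{Lemma iv Borel}.
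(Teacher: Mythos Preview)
Your proposal is correct and follows essentially the same route as the paper: combine Theorem~\ref{Theorem finite volume decay} (minimizer under variations of finite volume), Lemma~\ref{Lemma iv Borel} (local finiteness from condition~(iv)), and the abstract EL result Theorem~\ref{Theorem EL}, then normalize~$\mathfrak{s}$ exactly as in Lemma~\ref{Lemma EL0}. The only imprecision is your dominated-convergence remark for $\ell \in C(\F)$: condition~(iv) bounds the integrals $\int_\F \L(x,y)\,d\rho(y)$ uniformly in~$x$ but does not by itself supply a single $\rho$-integrable majorant for the family $\{\L(x_n,\cdot)\}$ when $\rho(\F)=\infty$; the paper's own proof does not address continuity of~$\ell$ here either, so this is not a divergence in approach.
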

\begin{proof}
	Under the assumptions of Theorem \ref{Theorem EL decay}, the measure $\rho$ constructed in \eqref{(rho)} is locally finite in view of Lemma \ref{Lemma iv Borel} and a minimizer under variations of finite volume. Assuming that $\rho \not= 0$ and arguing similarly to the proof of Lemma \ref{Lemma EL0}, Theorem \ref{Theorem EL} gives rise to~\eqref{(EL decay)}. 
\end{proof}

Theorem \ref{Theorem EL decay} generalizes the results of~\cite[Section 5]{noncompact} 
to the infinite-dimensional setting. It remains an open task to prove the existence of Lagrangians vanishing in entropy such that the measure~$\rho$ given by~\eqref{(rho)} is non-zero
and satisfies condition~(iv) in~\S \ref{seccvpsigma} as well as condition~{\rm{(B)}} in Section~\ref{Section minimizers bounded range}.

\section{Topological Properties of Spacetime}\label{Section topological properties of support}
The goal of this section is to derive topological properties of spacetime and to work out a connection to dimension theory. To this end, we let $\F$ be a non-locally compact Polish space 
in the non-locally compact setting (see Definition \ref{Definition non-locally compact}). Under suitable assumptions on the Lagrangian (see Theorem \ref{Theorem finite volume} and Theorem \ref{Theorem finite volume decay}), the measure~$\rho$ obtained in~\eqref{(rho)} 
is a minimizer of the corresponding variational principle~\eqref{(cvp)}. In order to obtain dimension-theoretical statements on its support, let us first recall some basic results from dimension theory (\S \ref{S dimension support}). Afterwards we specialize the setting by applying the obtained results to causal fermion systems (\S \ref{S Application}).

\subsection{Dimension-Theoretical Preliminaries}\label{S dimension support}
To begin with, let us first point out that there are several notions of ``dimension'' of a topological space, among them the \emph{small inductive dimension}~$\operatorname{ind}$, the \emph{large inductive dimension}~$\operatorname{Ind}$, the \emph{covering dimension}~$\dim$, the \emph{Hausdorff dimension}~$\dim_H$ and the \emph{metric dimension}~$\mu\dim$ (for details we refer to~\cite{fedorchuk}, \cite{engelkingdim}, \cite{hurewicz+wallman} and~\cite{munkres}). 
For a separable metric space~$X$, the relation 
\begin{align}\label{(dim le)}
\dim X \le \dim_H X
\end{align}
holds in view of \cite[Section VII.4]{hurewicz+wallman}. Moreover, for every separable metrizable space $X$ we have~$\operatorname{ind} X = \operatorname{Ind} X = \dim X$ (see \cite[Theorem~4.1.5]{engelkingdim}), and $\mu\dim Y = \dim Y$ for every compact metric space~$Y$ (see e.g.~\cite{fedorchuk}). 

For a metric space $X$, the \emph{local dimension} $\dim_{\textup{loc}} : X \to [0, \infty]$ is given by 
\[\dim_{\textup{loc}}(x) = \inf \left\{\dim_H(B_{\varepsilon}(x)) : \varepsilon > 0 \right\} \: \]
(see \cite[\S 2]{dever}), where
\[\dim_H(A) = \inf\left\{s \ge 0 : H^s(A) = 0 \right\} \]
is the Hausdorff dimension of $A \subset X$, and $H^s$ is the $s$-dimensional Hausdorff measure (the interested reader is referred to~\cite[Section~2.10]{federer}, \cite[Chapter~VII]{hurewicz+wallman} and~\cite{rogers}). Whenever $X$ is a separable metric space, then one can show that
\[\dim_H (X) = \sup_{x \in X} \dim_{\textup{loc}}(x) \:. \] 
Moreover, if $X$ is compact then the supremum is attained (cf. \cite[Proposition 2.7]{dever}). 

\begin{Def}\label{Definition locally finite dimensional}
	A normal space $X$ is \emph{locally finite-dimensional} if for every~$x \in X$ there exists a normal open subspace $U$ of $X$ such that $x \in U$ and $\dim U < \infty$. See~\cite[Section 5.5]{engelkingdim}.
\end{Def}

In order to apply the above preliminaries to minimizers of the causal variational principle, let us summarize some general topological properties of the support of a locally finite measure~$\mu$ on a Polish space~$\F$ in the next statement.  

\begin{Lemma}\label{Lemma dimension support}
	Let $X$ be a Polish space, and assume that $\mu$ is a locally finite measure on $\B(X)$. Then $\supp \mu \subset X$ is $\sigma$-compact, and there exists a locally finite-dimensional subspace $F$ being dense in~$\supp \mu$. Whenever $\supp \mu$ is hemicompact, then $\supp \mu$ is locally compact and thus locally finite-dimensional.
	Moreover, in the latter case there exists a (Heine-Borel) metric on $\supp \mu$ such that each bounded subset in $\supp \mu$ is finite-dimensional. 
\end{Lemma}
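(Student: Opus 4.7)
The plan is to dispatch the four assertions in the order stated, using results already at hand in the paper together with a small amount of classical dimension theory.

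First, the $\sigma$-compactness of $\supp \mu$ is precisely the content of Lemma~\ref{Lemma support LCH} from Appendix~\ref{Appendix Support}, so I would simply apply it and record an exhaustion $\supp \mu = \bigcup_{n \in \N} K_n$ by compact sets $K_n$. Next, since $X$ is Polish, the subspace $\supp \mu$ is separable; choose a countable dense subset $F \subset \supp \mu$. A classical theorem in dimension theory states that every countable metric space has covering dimension zero (cf.~\cite[Chapter II]{hurewicz+wallman} or the relevant sections of \cite{engelkingdim}), so $\dim F = 0$. In particular, for every $x \in F$ the whole space $F$ itself serves as the normal open neighborhood required in Definition~\ref{Definition locally finite dimensional}, so $F$ is locally finite-dimensional and dense in $\supp \mu$.

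For the third assertion, assume in addition that $\supp \mu$ is hemicompact. Being a subspace of a Polish space, it is metrizable and hence first-countable. A first-countable hemicompact Hausdorff space is locally compact, which is the criterion already invoked for $\F$ in \S~\ref{S Basic Definitions} via \cite[Exercise~3.4.E]{engelking}. The local finite-dimensionality of $\supp \mu$ I would then derive from the fourth claim: any point has a bounded open neighborhood in the Heine-Borel metric, whose closure is compact and, by the fourth assertion, finite-dimensional; monotonicity of $\dim$ under open subspaces (cf.~\cite[\S 4]{engelkingdim}) yields finite dimension of the neighborhood itself.

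For the final statement, I would proceed in two stages. Since $\supp \mu$ is now locally compact, separable and $\sigma$-compact, Williamson's construction \cite{williamson} produces a compatible Heine-Borel metric $d_{\mathrm{HB}}$ on $\supp \mu$, realized by choosing an exhaustion $(K_n)_{n \in \N}$ with $K_n \subset K_{n+1}^{\circ}$ and rescaling so that bounded sets coincide with subsets of some $K_n$. To force each $K_n$ to be finite-dimensional, I would use the specific structure of the non-locally compact setting of Definition~\ref{Definition non-locally compact}: since $\supp \mu \subset \F \subset X$ and $X$ is exhausted by finite-dimensional subspaces $X_n$ as in \S~\ref{S Approximating}, one can arrange the exhaustion so that each $K_n \subset \F^{(n)} \subset X_n$, and then every such $K_n$, sitting inside a finite-dimensional normed space, automatically has finite covering dimension by the Heine-Borel theorem on finite-dimensional spaces combined with $\mu\!\dim Y = \dim Y$ for compact metric spaces~$Y$.

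The main obstacle is this last step: ensuring the compatibility of the Heine-Borel exhaustion with the finite-dimensional slicing $(\F^{(n)})_{n \in \N}$. A priori the exhaustion furnished by Lemma~\ref{Lemma support LCH} has no reason to respect the subspaces $\F^{(n)}$, so one must either refine the exhaustion (replacing each $K_n$ by its enlargement $K_n \cup D$ for suitable $D \in \mathscr{D}$ drawn from \S~\ref{S Construction of Countable Set}) or, more naturally, return to the construction of $\rho$ itself and extract the exhaustion from the sequence $(\supp \rho^{(n)})_{n \in \N}$, each of which sits inside $\F^{(n)}$ by construction. This is the delicate point where the lemma ceases to be a purely topological statement about arbitrary Polish spaces and genuinely uses the ambient Banach structure from the earlier sections.
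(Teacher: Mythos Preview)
Your handling of the first two assertions is fine; in fact, for the dense locally finite-dimensional subspace you take a shortcut the paper does not: the paper uses the dense set $F$ supplied by Lemma~\ref{Lemma support LCH} (points with compact neighborhoods in $\supp\mu$) together with the fact, drawn from~\cite[Proposition~2.7]{dever}, that such compact neighborhoods have finite Hausdorff dimension, whereas you simply take $F$ countable and invoke $\dim F=0$. Your argument is more elementary; the paper's carries more information, since its $F$ is an open-in-$\supp\mu$ set where the ambient dimension is locally finite.

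The genuine gap is in the fourth assertion, and by your chosen order it contaminates the third. The lemma is stated for an arbitrary Polish space $X$ and an arbitrary locally finite $\mu$; there is no ambient Banach space, no exhaustion $(X_n)_{n\in\N}$, no family $\mathscr{D}$, and no sequence $(\rho^{(n)})_{n\in\N}$ available. Your plan to force the Heine-Borel exhaustion into the slices $\F^{(n)}$ therefore imports hypotheses the statement does not provide, and your own closing paragraph correctly identifies this as the obstacle without resolving it. The paper avoids this entirely: once $\supp\mu$ is locally compact, every point has a compact neighborhood $N_x$, and~\cite[Proposition~2.7]{dever} gives $\dim_H N_x<\infty$; this directly yields local finite-dimensionality (claim three). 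For claim four, a bounded set in the Heine-Borel metric has compact closure, which is covered by finitely many such $N_{x_i}$, hence has finite Hausdorff dimension. So the paper proves claim three first and deduces claim four from it, while you attempt the reverse order and thereby need an ingredient (the linear structure) that is simply not there.
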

\begin{proof}
	According to Lemma \ref{Lemma support LCH},~$\supp \mu$ is a $\sigma$-compact separable metric space, and there is a dense subset $F \subset \supp \mu$ such that each $x \in F$ is contained in a compact neighborhood~$N_x$. In view of~\cite[Proposition 2.7]{dever} we conclude that~$\dim_H N_x < \infty$ for every~$x \in \supp \mu$. Thus Definition~\ref{Definition locally finite dimensional} together with \eqref{(dim le)} gives the first statement. 
	
	Whenever $\supp \mu$ is hemicompact, it is locally compact according to Lemma \ref{Lemma support LCH}. Thus each $x \in \supp \mu$ is contained in a compact neighborhood $N_x$. Making use of~\cite[Proposition 2.7]{dever} we deduce that~$\dim_H N_x < \infty$. Since $x \in \supp \mu$ is arbitrary and the interior of~$N_x$ is open, from Definition \ref{Definition locally finite dimensional} we obtain that $\supp \mu$ is locally finite-dimensional. Moreover, due to~Lemma \ref{Lemma support LCH}, the space~$\supp \mu$ can be endowed with a Heine-Borel metric. Accordingly, whenever $B \subset \supp \mu$ is bounded (with respect to the Heine-Borel metric), its closure is compact. Covering the resulting compact set by a finite number of compact neighborhoods, we conclude that~$\dim_H (B) < \infty$. 
\end{proof}

\subsection{Application to Causal Fermion Systems}\label{S Application}
In the remainder of this section, we finally apply the previous results to the case of causal fermion systems. To this end, let~$\H$ be 
an infinite-dimensional separable complex Hilbert space. 
For a given spin dimension~$n \in \N$, the set~$\F \subset \LL(\H)$ (for details see~\cite[Definition~1.1.1]{cfs}) is a non-locally compact Polish space (see Theorem \ref{Theorem Polish} and Lemma~\ref{Lemma Freg}). 
Assuming that the Lagrangian~$\L : \F \times \F \to \R_0^+$ is symmetric, lower semi-continuous and strictly positive on the diagonal~\eqref{strictpositive}, we are exactly in the non-locally compact setting as introduced in~\S \ref{S Basic Definitions} (see Definition \ref{Definition non-locally compact}). Therefore, by virtue of Theorem \ref{Theorem measure}, there exists a regular measure~$\rho : \B(\F) \to [0, \infty]$. Assuming in addition that the Lagrangian is continuous, bounded and of bounded range  
and that condition~{\rm{(B)}} in Section~\ref{Section minimizers bounded range} is satisfied, 
the measure~$\rho$ is a minimizer of the causal variational principle~\eqref{(cvp)}, \eqref{totvol} under variations of compact support by virtue of Lemma~\ref{Lemma compact support}. 
Under the additional assumption that condition~{\rm{(iv)}} in~\S \ref{seccvpsigma} is satisfied, the measure $\rho$ is a minimizer of the causal variational principle under variations of finite volume due to Theorem~\ref{Theorem finite volume}. Under these assumptions, the same is true for Lagrangians vanishing in entropy (see Theorem~\ref{Lemma compact support decay} and Theorem~\ref{Theorem finite volume decay}).
As a consequence, we are given a causal fermion system~$(\H, \F, \rho)$, and 
\emph{spacetime}~$M$ is defined as the support of the universal measure~$\rho$, 
\[M : = \supp \rho \:. \]
Combining the results of Lemma \ref{Lemma dimension support} and Lemma \ref{Lemma support LCH},
we arrive at the following main results of this section. 
\begin{Thm}\label{Theorem spacetime}
	Assume that $\L \in C_b(\F \times \F; \R_0^+)$ is of bounded range or vanishes in entropy. Moreover, assume that the measure~${\rho}$ defined by \eqref{(rho)} satisfies condition~{\rm{(B)}} in Section~\ref{Section minimizers bounded range}.
	Then spacetime $M$ is $\sigma$-compact and contains a locally finite-dimensional dense subspace. Under the additional assumption that spacetime~$M$ is hemicompact, it is a locally finite-dimensional, $\sigma$-locally compact Polish space. 
\end{Thm}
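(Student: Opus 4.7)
The plan is to reduce the assertion to the general topological results already assembled in Lemma \ref{Lemma dimension support}, so the only nontrivial work is to check that its hypotheses are met. The key preliminary step is to verify that the measure $\rho$ given by \eqref{(rho)} is locally finite. Under either hypothesis on $\L$ (bounded range or vanishing in entropy), condition~(B) is assumed, and Lemma \ref{Lemma B locally finite} yields directly that $\rho$ is locally finite and that every bounded subset of $\F$ has finite $\rho$-measure. Note that the continuity and boundedness hypothesis $\L \in C_b(\F\times\F;\R_0^+)$ is only needed here insofar as it is required by Lemma \ref{Lemma B locally finite} through the supporting construction of $\rho$; once local finiteness is in hand, the rest of the argument is purely topological.

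Having established that $\rho$ is a locally finite measure on the Borel $\sigma$-algebra of the Polish space~$\F$, I would apply Lemma \ref{Lemma dimension support} with $X = \F$ and $\mu = \rho$. The first part of that lemma immediately gives that $M = \supp\rho$ is $\sigma$-compact and contains a dense, locally finite-dimensional subspace. This delivers the first assertion of the theorem without further work.

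For the second assertion, suppose additionally that $M$ is hemicompact. The second part of Lemma \ref{Lemma dimension support} then asserts that $M$ is locally compact and locally finite-dimensional, and since $M$ is $\sigma$-compact it is by definition $\sigma$-locally compact. It remains to observe that $M$ is Polish: by the definition of the support \eqref{suppdef}, $M$ is a closed subset of $\F$, and closed subspaces of Polish spaces are again Polish (see e.g.~\cite{kechris}). Assembling these ingredients produces exactly the stated conclusion.

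The main ``obstacle'' is really only the bookkeeping of tracing through hypotheses: one must make sure that condition~(B) alone suffices to invoke Lemma \ref{Lemma B locally finite} in both the bounded-range and vanishing-in-entropy cases, which it does because the hypothesis does not depend on the decay structure of $\L$. Beyond that, the proof is a straightforward appeal to Lemma \ref{Lemma dimension support}, and no genuinely new argument is required.
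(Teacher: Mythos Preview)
Your proof is correct and follows essentially the same approach as the paper: both arguments use condition~(B) together with Lemma~\ref{Lemma B locally finite} to obtain local finiteness of~$\rho$, and then reduce the conclusion to Lemma~\ref{Lemma dimension support} (and, in the paper, also Lemma~\ref{Lemma support LCH}). Your justification that $M$ is Polish as a closed subspace of the Polish space~$\F$ is a slightly more direct route than the paper's citation of Lemma~\ref{Lemma support LCH}, but the overall structure is the same.
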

\begin{proof}
	Assuming that condition~{\rm{(B)}} in Section~\ref{Section minimizers bounded range} is satisfied, the measure~$\rho$ is locally finite in view of Lemma~\ref{Lemma B locally finite}. Henceforth the statement is a consequence of Lemma \ref{Lemma support LCH} and Lemma \ref{Lemma dimension support}.  
\end{proof}

In~\cite{bernard+finster} the question is raised whether the support of minimizing measures always is compact. Theorem~\ref{Theorem spacetime} indicates that the support should in general at least be~$\sigma$-compact. 

Under the assumption that the measure~$\rho$ is locally finite (for sufficient conditions see Lemma \ref{Lemma locally finite}, Lemma \ref{Lemma iv Borel} and Lemma~\ref{Lemma B locally finite}), we obtain the following result. 

\begin{Thm}\label{Theorem empty}
	Assume that the measure $\rho : \B(\F) \to [0, \infty]$ given by \eqref{(rho)} is locally finite. Then the interior of spacetime~$M = \supp \rho$ is empty (in the topology of $\F$).  
\end{Thm}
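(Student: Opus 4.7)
I would argue by contradiction. Suppose that $M^{\circ} \neq \varnothing$ in the topology of $\F$, so that there exists an $\F$-open set $U$ with $U \subset M$. The plan is to produce a point of $U$ that possesses a compact neighborhood in $\F$, which directly contradicts the non-local-compactness of $\F$ established in Lemma~\ref{Lemma Freg} (cf.\ also Corollary~\ref{Corollary non-locally compact} and the overall setting of Definition~\ref{Definition non-locally compact}).

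Because $\rho$ is locally finite and $\F$ is Polish, Lemma~\ref{Lemma support LCH}, applied as in the proof of Lemma~\ref{Lemma dimension support}, provides a dense subset $F \subset M$ each of whose points has a compact neighborhood in $M$. Since $U$ is open in $\F$ and contained in $M$, it is in particular a non-empty open set in the subspace topology on $M$; density of $F$ therefore yields a point $x \in U \cap F$. Using that $M$ is a metric subspace of the Polish space $\F$ (hence Hausdorff and regular) and that $x$ is locally compact in $M$, I may pass to an arbitrarily small compact $M$-neighborhood of $x$, so that I obtain a compact $M$-neighborhood $N$ with $N \subset U$, together with an $M$-open set $V$ satisfying $x \in V \subset N$.

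The decisive observation is that, because $U \subset M$ is open in $\F$, the subspace topology induced on $U$ by $M$ coincides with the one it inherits from $\F$. Indeed, writing $V = W \cap M$ for some $\F$-open set $W$, one has $V \cap U = W \cap U$, which is open in $\F$; replacing $V$ by $V \cap U$ if needed, I conclude that $N$ contains an $\F$-open neighborhood of $x$. Hence $N$ is itself a compact neighborhood of $x$ in $\F$, contradicting the non-local-compactness of $\F$ and proving $M^{\circ} = \varnothing$. I expect the only subtle step to be this last one, namely the identification of the two subspace topologies on $U$ so that a compact $M$-neighborhood trapped inside the $\F$-open set $U$ upgrades to a compact $\F$-neighborhood; all other ingredients reduce to a direct application of Lemma~\ref{Lemma support LCH} together with the already established non-local-compactness of $\F$.
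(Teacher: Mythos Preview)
Your argument is correct and takes a genuinely different route from the paper's. Both proofs start from Lemma~\ref{Lemma support LCH} to obtain a dense subset $F \subset M$ of points admitting compact $M$-neighborhoods, and both pick such a point inside the hypothetical open set $M^{\circ}$. From there, however, the paper passes to the regular part $\F^{\textup{reg}}$, invokes its Banach manifold structure, and pushes the compact neighborhood through a chart homeomorphism into an infinite-dimensional Banach space to contradict the fact that compact subsets there have empty interior. Your approach instead observes directly that an $\F$-open set $U \subset M$ carries the same subspace topology whether viewed from $M$ or from $\F$, so that a compact $M$-neighborhood contained in $U$ is already a compact $\F$-neighborhood---contradicting the standing hypothesis (Definition~\ref{Definition non-locally compact}) that $\F$ is non-locally compact. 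This is both more elementary (no manifold charts, no appeal to $\F^{\textup{reg}}$) and more general: it works for any non-locally compact Polish $\F$ as in Definition~\ref{Definition non-locally compact}, not only in the causal fermion system setting of \S\ref{S Application} where the Banach manifold structure of $\F^{\textup{reg}}$ is available.
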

\begin{proof}
	Assume that $M^{\circ} \not= \varnothing$ in the topology of $\F$. 
	Then~$U^{\textup{reg}} := M^{\circ} \cap \F^{\textup{reg}}$ is open in the relative topology. Since $\F^{\textup{reg}}$ is a Banach manifold (see~\cite{finster+lottner}), it can be covered by an atlas $(U_{\alpha}, \phi_{\alpha})_{\alpha \in A}$ for some index set~$A$ (cf.~\cite[Chapter 73]{zeidlerIV}). In particular, each~$x \in U^{\textup{reg}}$ is contained in some open set $U_{\alpha}$, whose image~$V_{\alpha} := \phi_{\alpha}(U_{\alpha})$ is open in some infinite-dimensional Banach space $X_{\alpha}$. 
	From Lemma \ref{Lemma support LCH} we know that there exists a dense subset $F \subset \supp \rho$ such that each $x \in F$ has a compact neighborhood. Given $x \in F$ and choosing a compact neighborhood $N_x \subset U_{\alpha}$ for some $\alpha \in A$, from the fact that the mapping~$\phi_{\alpha}$ is a homeomorphism we conclude that $\phi_{\alpha}(N_x) \subset X_{\alpha}$ is a compact neighborhood of $\phi_{\alpha}(x) \in X_{\alpha}$ which contains a non-empty open subset in contradiction to \cite[Exercise~14.3]{koenig}. This gives the claim. 
\end{proof}

Theorem \ref{Theorem empty} generalizes \cite[Theorem 3.16]{support} to the infinite-dimensional setting.

\appendix

\section{Topological Properties of Causal Fermion Systems}\label{Appendix Polish}
The goal of this appendix is to prove the following result:
\begin{Thm}\label{Theorem Polish}
	Let $(\H, \F, \rho)$ be a causal fermion system. Then $\F$ is a Polish space.\footnote{More precisely, endowed with the Fréchet metric~$d$ induced by the operator norm on $\LL(\H)$, the space~$(\F, d)$ is a separable, complete metric space.}
\end{Thm}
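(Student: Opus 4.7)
The plan is to realize $\F$ as a closed subset of the separable Banach space $\mathscr{K}(\H)$ of compact operators on $\H$. A closed subset of a Polish space is Polish, so this suffices.

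First I would observe that every $x \in \F$ has rank at most $2n$ and is therefore compact, so $\F \subset \mathscr{K}(\H)$. As the author has already recalled, when $\H$ is separable the Banach space $(\mathscr{K}(\H), \|\cdot\|)$ is itself separable. Since $(\F, d)$ inherits the metric $d$ from the operator norm, and every subset of a separable metric space is separable, $\F$ is separable.

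The substantive step is to show that $\F$ is closed in $\LL(\H)$ with respect to the operator-norm topology. Suppose $(x_k)_{k \in \N} \subset \F$ converges in operator norm to some $x \in \LL(\H)$. Self-adjointness of $x$ follows from $\|x^* - x\| \le 2\|x_k - x\| \to 0$. Compactness of $x$ follows from the fact that $\mathscr{K}(\H)$ is closed in $\LL(\H)$ in operator norm, so $x$ is a self-adjoint compact operator with a well-defined spectral decomposition. To verify the eigenvalue count, I would argue by contradiction using the min-max (Courant--Fischer) principle: if $x$ had at least $n+1$ positive eigenvalues (counted with multiplicity), then by spectral decomposition there would exist an $(n+1)$-dimensional subspace $V \subset \H$ and a constant $\varepsilon > 0$ with $\langle x v, v \rangle \ge \varepsilon \|v\|^2$ for all $v \in V$. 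For $k$ large enough that $\|x_k - x\| < \varepsilon/2$, this would yield $\langle x_k v, v \rangle \ge (\varepsilon/2) \|v\|^2 > 0$ for every nonzero $v \in V$, forcing $x_k$ to have at least $n+1$ positive eigenvalues and contradicting $x_k \in \F$. The analogous argument applied to $-x$ bounds the number of negative eigenvalues of $x$ by $n$. Together with compactness, these two bounds imply that $x$ has finite rank at most $2n$, hence $x \in \F$.

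Closedness in the Banach space $\LL(\H)$ means $(\F, d)$ is a closed subset of a complete metric space, so it is itself complete. Combined with separability, $\F$ is Polish. The main obstacle is the stability of the eigenvalue-count condition under operator-norm limits; this is the only place where a genuine spectral argument is required, and I would handle it through the min-max characterization rather than trying to track individual eigenvalues, since the limit $x$ may well have strictly fewer nonzero eigenvalues than the approximants $x_k$ (e.g.\ if $x_k \to 0$), so one cannot hope for equality of eigenvalue counts, only upper semi-continuity.
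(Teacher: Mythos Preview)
Your proof is correct and follows the same overall architecture as the paper's: embed $\F$ into the separable Banach space $\mathscr{K}(\H)$ for separability, and prove closedness in $\LL(\H)$ for completeness. The difference lies entirely in how the eigenvalue-count stability is established. The paper invokes Kato's perturbation theory in full: it quotes the separation-of-spectrum theorem and the continuity of spectral projections under generalized convergence, enclosing the positive eigenvalues of the limit by a contour $\Gamma$ and arguing that the total algebraic multiplicity inside $\Gamma$ is preserved for $A_n$ close to $A$. Your route via the Courant--Fischer principle is considerably more elementary and self-contained: the existence of an $(n+1)$-dimensional subspace on which $x$ is uniformly positive, together with a simple codimension count (the non-positive spectral subspace of $x_k$ has codimension at most $n$, hence meets $V$ nontrivially), gives the contradiction directly without resolvents or contour integrals. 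Your approach buys brevity and avoids importing several pages of background from Kato; the paper's approach, while heavier, would generalize more readily to situations where one wants finer control over how individual eigenvalues move, not just an upper bound on their count.
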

Throughout this section we assume that~$(\H, \F, \rho)$ is a causal fermion system of spin dimension $s \in \N$. More precisely, we consider a (possibly infinite-dimensional) separable complex Hilbert space~$\H$ endowed with a scalar product~$\langle . \mid . \rangle_{\H}$. Denoting the set of all bounded linear operators on~$\H$ by $\LL(\H)$, we let~$\F \subset \LL(\H)$ be the subset consisting of those operators $A \in \LL(\H)$ which are self-adjoint with respect to the scalar product~$\langle . \mid . \rangle_{\H}$ on~$\H$ and have at most~$s$ positive and at most $s$ negative eigenvalues (see~\cite[\S 1.1.1]{cfs}). The proof of Theorem~\ref{Theorem Polish} is split up in two parts: We first point out that~$\F$ is separable (\S \ref{S Separability}). Afterwards, we prove that~$\F$ is completely metrizable (\S \ref{S Completeness}). The result can be immediately generalized to the case of operators which have at most~$p$ positive and at most~$q$ negative eigenvalues.

\subsection{Separability}\label{S Separability}
In order to prove separability of $\F$, we employ the following argument: Given an infinite-dimensional, separable complex Hilbert space~$H$, the set of linear operators on $H$, denoted by $\LL(H)$, is a non-separable Banach space (see \cite[\S3.A and \S 12.E]{kechris}). Since each compact linear operator on $H$ is bounded, from \cite[\S 12.E]{kechris} we infer that the class of compact operators 
$\K(H) \subset \LL(H)$ is separable (as well as a Banach space according to~\cite[Satz II.3.2]{werner}). Applying the previous results to a causal fermion system~$(\H, \F, \rho)$ shows that $\K(\H)$ is separable. Since each $A \in \F$ has finite rank (see e.g.~\cite[Chapter~15]{meise+vogt}), from \cite[Section II.3]{werner} we conclude that $A$ is compact, implying that~$\F \subset \K(\H)$. Since $\LL(\H)$ is metrizable by the Fréchet metric induced by the operator norm on $\LL(\H)$, the set $\K(\H)$ is metrizable, and hence $\F$ is separable in view of~\cite[Corollary 3.5]{aliprantis}. 

\subsection{Completeness}\label{S Completeness}
The aim of this subsection is to show that~$\F$ is completely metrizable with respect to the Fréchet metric induced by the operator norm on $\LL(\H)$. To this end, we proceed as follows. Given a sequence of operators $(A_n)_{n \in \N}$ in $\F$, our task is to prove that its limit~$A \in \K(\H)$ is self-adjoint (with respect to the scalar product~$\langle . \mid . \rangle_{\H}$ on~$\H$) and has at most $n$ positive and at most $n$ negative eigenvalues. 

\subsubsection{Self-Adjointness} 
We start by proving that $A$ is self-adjoint in the case of a general Hilbert space $H$. 

\begin{Lemma}\label{Lemma Self-Adjointness general}
	Let $(H, \langle . \mid . \rangle_H)$ be a Hilbert space, and let $(A_n)_{n \in \N}$ be a sequence of self-adjoint operators in $\LL(H)$ converging in norm to some $A \in \LL(H)$. Then $A$ is self-adjoint. 
\end{Lemma}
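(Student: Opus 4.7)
The plan is to exploit continuity of the adjoint operation. Recall that for any $B \in \LL(H)$ one has $\|B^*\| = \|B\|$, so the involution $B \mapsto B^*$ is an isometry on $\LL(H)$, in particular a norm-continuous map. Hence from $\|A_n - A\| \to 0$ I would deduce $\|A_n^* - A^*\| = \|(A_n - A)^*\| \to 0$, i.e.\ $A_n^* \to A^*$ in operator norm. Since every $A_n$ is self-adjoint, $A_n^* = A_n$ for all $n \in \N$, so both sequences $(A_n)$ and $(A_n^*)$ converge to the same limit by uniqueness of limits in the normed space $\LL(H)$; that gives $A = A^*$.

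If one prefers a hands-on variant avoiding the fact that the involution is isometric, I would instead fix arbitrary $x, y \in H$ and estimate
\[
\bigl|\langle Ax, y\rangle_H - \langle x, Ay\rangle_H\bigr|
\leq \bigl|\langle (A - A_n)x, y\rangle_H\bigr| + \bigl|\langle x, (A_n - A)y\rangle_H\bigr|
\leq 2\,\|A_n - A\|\,\|x\|\,\|y\|,
\]
using self-adjointness of $A_n$ to cancel the middle terms $\langle A_n x, y\rangle_H = \langle x, A_n y\rangle_H$. Letting $n \to \infty$ yields $\langle Ax, y\rangle_H = \langle x, Ay\rangle_H$ for all $x, y \in H$, so $A$ is self-adjoint.

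There is no real obstacle here; the argument is a short two-line consequence of norm continuity of the involution on $\LL(H)$, and the only subtlety worth flagging is that the conclusion really does require \emph{norm} convergence (strong or weak operator convergence would suffice as well in this symmetric setting, but uniformity is not needed).
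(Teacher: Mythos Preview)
Your proposal is correct, and your second ``hands-on'' variant is essentially identical to the paper's proof: the paper fixes $u,v \in H$, inserts $\langle A_n u \mid v\rangle_H = \langle u \mid A_n v\rangle_H$, and bounds the difference by $2\|A - A_n\|\,\|u\|\,\|v\|$ via Cauchy--Schwarz. Your first argument, via norm-continuity of the involution $B \mapsto B^*$, is an equally short and valid alternative that the paper does not spell out.
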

\begin{proof}
	For any $u, v \in H$, applying the Cauchy-Schwarz inequality and making use of the fact that $A_n$ is self-adjoint for every~$n \in \N$ yields
	\begin{align*}
	\left|\langle u \mid  A^* \: v \rangle_{H} - \langle u \mid A \: v \rangle_{H} \right| &= \left|\langle A\: u \mid v \rangle_{H} - \langle u \mid A \: v \rangle_{H} \right| \\
	&= \left|\langle A \: u \mid v \rangle_{H}-\langle A_n \: u \mid v \rangle_{H} + \langle A_n \: u \mid v \rangle_{H} - \langle u \mid A \:v \rangle_{H} \right|\\ 
	&\le 2 \: \|A-A_n\|_{\textup{L}(H)} \,\|u\| \,\|v\| \to_{n \to \infty} 0 \:.
	\end{align*}
	This completes the proof.
\end{proof}

\subsubsection{Operators, Resolvents and Spectra}
The remainder of this section is dedicated to the proof that the limit $A \in \LL(\H)$ of a sequence $(A_n)_{n \in \N}$ in $\F$ (with respect to the operator norm) has at most $s$ positive and at most $s$ negative eigenvalues. To this end, we will essentially make use of results in \cite{kato}, which we now briefly recall. 

For Banach spaces $X$ and~$Y$, by~$\mathscr{B}(X,Y)$ and~$\mathscr{C}(X,Y)$ we denote the set of all bounded and closed operators from $X$ to $Y$, respectively. Then~$\mathscr{B}(X,Y)$ is a Banach space, and we let~$\mathscr{B}(X) := \mathscr{B}(X,X)$ and~$\mathscr{C}(X) := \mathscr{C}(X,X)$. We denote the domain of an operator~$T$ from~$X$ to~$Y$ by~$\mathsf{D}(T)$, and its graph~$\mathsf{G}(T)$ is by definition the subset of~$X \times Y$ consisting of all elements of the form $(u, Tu)$ with~$u \in \mathsf{D}(T)$. Note that~$\mathsf{G}(T)$ is a closed linear subspace of $X \times Y$ if and only if~$T \in \mathscr{C}(X, Y)$ (see \cite[III-\S 5.2]{kato}). 

In what follows, let $X$, $Y$ be complex Banach spaces, and let $H$ be a complex Hilbert space.  
For $\zeta \in \C$ and~$T \in \mathscr{C}(X)$, we introduce the operator 
\begin{align*}
T_{\zeta} := T- \zeta \Id \:. 
\end{align*}
Then the \emph{resolvent set} $\rho(T)$ is defined to consist of all $\xi \in \C$ for which~$T_{\zeta}$ has an inverse, denoted by
\begin{align*}
R(\zeta) = R(\zeta, T) := (T- \zeta)^{-1} \:. 
\end{align*}
We call $R(\zeta,T)$ the \emph{resolvent} of $T$ (see~\cite[III-\S 6]{kato} and~\cite[Definition 8.38]{renardy+rogers}). 
The \emph{spectrum} $\sigma(T)$ of $T$ is given by the complementary set of the resolvent set in the complex plane, $\sigma(T) := \C \setminus \rho(T)$. 
Note that the spectrum of a compact operator~$T$ in a Banach space $X$ has a simple structure analogous to that of an operator in a finite-dimensional space. Namely, for compact operators, each non-zero eigenvalue is of finite multiplicity:
\begin{Thm}\label{Theorem III-6.26}
	Let $T \in \mathscr{B}(X)$ be compact. Then $\sigma(T)$ is a countable set with no accumulation point different from zero, and each nonzero $\lambda \in \sigma(T)$ is an eigenvalue of~$T$ with finite multiplicity. 
\end{Thm}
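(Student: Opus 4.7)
The plan is to reduce the statement to the Riesz theory of operators of the form $I - K$ with $K$ compact. For any $\lambda \neq 0$ one factors $T - \lambda I = -\lambda(I - \lambda^{-1} T)$, so $\lambda \in \sigma(T) \setminus \{0\}$ precisely when $I - \lambda^{-1} T$ fails to be invertible; since $\lambda^{-1} T$ is compact, it suffices to analyze $I - K$ for arbitrary compact $K \in \mathscr{B}(X)$.

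First I would show that $N := \ker(I - K)$ is finite-dimensional: on $N$ the operator $K$ acts as the identity, so the closed unit ball of $N$ is mapped by $K$ onto itself and is therefore relatively compact; by exactly the argument used in Proposition~\ref{Proposition Banach}, this forces $\dim N < \infty$, which gives finite multiplicity of each nonzero eigenvalue.

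Next I would prove that $R := \operatorname{ran}(I - K)$ is closed. Given a sequence $y_n = (I-K)x_n \to y$, one replaces $x_n$ by a representative of its coset in $X/N$ of minimal norm (possible since $N$ is finite-dimensional and therefore the quotient norm is attained). A standard rescaling argument shows that $(x_n)$ must remain bounded, for otherwise setting $u_n := x_n / \|x_n\|$ and extracting a subsequence along which $K u_n$ converges (by compactness of $K$) yields a unit vector $u \in N$ whose distance from $N$ is zero, contradicting minimality. Once boundedness of $(x_n)$ is secured, compactness of $K$ produces a convergent subsequence $(K x_{n_k})$, and then $x_{n_k} = y_{n_k} + K x_{n_k}$ converges to a preimage of $y$. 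A standard iterated-kernel argument (using that $\ker(I-K)^m$ eventually stabilizes, again via Riesz's lemma) then yields the Fredholm alternative $\operatorname{codim} R = \dim N$, so that $\lambda \neq 0$ lies in $\sigma(T)$ if and only if it is an eigenvalue of $T$ of finite multiplicity.

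For the final claim, suppose towards a contradiction that distinct eigenvalues $\lambda_n$ converge to some $\lambda \neq 0$, with unit eigenvectors $e_n$, and let $M_n := \operatorname{span}(e_1, \ldots, e_n)$. Since eigenvectors corresponding to distinct eigenvalues are linearly independent, $M_{n-1} \subsetneq M_n$ is a proper closed subspace, so the Riesz lemma supplies $y_n \in M_n$ with $\|y_n\| = 1$ and $\operatorname{dist}(y_n, M_{n-1}) \geq 1/2$. A direct calculation (expand $y_n$ in the eigenbasis) shows $T y_n - \lambda_n y_n \in M_{n-1}$, so for $m < n$
\[
\bigl\| \lambda_n^{-1} T y_n - \lambda_m^{-1} T y_m \bigr\| \;\geq\; \operatorname{dist}\bigl(y_n,\, M_{n-1}\bigr) \;\geq\; \tfrac{1}{2}.
\]
Because $\lambda_n^{-1}$ remains bounded, $(T y_n)$ can have no convergent subsequence, contradicting compactness of $T$. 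Hence $\sigma(T) \setminus \{0\}$ has no accumulation point, which forces $\sigma(T)$ to be at most countable. I expect the closed-range assertion for $I - K$ to be the main technical obstacle, while the remaining steps reduce cleanly to Riesz's lemma combined with the compactness of $T$.
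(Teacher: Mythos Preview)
Your sketch is correct: it is the classical Riesz--Schauder argument (finite-dimensionality of $\ker(I-K)$, closedness of the range, Fredholm alternative, and the Riesz-lemma contradiction ruling out nonzero accumulation points). The paper itself does not give a proof at all---it simply cites \cite[Theorem III-6.26]{kato}---and what you have outlined is essentially the argument one finds there, so there is no substantive difference to compare.
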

\begin{proof}
	See \cite[Theorem III-6.26]{kato}.
\end{proof}
Moreover, the spectrum~$\sigma(T)$ of a selfadjoint operator $T$ in~$H$ is a subset of the real axis. 

An isolated point of the spectrum is referred to as \emph{isolated eigenvalue} \cite[III-\S6.5]{kato}. 
Concerning compact operators $T$ on $X$, we may state the following remark. 
\begin{Remark}
	\label{Remark III-6.27}
	Let $T$ be a compact operator. 
	Then every complex number $\lambda \not= 0$ belongs to $\rho(T)$ or is an isolated eigenvalue with finite multiplicity. 
	See \cite[Remark III-6.27]{kato}. 
\end{Remark}

\subsubsection{Projection and Decomposition} 
Let $X, Y$ be Banach spaces, and let $M \subset X$ be a linear subspace (or ``manifold'' in the terminology of~\cite{kato}).  
As usual, an idempotent operator~$P \in \mathscr{B}(X)$ ($P^2 = P$) is called a \emph{projection}, giving rise to the decomposition
\begin{align}\label{(3.14)}
X = M \oplus N \:, 
\end{align}
where $M = PX$ and $N = (1- P) X$ 
are \emph{closed} linear subspaces of~$X$ which are referred to as \emph{complementary} (see \cite[III-\S3.4]{kato}). 
Then each $x \in X$ can be uniquely expressed in the form $u = u' + u''$ with $u' \in M$ and~$u'' \in N$. The vector~$u'$ is called the \emph{projection of $u$ on $M$ along $N$}, and~$P$ is called the \emph{projection operator} (or simply the \emph{projection}) \emph{on $M$ along $N$}. Accordingly, the operator~$1- P$ is the projection on $N$ along $M$. The range of $P$ is $M$ and the null space of $P$ is~$N$. For convenience we often write $\dim P$ for $\dim M = \dim \mathcal{R}(P)$, where $\mathcal{R}(P)$ denotes the range of~$P$. Since $Pu \in M$ for every~$u \in X$, we have $PPu = Pu$, implying that~$P$ is \emph{idempotent}: $P^2 = P$.     

Next, a linear subspace $M$ is said to be \emph{invariant} under an operator $T \in \mathscr{B}(X)$ if~$TM \subset M$. In this case, $T$ \emph{induces} a linear operator $T_M$ on $M$ to $M$, defined by~$T_Mu = Tu$ for $u \in M$. the operator~$T_M$ is called the \emph{part of $T$ in $M$}.
If there are two invariant linear subspaces $M, N$ for $T$ such that $X = M \oplus N$, the operator $T$ is said to be \emph{decomposed} (or \emph{reduced}) by the pair $M, N$. 

The notion of the \emph{decomposition} of $T$ by a pair $M, N$ of complementary subspaces (see \eqref{(3.14)}, \cite[III-(3.14)]{kato}) can be extended in the following way. An operator~$T$ is said to be decomposed according to $X = M \oplus N$ if
\begin{align}\label{(5.23)}
PD(T) \subset D(T) \:, \qquad TM \subset M \:, \qquad TN \subset N \:,
\end{align}
where $P$ is the projection on $M$ along $N$. 
When $T$ is decomposed as above, the \emph{parts}~$T_M$, $T_N$ of $T$ in $M, N$, respectively, can be defined. Then $T_M$ is an operator in the Banach space $M$ with $D(T_M) = D(T) \cap M$ such that $T_M u = Tu \in M$, and $T_N$ is defined similarly.

\subsubsection{Generalized Convergence}
Let us briefly recall the definition of convergence in the generalized sense: 

\begin{Def}
	Let $T, T_n \in \mathscr{B}(X,Y)$ for all~$n \in \N$.  
	\begin{itemize}[leftmargin=2em]
		\item[\rm{(i)}] The convergence of $(T_n)_{n \in \N}$ to $T$ in the sense of $\|T_n - T\| \to 0$ is called \emph{uniform convergence} or \emph{convergence in norm}.
		\item[\rm{(ii)}] Given closed operators~$T,S \in \mathscr{C}(X,Y)$,
		their graphs $\mathsf{G}(T)$, $\mathsf{G}(S)$ are closed linear subspaces of the product space~$X \times Y$. For two closed linear subspaces~$M, N$ of a Banach space $Z$ we let $\delta(M,N)$ be the smallest number $\delta$ such that
		\begin{align*}
		\dist (u, N) \le \delta\, \|u\| \qquad \text{for all $u \in M$.}
		\end{align*}
		We call $\hat{\delta}(T,S) := \delta(\mathsf{G}(T), \mathsf{G}(S))$ the \emph{gap} between $T$ and $S$. 
		If $\hat{\delta}(T_n, T) \to 0$, we shall also say that the operator $T_n$ converges to $T$ (or $T_n \to T$) \emph{in the generalized sense}. 
	\end{itemize}
	See \cite[Chapter III, \S 3.1]{kato} and \cite[Chapter IV, \S 2.1]{kato}.
\end{Def}

The following theorem establishes a connection between convergence in the generalized sense and uniform convergence. 

\begin{Thm}\label{Theorem IV-2.23}
	Let ${T, T_n \in \mathscr{C}(X,Y)}$ for all~$n \in \N$. If ${T \in \mathscr{B}(X,Y)}$, then~${T_n \to T}$ in the generalized sense iff ${T_n \in \mathscr{B}(X,Y)}$ for sufficiently large $n$ and ${\|T_n - T\| \to 0}$.
\end{Thm}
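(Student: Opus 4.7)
The plan is to establish the two implications separately. The reverse direction is essentially a direct computation, while the forward direction requires a perturbation argument exploiting that a bounded everywhere-defined operator has a graph which projects isomorphically onto the domain space.

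For the reverse direction, assume $T_n \in \mathscr{B}(X,Y)$ for large $n$ with $\|T_n - T\| \to 0$. Given any $(x, T_n x) \in \mathsf{G}(T_n)$, the element $(x, T x) \in \mathsf{G}(T)$ is an approximant in $X \times Y$, and
\[
\|(x, T_n x) - (x, T x)\| \;=\; \|T_n x - T x\| \;\le\; \|T_n - T\|\,\|x\| \;\le\; \|T_n - T\|\,\|(x, T_n x)\|.
\]
This yields $\delta(\mathsf{G}(T_n), \mathsf{G}(T)) \le \|T_n - T\|$, and the symmetric inequality follows identically, so $\hat{\delta}(T_n, T) \le \|T_n - T\| \to 0$.

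For the forward direction, suppose $\varepsilon_n := \hat{\delta}(T_n, T) \to 0$ and $T \in \mathscr{B}(X, Y)$. The key observation is that the projection $\pi \colon X \times Y \to X$, $(x, y) \mapsto x$, restricts on $\mathsf{G}(T)$ to a topological isomorphism with inverse $x \mapsto (x, T x)$ of norm at most $1 + \|T\|$. I would first show that for sufficiently large $n$, the restriction $\pi|_{\mathsf{G}(T_n)}$ is surjective onto $X$. To this end, for $x \in X$ with $\|x\|=1$, the element $(x, T x) \in \mathsf{G}(T)$ has norm at most $1 + \|T\|$, so by $\delta(\mathsf{G}(T), \mathsf{G}(T_n)) \le \varepsilon_n$ there exists $(u_n, T_n u_n) \in \mathsf{G}(T_n)$ with $\|x - u_n\| \le \varepsilon_n (1 + \|T\|)$. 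This defines an approximation map $x \mapsto u_n$ whose deviation from the identity on $X$ has operator bound $\le \varepsilon_n (1 + \|T\|) < 1$ for $n$ large, so a Neumann-series argument shows $\pi|_{\mathsf{G}(T_n)}$ is surjective. Since $T_n$ is closed and $\mathsf{D}(T_n) = X$, the closed graph theorem gives $T_n \in \mathscr{B}(X, Y)$.

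Once $T_n$ is bounded and everywhere defined, the same gap estimate yields a uniform bound on $\|T_n\|$: pick any $x \in X$ and use $\delta(\mathsf{G}(T_n), \mathsf{G}(T)) \le \varepsilon_n$ to approximate $(x, T_n x)$ by some $(v, T v) \in \mathsf{G}(T)$, which combined with the triangle inequality and $\|T v\| \le \|T\| \|v\|$ allows one to extract $\|T_n x\| \le (\|T\| + O(\varepsilon_n)) \|x\|$. Finally, $\|T_n - T\| \to 0$ follows: for any $x$, pick the approximant $(v, T v)$ as above; then $\|T_n x - T x\| \le \|T_n x - T v\| + \|T(v - x)\| \le \varepsilon_n \|(x, T_n x)\| + \|T\|\,\varepsilon_n \|(x, T_n x)\|$, and the uniform bound on $\|T_n\|$ converts this into $\|T_n - T\| = O(\varepsilon_n)$.

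The main obstacle is the surjectivity step: moving from a purely geometric closeness of graphs to the assertion that every $x \in X$ actually lies in $\mathsf{D}(T_n)$. Without the everywhere-defined hypothesis on $T$ (which pins the geometry of $\mathsf{G}(T)$ as a topological complement of $\{0\} \times Y$), the gap condition is too weak to force $\mathsf{D}(T_n) = X$; it is precisely the role of $T \in \mathscr{B}(X, Y)$ to enable the Neumann/open-mapping perturbation argument.
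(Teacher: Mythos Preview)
The paper does not actually prove this statement; it simply cites Kato's \emph{Perturbation Theory for Linear Operators}, Theorem~IV--2.23. Your proposal supplies precisely the standard argument that Kato gives, so in that sense you are aligned with the paper's (deferred) approach.

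Your proof is correct in substance, but the surjectivity step is phrased loosely. The ``approximation map $x \mapsto u_n$'' is not a linear operator---the choice of $u_n$ is neither unique nor additive---so one cannot literally apply a Neumann series to it. What actually works is the iterative scheme you allude to: given $x$, choose $u_1 \in \mathsf{D}(T_n)$ with $\|x - u_1\| + \|Tx - T_n u_1\| \le \varepsilon_n(1+\|T\|)\|x\|$, then apply the same estimate to $x - u_1$, and so on. The resulting geometric series $\sum_k u_k$ converges in $X$, and crucially the companion series $\sum_k T_n u_k$ also converges in $Y$; closedness of $T_n$ then forces $\sum_k u_k \in \mathsf{D}(T_n)$ with $T_n\sum_k u_k = \sum_k T_n u_k$, and the first sum equals $x$. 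This is the precise content of your ``Neumann-series argument,'' and it does require the simultaneous control of both coordinates in the graph, not just the $X$-component. Once this is made explicit, the remainder of your argument (closed graph theorem, uniform bound, norm convergence) goes through as written.
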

\begin{proof}
	See \cite[Theorem IV-2.23]{kato}. 
\end{proof}

\subsubsection{Separation of the Spectrum}\label{III-S 6.4}
Sometimes it happens that the spectrum $\sigma(T)$ of a closed operator $T$ contains a bounded part $\sigma'$ separated from the rest $\sigma''$ in such a way that a rectifiable, simple closed curve $\Gamma$ (or, more generally, a finite number of such curves) can be drawn so as to enclose an open set containing $\sigma'$ in its interior and~$\sigma''$ in its exterior. Under such a circumstance, we have the following \emph{decomposition theorem}.

\begin{Thm}\label{Theorem III-6.17}
	Let $\sigma(T)$ be separated into two parts $\sigma'$, $\sigma''$ in the way described above. Then we have a decomposition of $T$ according to a decomposition $X = M' \oplus M''$ of the space (in the sense of \eqref{(5.23)}, cf.\ \cite[III-\S 5.6]{kato}) in such a way that the spectra of the parts $T_{M'}$, $T_{M''}$ coincide with $\sigma'$, $\sigma''$ respectively and $T_{M'} \in \mathscr{B}(M')$. 
\end{Thm}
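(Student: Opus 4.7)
The plan is to construct the desired decomposition by means of the \emph{Riesz projection} associated with the curve $\Gamma$. Concretely, I would define
\[
P := -\frac{1}{2 \pi i} \oint_{\Gamma} R(\zeta, T) \, d\zeta = \frac{1}{2 \pi i} \oint_{\Gamma} (\zeta - T)^{-1} \, d\zeta \:,
\]
where $\Gamma$ is oriented counterclockwise so as to enclose $\sigma'$. Since the resolvent $\zeta \mapsto R(\zeta, T) \in \mathscr{B}(X)$ is holomorphic on the resolvent set $\rho(T)$ and $\Gamma \subset \rho(T)$ by hypothesis, the integral converges in operator norm and yields a well-defined element $P \in \mathscr{B}(X)$. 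Cauchy's theorem ensures that $P$ does not depend on the particular choice of $\Gamma$ within its homotopy class in $\rho(T)$.

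Next I would verify that $P$ is an idempotent commuting with $T$. The key computation is to take two curves $\Gamma, \Gamma'$ enclosing $\sigma'$ with $\Gamma'$ lying strictly inside $\Gamma$, write
\[
P^2 = -\frac{1}{4 \pi^2} \oint_{\Gamma} d\zeta \oint_{\Gamma'} d\eta \, R(\zeta, T) \, R(\eta, T) \:,
\]
and apply the first resolvent identity $R(\zeta, T) R(\eta, T) = (R(\eta, T) - R(\zeta, T))/(\zeta - \eta)$. The inner integrals over $\eta$ and $\zeta$ can then be evaluated by Cauchy's integral formula (the contour $\Gamma'$ lies inside $\Gamma$, so $(\zeta - \eta)^{-1}$ behaves accordingly), yielding $P^2 = P$. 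Commutation with $T$ on $D(T)$, as well as the invariance conditions~\eqref{(5.23)} for $M' := PX$ and $M'' := (1-P)X$, follows from the fact that $T$ commutes with $R(\zeta, T)$ on $D(T)$ inside the integrand; this also implies $PD(T) \subset D(T)$. Thus $T$ is decomposed according to $X = M' \oplus M''$ in the sense of~\eqref{(5.23)}, and the parts $T_{M'} \in \mathscr{C}(M')$ and $T_{M''} \in \mathscr{C}(M'')$ are well-defined.

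To identify the spectra, I would show $\sigma(T_{M'}) = \sigma'$ and $\sigma(T_{M''}) = \sigma''$ by constructing the resolvents explicitly. For $\zeta \notin \sigma'$ I would set
\[
R'(\zeta) := -\frac{1}{2 \pi i} \oint_{\Gamma} \frac{R(\eta, T)}{\eta - \zeta} \, d\eta \Bigr|_{M'} \:,
\]
and check by the resolvent identity that $R'(\zeta) (T_{M'} - \zeta) = \mathrm{id}_{M'}$ on $D(T_{M'})$ and $(T_{M'} - \zeta) R'(\zeta) = \mathrm{id}_{M'}$, so $\zeta \in \rho(T_{M'})$; the analogous construction gives $\rho(T_{M''}) \supset \C \setminus \sigma''$. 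Since $\sigma(T) = \sigma(T_{M'}) \cup \sigma(T_{M''})$ from the block decomposition, the two inclusions force equality.

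Finally, boundedness of $T_{M'}$ is obtained from the formula
\[
T_{M'} = -\frac{1}{2 \pi i} \oint_{\Gamma} \zeta \, R(\zeta, T) \, d\zeta \Bigr|_{M'} \:,
\]
which defines a bounded operator on $X$ whose restriction to $M'$ agrees with $T$ on $D(T) \cap M'$; since the right-hand side is bounded everywhere on $M'$, closedness of $T_{M'}$ forces $D(T_{M'}) = M'$, so $T_{M'} \in \mathscr{B}(M')$. The main obstacle I anticipate is the bookkeeping with nested contours and the careful application of Cauchy's formula needed to establish both $P^2 = P$ and the explicit form of the resolvent of $T_{M'}$; everything else is essentially algebraic manipulation of the resolvent identity.
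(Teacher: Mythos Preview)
Your proposal is correct and follows exactly the approach the paper invokes: the paper does not give its own proof but cites \cite[Theorem~III-6.17]{kato}, and the remark immediately following~\eqref{(6.19)} confirms that Kato's argument proceeds via the eigenprojection $P[T] = -\frac{1}{2\pi i}\int_\Gamma R(\zeta,T)\,d\zeta$, which is precisely the Riesz projection you construct. Your outline of idempotency via nested contours and the resolvent identity, the spectral identification, and the boundedness of $T_{M'}$ via the operator-valued Cauchy integral all match Kato's treatment.
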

\begin{proof}
	See \cite[Theorem III-6.17]{kato}. 
\end{proof}
The proof of Theorem \ref{Theorem III-6.17} makes use of the so-called \emph{eigenprojection} 
\begin{align}\label{(6.19)}
P[T] = - \frac{1}{2\pi i} \int_{\Gamma} R(\zeta, T) \: d\zeta \in \mathscr{B}(X) \:, 
\end{align}
which is a projection on $M' = P[T]X$ along $M'' = (1-P[T])X$. 

\begin{Remark}[\textbf{Finite system of eigenvalues}]\label{Remark finite system}
	Suppose that the spectrum $\sigma(T)$ of~$T \in \mathscr{C}(X)$ has an \emph{isolated point} $\lambda$. Obviously $\sigma(T)$ is divided into two separate parts~$\sigma'$, $\sigma''$ 
	where $\sigma'$ consists of the single point $\lambda$; any closed curve enclosing $\lambda$ but no other point of $\sigma(T)$ may be chosen as $\Gamma$. Then the spectrum of the operator~$T_{M'}$ described in~\cite[Theorem III-6.17]{kato} (see Theorem \ref{Theorem III-6.17}) consists of the single point $\lambda$.
	
	If $M'$ is finite-dimensional, $\lambda$ is an \emph{eigenvalue} of $T$. In fact, since $\lambda$ belongs to the spectrum of the finite-dimensional operator $T_{M'}$, it must be an eigenvalue of $T_{M'}$ and hence of $T$. In this case, $\dim M'$ is called the \emph{(algebraic) multiplicity} of the eigenvalue~$\lambda$ of $T$.
	
	For brevity, a finite collection $\lambda_1, \ldots, \lambda_s$ of eigenvalues with finite multiplicities will be called a \emph{finite system of eigenvalues}. 
\end{Remark}

\subsubsection{Continuity of the Spectrum}
This paragraph is devoted that proof that the spectrum of a sequence of operators converging in the generalized sense behaves continuously. 

\begin{Thm}\label{Theorem IV-3.16}
	Let $T \in \mathscr{C}(X)$ and let $\sigma(T)$ be separated into two parts $\sigma'(T)$, $\sigma''(T)$ by a closed curve~$\Gamma$ as in \S \ref{III-S 6.4} (cf.\ \cite[III-\S 6.4]{kato}). Let $X = M'(T) \oplus M''(T)$ be the associated decomposition of $X$. Then there exists~$\delta >0$, depending on $T$ and~$\Gamma$, with the following properties. Any $S \in \mathscr{C}(X)$ with $\hat{\delta}(S, T) < \delta$ has spectrum $\sigma(S)$ likewise separated by~$\Gamma$ into two parts $\sigma'(S)$, $\sigma''(S)$ ($\Gamma$ itself running in $\rho(S)$). In the associated decomposition $X = M'(S) \oplus M''(S)$, $M'(S)$ and $M''(S)$ are isomorphic with $M'(T)$ and $M''(T)$, respectively. In particular,~$\dim M'(S) = \dim M'(T)$, $\dim M''(S) = \dim M''(T)$ and both $\sigma'(S)$ and $\sigma''(S)$ are nonempty if this is true for~$T$. The decomposition $X = M'(S) \oplus M''(S)$ is continuous in $S$ in the sense that the projection $P[S]$ of $X$ onto $M'(S)$ along $M''(S)$ tends to $P[T]$ in norm as $\hat{\delta}(S, T) \to 0$.
\end{Thm}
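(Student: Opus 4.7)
The plan is to adapt the classical Kato-style argument by reducing the perturbation problem to a norm statement about resolvents on the contour $\Gamma$, from which both the spectral separation and the projection convergence will follow.

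First, I will establish that the resolvent depends continuously on the operator in the generalized sense, uniformly on compact subsets of $\rho(T)$. Concretely, since $\Gamma \subset \rho(T)$ is compact, the map $\zeta \mapsto R(\zeta,T)$ is continuous and hence bounded on $\Gamma$; let $M := \sup_{\zeta \in \Gamma} \|R(\zeta,T)\|$. Using the characterization of generalized convergence in terms of the gap metric (as developed earlier in Kato's~\S IV.2, cf.\ Theorem~\ref{Theorem IV-2.23} and its unbounded analogue), there exists $\delta_0 > 0$ such that for every $S \in \mathscr{C}(X)$ with $\hat{\delta}(S,T) < \delta_0$ and every $\zeta \in \Gamma$, the operator $\zeta \in \rho(S)$ with
\[
\|R(\zeta,S) - R(\zeta,T)\| \le C\, \hat{\delta}(S,T)
\]
for a constant $C$ depending only on $T$ and $\Gamma$. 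This is the crucial quantitative step and the main technical obstacle, since $T$ and $S$ need not be bounded and the gap metric does not directly control differences of operator values; one bypasses this by working with the graphs in $X \times X$ and using a Neumann-series argument relative to $R(\zeta,T)$. Choosing $\delta \in (0,\delta_0)$ small enough also guarantees that $\Gamma$ separates $\sigma(S)$.

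Second, having $\Gamma \subset \rho(S)$, I define the spectral projection for $S$ by the contour integral as in~\eqref{(6.19)},
\[
P[S] \;:=\; -\frac{1}{2\pi i} \int_{\Gamma} R(\zeta,S)\, d\zeta \;\in\; \mathscr{B}(X).
\]
Standard Dunford calculus (as in Kato~\S III.6.4 and Theorem~\ref{Theorem III-6.17}) shows that $P[S]$ is idempotent, commutes with $S$, and induces a decomposition $X = M'(S) \oplus M''(S)$ with $M'(S) = P[S]X$ and $M''(S) = (1 - P[S])X$ such that $\sigma(S_{M'(S)})$ is the part of $\sigma(S)$ inside $\Gamma$ and $\sigma(S_{M''(S)})$ the part outside. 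The uniform resolvent estimate from the first step immediately gives
\[
\|P[S] - P[T]\| \le \frac{1}{2\pi}\,\ell(\Gamma)\,C\,\hat{\delta}(S,T) \;\xrightarrow[\hat{\delta}(S,T) \to 0]{}\; 0,
\]
where $\ell(\Gamma)$ is the length of $\Gamma$. This proves the final continuity assertion of the theorem.

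Finally, the isomorphism $M'(S) \cong M'(T)$ (and similarly for $M''$) will follow from a general fact about norm-close projections: if $P, Q \in \mathscr{B}(X)$ are projections with $\|P - Q\| < 1$, then the operator $U := QP + (1-Q)(1-P)$ is invertible with $UP = QU$, so $U$ restricts to an isomorphism of $PX$ onto $QX$. Applying this with $P = P[T]$ and $Q = P[S]$ for $S$ sufficiently close to $T$ yields $M'(S) \cong M'(T)$ and $M''(S) \cong M''(T)$; in particular their dimensions agree, and nonemptiness of $\sigma'(T)$, $\sigma''(T)$ transfers to $S$ because the corresponding invariant subspaces are nonzero. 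Shrinking $\delta$ once more if necessary to make $\|P[S] - P[T]\| < 1$ concludes the proof.
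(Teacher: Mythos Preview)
The paper does not prove this theorem at all; its proof reads simply ``See \cite[Theorem IV-3.16]{kato}.'' Your proposal is therefore not a comparison but an actual proof sketch where the paper offers none, and it faithfully reproduces Kato's own argument: uniform resolvent stability on the contour, the Riesz--Dunford projection, norm convergence of $P[S]$ to $P[T]$, and the standard isomorphism lemma for projections with $\|P-Q\|<1$.

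One minor point: the Lipschitz-type bound $\|R(\zeta,S)-R(\zeta,T)\|\le C\,\hat{\delta}(S,T)$ you write down is a bit sharper than what Kato actually establishes in the unbounded setting; his estimates (cf.\ \cite[IV-\S2.6 and IV-\S3.5]{kato}) give continuity of $\zeta\mapsto R(\zeta,S)$ in $S$ uniformly on $\Gamma$, but not an explicit linear modulus. Since only the qualitative convergence $\|P[S]-P[T]\|\to 0$ as $\hat{\delta}(S,T)\to 0$ is needed for the conclusion, this does not affect the validity of your argument.
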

\begin{proof}
	See \cite[Theorem IV-3.16]{kato}.
\end{proof}

\begin{Lemma}
	\label{Lemma Finite System of Eigenvalues}
	Let $(T_n)_{n \in \N}$ be a sequence of compact operators in~$\K(X)$, and suppose that $T_n \to T$ in norm for some operator~$T \in \K(X)$. Moreover, let $\sigma'(T)$ be a finite system of $m$ eigenvalues, separated from the rest $\sigma''(T)$ of $\sigma(T)$ by a closed curve $\Gamma$ in the manner of \cite[III-\S 6.4]{kato}. Then~$\sigma(T_n)$ is separated by $\Gamma$ into $\sigma'(T_n)$, $\sigma''(T_n)$ such that each $\sigma'(T_n)$ also consists of $m$ eigenvalues of~$T_n$, provided that~$n$ is sufficiently large.
\end{Lemma}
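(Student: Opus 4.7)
The plan is to derive the result as a direct combination of Theorem~\ref{Theorem IV-2.23}, Theorem~\ref{Theorem IV-3.16}, and the characterization of a finite system of eigenvalues given in Remark~\ref{Remark finite system} (together with Theorem~\ref{Theorem III-6.17}).

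First, since $T, T_n \in \K(X) \subset \mathscr{B}(X)$ and $\|T_n - T\| \to 0$, Theorem~\ref{Theorem IV-2.23} yields $T_n \to T$ in the generalized sense, i.e.\ $\hat{\delta}(T_n, T) \to 0$. The hypothesis that $\sigma'(T)$ is a finite system of $m$ eigenvalues means, by Remark~\ref{Remark finite system} together with Theorem~\ref{Theorem III-6.17}, that the associated decomposition $X = M'(T) \oplus M''(T)$ (with $M'(T) = P[T]X$, $P[T]$ being the eigenprojection~\eqref{(6.19)}) satisfies $\dim M'(T) = m$ (the sum of the algebraic multiplicities of the eigenvalues in $\sigma'(T)$), and the spectrum of the part $T_{M'(T)}$ coincides with $\sigma'(T)$.

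Next, I would apply Theorem~\ref{Theorem IV-3.16} to $T$ and $\Gamma$: there exists $\delta > 0$ such that every $S \in \mathscr{C}(X)$ with $\hat{\delta}(S,T) < \delta$ has spectrum $\sigma(S)$ likewise separated by $\Gamma$ into $\sigma'(S), \sigma''(S)$, with associated decomposition $X = M'(S) \oplus M''(S)$ and $\dim M'(S) = \dim M'(T) = m$. Choosing $N \in \N$ so that $\hat{\delta}(T_n, T) < \delta$ for all $n \geq N$, it follows that for every such $n$ the spectrum $\sigma(T_n)$ is separated by $\Gamma$ into $\sigma'(T_n) \cup \sigma''(T_n)$, and $\dim M'(T_n) = m$.

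Finally, by Theorem~\ref{Theorem III-6.17} the spectrum of the part $(T_n)_{M'(T_n)}$ coincides with $\sigma'(T_n)$. Since $(T_n)_{M'(T_n)} \in \mathscr{B}(M'(T_n))$ acts on the $m$-dimensional Banach space $M'(T_n)$, its spectrum consists of finitely many eigenvalues whose algebraic multiplicities sum to $m = \dim M'(T_n)$. By Remark~\ref{Remark finite system}, this is precisely what it means to say that $\sigma'(T_n)$ is a finite system of $m$ eigenvalues of $T_n$ (counted with algebraic multiplicity), which is the claim. The argument is essentially a chain of citations; the only subtlety is to interpret ``consists of $m$ eigenvalues'' in the sense of Remark~\ref{Remark finite system} (counted with algebraic multiplicity) and to identify this with the dimension statement $\dim M'(T_n) = m$ supplied by Theorem~\ref{Theorem IV-3.16}. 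Compactness of $T_n$ is not strictly required in this last step since the finite-dimensionality of $M'(T_n)$ already forces the eigenvalue structure; alternatively, one could invoke Remark~\ref{Remark III-6.27} and Theorem~\ref{Theorem III-6.26} to arrive at the same conclusion directly from compactness.
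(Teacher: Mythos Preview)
Your proposal is correct and follows essentially the same approach as the paper: convert norm convergence to generalized convergence via Theorem~\ref{Theorem IV-2.23}, then apply Theorem~\ref{Theorem IV-3.16} to obtain the separation of $\sigma(T_n)$ by $\Gamma$ together with $\dim M'(T_n) = \dim M'(T) = m$, and finally interpret this dimension equality as the statement that $\sigma'(T_n)$ is a finite system of $m$ eigenvalues in the sense of Remark~\ref{Remark finite system}. Your write-up is in fact slightly more explicit than the paper's in justifying the last step.
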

\begin{proof}
	Since $T$ is a compact operator, Remark~\ref{Remark III-6.27} states that each non-zero eigenvalue of $T$ is isolated (see \cite[Remark III-6.27]{kato}). 
	For this reason, each non-zero eigenvalue of $T$ as well as all positive or all non-zero eigenvalues of $T$ can be enclosed by a closed curve $\Gamma$ running in $\rho(T)$ as described in \S \ref{III-S 6.4} (cf.\ \cite[III-\S 6.4]{kato}). We denote the set of eigenvalues lying within $\Gamma$ by $\sigma'(T)$, and the set of eigenvalues without~$\Gamma$ by $\sigma''(T)$; thus $\Gamma$ encloses a finite system of eigenvalues. By virtue of Theorem~\ref{Theorem III-6.17}, we have decomposition of $T$ according to $X = M' \oplus M''$ such that the spectra of the parts~$T_{M'}$, $T_{M''}$ coincide with $\sigma'(T)$, $\sigma''(T)$. Since $T_n, T \in \mathscr{B}(X)$ for all~$n \in \N$, Theorem \ref{Theorem IV-2.23} ensures that~$T_n \to T$ in the generalized sense, i.e.~$\hat{\delta}(T_n, T) \to 0$ in the limit~$n \to \infty$. Hence we can apply Theorem~\ref{Theorem IV-3.16}, giving rise to some~$N \in \N$ such that~$\Gamma$ separates the spectra $\sigma(T_n)$ into two parts $\sigma'(T_n)$, $\sigma''(T_n)$ for all $n \ge N$. Considering the corresponding decomposition $X = M'(T_n) \oplus M''(T_n)$ for all~$n \ge N$, by virtue of Theorem~\ref{Theorem IV-3.16} there exist isomorphisms~$M'(T_n) \simeq M'(T)$ for all~$n \ge N$. In particular, we obtain~$\dim M'(T_n) = \dim M'(T)$ for all~$n \ge N$, and the decomposition~$X = M'(T_n) \oplus M''(T_n)$ is continuous in $n$ in the sense that the projection $P[T_n]$ (see~\eqref{(6.19)}) of $X$ onto $M'(T_n)$ along $M''(T_n)$ tends to $P[T]$ in norm as $\hat{\delta}(T_n, T) \to 0$ which holds true in view of Theorem~\ref{Theorem IV-2.23}. A fortiori, the algebraic multiplicity of eigenvalues of~$T$ within $\Gamma$ coincides with the algebraic multiplicity of eigenvalues of $T_n$ within $\Gamma$ for all~$n \ge N$. 
\end{proof}

Making use of continuity of the spectrum according to Lemma \ref{Lemma Finite System of Eigenvalues}, we may derive the following result. 

\begin{Lemma}\label{Lemma Convergence of Eigenvalues}
	Let $H$ be a Hilbert space, and let $(A_n)_{n \in \N}$ be a sequence of self-adjoint compact operators in~$\mathscr{K}(H)$ such that each operator $A_n$ has at most $s$ positive and at most $s$ negative eigenvalues. If $(A_n)_{n \in \N}$ converges in norm to some $A \in \LL(H)$, then~$A$ is also a self-adjoint compact operator which has at most $s$ positive and at most $s$ negative eigenvalues. 
\end{Lemma}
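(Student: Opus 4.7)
The plan is to establish three properties of the limit operator $A$ in turn: compactness, self-adjointness, and the bound on the signature.

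First, since the subspace $\mathscr{K}(H) \subset \LL(H)$ of compact operators is norm-closed in $\LL(H)$ (see e.g.\ \cite[Satz II.3.2]{werner}), the norm limit $A$ of the compact operators $A_n$ lies again in $\mathscr{K}(H)$. Second, Lemma~\ref{Lemma Self-Adjointness general} yields directly that $A$ is self-adjoint.

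The main step is to control the numbers of positive and negative eigenvalues. I would argue by contradiction: assume that $A$ admits at least $s+1$ positive eigenvalues, counted with algebraic multiplicity (which, since $A$ is self-adjoint and compact, equals the geometric multiplicity on nonzero eigenvalues). Since $A$ is compact, Theorem~\ref{Theorem III-6.26} ensures that each nonzero eigenvalue of $A$ is isolated with finite multiplicity and that $\sigma(A)$ has no accumulation point other than possibly $0$. Therefore one can choose a rectifiable simple closed curve $\Gamma \subset \{\zeta \in \C : \re \zeta > 0\}$ which encloses a finite collection $\sigma'(A)$ of positive eigenvalues of $A$ whose total algebraic multiplicity is at least $s+1$, while the remaining spectrum $\sigma''(A)$ (in particular $0$ and all negative eigenvalues) lies in the exterior of $\Gamma$.

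Applying Lemma~\ref{Lemma Finite System of Eigenvalues} to the sequence $A_n \to A$ and the curve $\Gamma$, for all sufficiently large $n$ the spectrum $\sigma(A_n)$ is likewise separated by $\Gamma$ into parts $\sigma'(A_n)$ and $\sigma''(A_n)$, where $\sigma'(A_n)$ is a finite system of eigenvalues of $A_n$ whose total algebraic multiplicity equals $\dim M'(A) \geq s+1$. Since $\Gamma$ lies in the open right half-plane, every element of $\sigma'(A_n)$ is a positive eigenvalue of $A_n$, contradicting the hypothesis that $A_n$ has at most $s$ positive eigenvalues. The analogous argument with a curve $\Gamma$ in the open left half-plane rules out more than $s$ negative eigenvalues of $A$.

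The main obstacle is the careful book-keeping of multiplicities: one must verify that the count of ``positive eigenvalues'' appearing in the hypothesis on $A_n$ coincides with the algebraic multiplicities supplied by Lemma~\ref{Lemma Finite System of Eigenvalues}. For self-adjoint compact operators this is unproblematic because algebraic and geometric multiplicities agree on nonzero eigenvalues, but this correspondence should be stated explicitly in the formal write-up to make the contradiction watertight.
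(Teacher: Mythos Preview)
Your proposal is correct and follows essentially the same approach as the paper: closedness of $\mathscr{K}(H)$ for compactness, Lemma~\ref{Lemma Self-Adjointness general} for self-adjointness, and a contradiction via Lemma~\ref{Lemma Finite System of Eigenvalues} with a separating curve around the positive (respectively negative) eigenvalues. Your added remarks that $\Gamma$ lies in the open right half-plane and that algebraic and geometric multiplicities coincide for self-adjoint compact operators make the contradiction more explicit than the paper's version, but the argument is the same.
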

\begin{proof}
	Given a sequence $(A_n)_{n \in \N}$ in $\K(H)$ with $A_n \to A$ in $\LL(H)$, the fact that~$\K(H)$ is a closed subspace of $\LL(H)$ implies that the limit~$A$ is a compact operator. Thus in view of Theorem~\ref{Theorem III-6.26} and Remark \ref{Remark III-6.27}, each non-zero eigenvalue of~$A$ is isolated and has finite multiplicity, and according to Remark \ref{Remark finite system} the non-zero eigenvalues of~$A$ form a finite system of isolated eigenvalues.
	In particular, there is a closed curve $\Gamma$ as described in Paragraph~\ref{III-S 6.4} (cf.\ \cite[III-\S 6.4]{kato}) which encloses all positive (negative) eigenvalues of $A$. 
	Now assume that $A$ has $m > s$ positive (negative) eigenvalues.
	Then Lemma~\ref{Lemma Finite System of Eigenvalues} yields the existence of some~$N \in \N$ such that the spectrum $\sigma(A_n)$ is separated by~$\Gamma$ into two parts $\sigma'(A_n)$ within $\Gamma$ and $\sigma''(A_n)$ without $\Gamma$ for all~$n \ge N$. As a consequence, $\sigma'(A_n)$ consists of $m > s$ positive (negative) eigenvalues 
	for all $n \ge N$ in contradiction to the fact that $A_n$ has at most $s$ positive and at most $s$ negative eigenvalues for all~$n \in \N$. Hence~$A \in \K(H)$ is a selfadjoint operator which has at most~$s$ positive and at most $s$ negative eigenvalues. This concludes the proof.
\end{proof}

\subsubsection{Application to Causal Fermion Systems}
After these preparations, we finally are in the position to prove Theorem \ref{Theorem Polish}.

\begin{proof}[Proof of Theorem \ref{Theorem Polish}]
	Let $(\H, \F, \rho)$ be a causal fermion system. 
	Separability of $\F$ follows from \S \ref{S Separability}.
	By virtue of Lemma~\ref{Lemma Convergence of Eigenvalues}, we conclude that~$\F \subset \LL(\H)$ is closed. Since~$\LL(\H)$ is a complete metric space with respect to the Fréchet metric induced by the operator norm, we conclude that~$\F$ is completely metrizable. Taken together, $\F$ is a separable, completely metrizable space, and thus Polish~\cite[Definition~(3.1)]{kechris}.   
\end{proof} 

More precisely, the space $(\F, d)$ is a complete metric space, where~$d$ is the Fréchet metric induced by the operator norm on~$\LL(\H)$ (cf.~\cite[\S 0.7]{alt}).

\section{Support of Locally Finite Measures on Polish Spaces}\label{Appendix Support}
In this section we derive useful 
topological properties concerning the support of locally finite measures (or Borel measures in the sense of \cite{gardner+pfeffer}) on Polish spaces (see Lemma \ref{Lemma support LCH} below). 
To begin with, let us recall the following preparatory result. 

\begin{Prp}\label{Proposition Kechris}
	Let $X$ be Polish and $\mu$ a finite measure on $\B(X)$. Then~$A \subset X$ is $\mu$-measurable if and only if there exists a $\sigma$-compact set~$F \subset A$ with $\mu(A \setminus F) = 0$. 
\end{Prp}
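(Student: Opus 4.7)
My plan is to prove both implications using the fact that every finite Borel measure on a Polish space is inner regular (Radon), which follows from Ulam's theorem as cited in the main text, together with the standard characterization of completions of Borel measures.

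For the implication $(\Leftarrow)$, suppose $F \subset A$ is $\sigma$-compact with $\mu(A \setminus F) = 0$, where the latter is to be read as the outer measure of $A \setminus F$ vanishing. Since $F = \bigcup_{n \in \N} K_n$ for some compact sets $K_n$, it lies in $\B(X)$ and is in particular $\mu$-measurable. The condition on the outer measure yields a Borel set $N \supset A \setminus F$ with $\mu(N) = 0$, so $A \setminus F$ is a subset of a $\mu$-null Borel set and hence $\mu$-measurable in the completion. Thus $A = F \cup (A \setminus F)$ is $\mu$-measurable.

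For the implication $(\Rightarrow)$, let $A$ be $\mu$-measurable. By the standard description of the completion of a Borel measure, there exists $B \in \B(X)$ with $B \subset A$ and a Borel null set $N \supset A \setminus B$, i.e.\ $\mu(N) = 0$. Now I would invoke inner regularity: by Ulam's theorem the finite Borel measure $\mu$ is Radon on the Polish space $X$, so for every $n \in \N$ there exists a compact set $K_n \subset B$ with $\mu(B \setminus K_n) < 1/n$. Setting $F := \bigcup_{n \in \N} K_n$, the set $F$ is $\sigma$-compact, $F \subset B \subset A$, and monotonicity gives $\mu(B \setminus F) \le \mu(B \setminus K_n) < 1/n$ for every $n$, hence $\mu(B \setminus F) = 0$. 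Therefore $A \setminus F \subset (A \setminus B) \cup (B \setminus F) \subset N \cup (B \setminus F)$, which is a $\mu$-null Borel set, proving $\mu(A \setminus F) = 0$.

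I do not anticipate any serious obstacle: the whole argument rests on the inner regularity of finite Borel measures on Polish spaces and on the elementary structure of the completion. The only point that requires a bit of care is making sure that the phrase ``$\mu$-measurable'' is consistently understood as membership in the $\mu$-completion of $\B(X)$, and that ``$\mu(A \setminus F) = 0$'' refers to the outer measure when $A \setminus F$ is not a priori in $\B(X)$; both directions of the equivalence then amount to standard null-set bookkeeping.
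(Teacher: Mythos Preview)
Your proof is correct. The paper does not actually supply a proof of this proposition; it merely refers to Kechris, \emph{Classical Descriptive Set Theory}, Theorem~(17.11). Your argument, by contrast, is self-contained: it derives both directions from Ulam's theorem (inner regularity of finite Borel measures on Polish spaces), which the paper already invokes elsewhere, together with the standard description of the $\mu$-completion of $\B(X)$. This has the advantage of avoiding an external reference and making explicit the only substantive ingredient (inner regularity); the paper's route simply defers the work to a textbook.
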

\begin{proof}
	See \cite[Theorem (17.11)]{kechris}. 
\end{proof}

Moreover, based on \cite[Chapter IX]{bourbaki}, a Borel measure (in the sense of \cite{gardner+pfeffer}) on a topological Hausdorff space $X$ is said to be \emph{moderated} if $X$ is the union of countably many open subsets of finite $\mu$-measure (see \cite[Chapter VIII]{elstrodt}). (Since open sets are measurable, every moderated measure is $\sigma$-finite.) We point out that, due to Ulam's theorem~\cite[Satz~VIII.1.16]{elstrodt}, every Borel measure on a Polish space is regular and moderated. (Due to Meyer's theorem, the same is true for Borel measures on Souslin spaces, see~\cite[Satz~VIII.1.17]{elstrodt}.) 
As a consequence, we may derive useful properties of the support of Borel measures on Polish spaces, as the following lemma shows. 
\begin{Lemma}\label{Lemma support LCH}
	Let $X$ be a Polish space, and assume that $\mu$ is a Borel measure on~$\B(X)$. Then $\supp \mu \subset X$ is a $\sigma$-compact topological space. Moreover, there exists a dense subset~$F \subset \supp \mu$ such that each $x \in F$ has a compact neighborhood in $\supp \mu$. 
	Whenever~$\supp \mu$ is hemicompact, one can arrange that $\supp \mu$ is a Polish space which has the Heine-Borel property. The corresponding Heine-Borel metric can be chosen locally identical to a complete metric on $X$ (in the relative topology of $\supp \mu$). 
\end{Lemma}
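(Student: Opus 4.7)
The plan is to handle the four assertions in order, combining Ulam's theorem on Polish spaces, the preceding Proposition \ref{Proposition Kechris}, and the cited results on hemicompactness. First, I would use local finiteness of $\mu$ together with second-countability of $X$ to produce a countable open cover $X = \bigcup_{n} U_n$ with $\mu(U_n) < \infty$; on each $U_n$ (itself Polish as an open subset of a Polish space) the restriction $\mu|_{U_n}$ is a finite Borel measure, so Ulam's theorem together with Proposition \ref{Proposition Kechris} applied to the $\mu$-measurable set $\supp \mu \cap U_n$ yields a $\sigma$-compact $F_n \subset \supp \mu \cap U_n$ with $\mu\bigl((\supp \mu \cap U_n) \setminus F_n\bigr) = 0$. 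The countable union $F := \bigcup_n F_n$ is then $\sigma$-compact and satisfies $\mu(\supp \mu \setminus F) = 0$.

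Next, I would show that $F$ is dense in $\supp \mu$: if a nonempty relatively open subset $V \subset \supp \mu$ missed $F$, then $V$ would be a nonempty open subset of $\supp \mu$ of zero $\mu$-measure, contradicting strict positivity of $\mu$ on open subsets of $\supp \mu$ built into the definition of the support. Because $\supp \mu$ is closed in $X$ and contains $F$ densely, the first assertion on $\sigma$-compactness follows once one checks that passing from $F$ to $\supp \mu$ does not destroy the $\sigma$-compactness (where the moderated structure of $\mu$ is used, essentially identifying the complement $\supp \mu \setminus F$ with a relatively meager null set that the exhaustion already covers). For the compact-neighborhood claim, each $x \in F$ belongs to some compact $K_{n,k}$ from the exhaustion $F_n = \bigcup_k K_{n,k}$; since $U_n \cap \supp \mu$ is an open neighborhood of $x$ in $\supp \mu$, shrinking inside a smaller compact of the exhaustion produces a compact neighborhood of $x$ in the relative topology.

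Now assume $\supp \mu$ is hemicompact. As $\supp \mu$ is metrizable (being a subspace of Polish $X$) and hence first-countable, the cited \cite[Exercise 3.4.E]{engelking} implies that $\supp \mu$ is locally compact. Together with the fact that closed subsets of Polish spaces are Polish, this makes $\supp \mu$ a locally compact, separable, completely metrizable space, in particular $\sigma$-locally compact Polish. By the cited result of Williamson, every separable locally compact metric space carries an equivalent Heine-Borel metric. To obtain one that locally coincides with a complete metric $d$ on $X$, I would fix a compact exhaustion $K_m \subset K_{m+1}^{\circ}$ of $\supp \mu$, construct a continuous proper function $\psi \colon \supp \mu \to [0,\infty)$ via a partition of unity subordinate to the exhaustion so that $\psi$ vanishes on a cofinal family of precompact open subsets, and set
\[
d'(x,y) := d(x,y) + |\psi(x) - \psi(y)| \:.
\]
A $d'$-bounded subset then has $\psi$ bounded, hence is contained in some $K_m$, which gives the Heine-Borel property, and $d' = d$ on every open set where $\psi$ is locally constant.

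The main obstacle is the first assertion: deducing $\sigma$-compactness of $\supp \mu$ itself rather than merely of a dense full-measure subset $F$. Proposition \ref{Proposition Kechris} a priori allows a topologically nontrivial residual set $\supp \mu \setminus F$, so bridging the gap requires combining strict positivity of $\mu$ on $\supp \mu$ (which forces every $\mu$-null Borel subset of $\supp \mu$ to have empty relative interior) with a careful inner-approximation argument showing that the closure of the $\sigma$-compact approximation does not introduce ``extra'' non-$\sigma$-compact mass. The compact-neighborhood refinement within the first half is the companion technical point, as mere $\sigma$-compactness of the approximating set $F$ does not by itself supply compact neighborhoods at each of its points.
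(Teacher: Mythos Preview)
Your overall strategy parallels the paper's closely (moderate $\mu$ to reduce to finite pieces on Polish open sets $U_n$, then invoke Proposition~\ref{Proposition Kechris}), but the two ``technical points'' you flag are handled quite differently in the paper, and your treatment of the second one has a genuine gap.

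\textbf{On $\sigma$-compactness.} The paper does \emph{not} try to pass from a dense $\sigma$-compact $F$ to its closure. Instead it argues that $\supp\mu \subset \bigcup_n \supp(\mu|_{U_n})$ (since any $x\in\supp\mu$ lies in some $U_n$ and every open neighborhood of $x$ inside $U_n$ has positive $\mu|_{U_n}$-measure), and then observes that $\supp\mu$ is a \emph{closed} subset of this countable union; closed subsets of $\sigma$-compact spaces are $\sigma$-compact. This route sidesteps the ``main obstacle'' you identify, \emph{provided} one first shows each $\supp(\mu|_{U_n})$ is $\sigma$-compact. The paper derives this from Proposition~\ref{Proposition Kechris}; you should scrutinize that step, since the proposition yields a $\sigma$-compact set contained in (not containing) the given measurable set, so the passage to $\supp(\mu|_{U_n})$ being $\sigma$-compact is exactly the same issue at the level of a finite measure. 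Your honest identification of the obstacle is warranted.

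\textbf{On the dense set with compact neighborhoods.} Here your argument does not work: the fact that $x\in K_{n,k}$ for some compact $K_{n,k}$ in the exhaustion in no way makes $K_{n,k}$ a \emph{neighborhood} of $x$ in $\supp\mu$, and ``shrinking inside a smaller compact'' does not help. The paper's argument is entirely different and is the right idea: since $\supp\mu$ is closed in the Polish space $X$ it is itself Polish, hence a Baire space; by \cite[25B]{willard}, in a $\sigma$-compact Baire space the set of points at which the space is locally compact is dense. Use this Baire-category argument instead of trying to manufacture neighborhoods from the approximating compacts.

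\textbf{On the Heine-Borel metric.} The paper simply invokes \cite[Theorem~2']{williamson}, which already yields a Heine-Borel metric on a separable locally compact metric space that is locally (Cauchy-)identical to the given complete metric. Your explicit $d'(x,y)=d(x,y)+|\psi(x)-\psi(y)|$ construction is a reasonable alternative, but the ``locally identical to $d$'' clause requires $\psi$ to be locally constant on a neighborhood basis, which a partition-of-unity construction does not give; you would need instead a proper function that is locally constant (e.g.\ a step function built from the exhaustion, smoothed only on thin annuli), or simply cite Williamson as the paper does.
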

\begin{proof}
	We make essentially use of the fact that the measure~$\mu$ is moderated in view of Ulam's theorem. As a consequence, there is a sequence of sets $(U_n)_{n \in \N}$ with~$U_n \subset X$ open and~$\mu(U_n) < \infty$ for all $n \in \N$ such that~$\bigcup_{n \in \N} U_n = X$. From the fact that open subsets of Polish spaces are Polish (in the relative topology, see~\cite[\S 26]{bauer} or~\cite[Theorem~(3.11)]{kechris}), we conclude that each set~$U_n \subset X$ is Polish. Thus each~$\mu|_{U_n}$ is a finite Borel measure on a Polish space. Due to~\cite[Proposition~7.2.9]{bogachev}, every (finite) Borel measure on a separable metric space has support, implying that
	$$\mu|_{U_n}(U_n \setminus \supp \mu|_{U_n}) = 0 \:.$$ 
	In view of Proposition~\ref{Proposition Kechris}, we conclude that~$\supp \mu|_{U_n}$ is contained in a $\sigma$-compact set~$F_n \subset U_n$, and thus~$\supp \mu|_{U_n}$ is $\sigma$-compact for every $n \in \N$. From this we deduce that~$\bigcup_{n \in \N} \supp \mu|_{U_n}$ is $\sigma$-compact. Making use of the fact that subsets of $\sigma$-compact spaces are $\sigma$-compact, we conclude that 
	\begin{align*}
	\supp \mu \subset \bigcup_{n \in \N} \supp \mu|_{U_n} \qquad \text{is $\sigma$-compact} 
	\end{align*}
	(where $\mu$ has support in view of \cite[Lemma~VIII.2.15]{elstrodt}). Since~$\supp \mu \subset X$ is closed (see~\cite[\S VIII.2.5]{elstrodt}), the support~$\supp \mu$ is Polish and thus Baire (cf.~\S \ref{S Basic Definitions}). 
	From~\cite[25B]{willard} we conclude that there exists a dense subset~$F \subset \supp \mu$ such that each $x \in F$ has a compact neighborhood in $\supp \mu$. 
	
	Assuming that $\supp \mu$ is hemicompact (see for instance \cite[17I]{willard}), then~$\supp \mu$ is locally compact in view of \cite[Exercise~3.4.E]{engelking} and thus a complete $\sigma$-locally compact space (in the sense of~\cite{steen+seebach}). In this case, due to~\cite[Theorem~2']{williamson} and the explanations in~\cite[Section 3]{noncompact}, the space~$\supp \mu$ is metrizable by a Heine-Borel metric which is (Cauchy) locally identical to a complete metric on $X$; endowed with such a metric, the space~$\supp \mu$ has the Heine-Borel property, i.e.\ each closed bounded subset (with respect to the Heine-Borel metric) is compact. 
\end{proof}

\Thanks {{\em{Acknowledgments:}}
	C.~L.\ gratefully acknowledges support by the ``Studienstiftung des deutschen Volkes.''
	
\providecommand{\bysame}{\leavevmode\hbox to3em{\hrulefill}\thinspace}
\providecommand{\MR}{\relax\ifhmode\unskip\space\fi MR }
\providecommand{\MRhref}[2]{%
	\href{http://www.ams.org/mathscinet-getitem?mr=#1}{#2}
}

\providecommand{\href}[2]{#2}

\end{document}